\documentclass[english,12pt]{article}
\usepackage[T1]{fontenc}
\usepackage{verbatim}
\usepackage{amsmath}
\usepackage{amsthm}
\usepackage{amssymb}
\usepackage{algorithm}
\usepackage{algpseudocode}
\usepackage{appendix}
\usepackage{natbib}

\usepackage{amsmath,amssymb,amsthm,bm}

\usepackage{graphicx,subfig,psfrag,epsf}
\usepackage{multirow}
\usepackage{enumerate,array,booktabs}
\usepackage[hidelinks]{hyperref}
\usepackage{enumitem}
\usepackage{caption}
\usepackage{rotating}
\usepackage {graphicx}
\usepackage{amsmath,color,xcolor}
\usepackage{amsfonts}
\usepackage{dsfont}
\usepackage{subfig}
\usepackage{multirow}
\usepackage{pdflscape,geometry}
\usepackage{alltt}
\usepackage[T1]{fontenc}
\usepackage{bm}
\usepackage{pifont}
\usepackage{booktabs}
\IfFileExists{upquote.sty}{\usepackage{upquote}}{}
\usepackage{tikz}
\usetikzlibrary{arrows.meta, positioning, shapes.geometric,decorations.markings}

\usepackage{setspace}
\makeatletter
\def\singlespace{\def\baselinestretch{1}\@normalsize}

 \makeatletter
 \g@addto@macro\normalsize{%
 	\setlength{\abovedisplayskip}{3.0pt plus 2.0pt minus 5.0pt}%
 	\setlength{\belowdisplayskip}{5.0pt plus 2.0pt minus 5.0pt}%
 }
 \makeatother
 
 \makeatletter
\theoremstyle{plain}
\newtheorem{thm}{\protect\theoremname}
\theoremstyle{plain}
\newtheorem{lem}{\protect\lemmaname}
\theoremstyle{definition}
\newtheorem{defn}{\protect\definitionname}
\newtheorem{remark}{Remark}

\DeclareMathOperator{\E}{\mathbb{E}}
\DeclareMathOperator{\I}{\mathbb{I}}

\makeatother

\usepackage{babel}
\providecommand{\definitionname}{Definition}
\providecommand{\lemmaname}{Lemma}
\providecommand{\theoremname}{Theorem}

\newcommand{\blind}{1} 
\begin{document}
\global\long\def\bU{\mathbf{U}}%
\global\long\def\bX{\bm{X}}%

\global\long\def\ntr{n_{\textrm{train}}}%

\global\long\def\nte{n_{\textrm{test}}}%

\if1\blind
{ 
	\title{\large
		\bf Stable Causal Discovery via Directed Acyclic Graph Aggregation}  
	\author{Yunan Wu, Yue Wang, Chunlin Li and Chenglong Ye}
\date{}

\maketitle

\begin{singlespace}
	\begin{footnotetext}[1]
		{
			Yunan Wu is an Assistant Professor at the Yau Mathematical Sciences Center, Tsinghua University. Email: wuyunan@mail.tsinghua.edu.cn. 
			Yue Wang is an Assistant Professor in the Department of Biostatistics \& Informatics and the Center for Innovative Design and Analysis, University of Colorado Anschutz Medical Campus. Email: yue.2.wang@cuanschutz.edu. 
			Chunlin Li is an Assistant Professor in the Department of Statistics, University of Virginia. Email: chunlin@virginia.edu. 
			Chenglong Ye is an Assistant Professor in the Dr. Bing Zhang Department of Statistics, University of Kentucky. Email: chenglong.ye@uky.edu. 
		}	
	\end{footnotetext}
\end{singlespace}

} \fi

\if0\blind
{
\bigskip
\bigskip
\bigskip
\begin{center}
	{\large\bf Stable Causal Discovery via Directed Acyclic Graph Aggregation}
\end{center}
\medskip
} \fi

 
\begin{abstract}
Directed Acyclic Graphs (DAGs) are central to uncovering causal structure in complex systems, yet learning a single DAG from data is often challenging: model uncertainty, finite samples, and a combinatorially large search space frequently yield unstable estimates. We propose DAGgr, a model averaging framework that aggregates multiple candidate DAGs into a single stable representation. Candidate graphs are weighted by their out-of-sample predictive likelihood across repeated data splits, and a thresholding rule on the resulting edge-importance scores guarantees that the aggregated graph is itself acyclic. We establish a finite-sample risk bound, prove that the procedure preserves acyclicity, and show that edge selection is consistent under mild conditions on the weights. Simulations across random, hub, and chain structures, together with an analysis of the \citet{sachs2005causal} protein-signaling network, show that DAGgr matches or exceeds the best individual candidate while consistently outperforming bootstrap-aggregation baselines across structural recovery metrics.
\end{abstract}
\noindent%
\textbf{\textit{Keywords}}:  causal discovery, directed acyclic graphs, model aggregation, edge importance.
\section{Introduction}
\label{sec:intro}
Directed Acyclic Graphs (DAGs) are a foundational tool for representing
dependency and causal structure across the sciences, with established
applications in genomics and gene-regulatory network inference
\citep{sachs2005causal, friedman2000expression}, neuroscience
\citep{ramsey2017fges}, epidemiology \citep{greenland1999causal},
the social sciences \citep{spirtes2000causation}, and machine learning
\citep{koller2009pgm, pearl2009causality}. By encoding variables as
nodes and direct influences as directed edges, DAGs support tasks
ranging from confounder identification to counterfactual reasoning and
policy evaluation \citep{pearl2009causality, hernan2020causal}.

Despite this broad utility, learning a DAG from observational data
remains notoriously difficult. The number of possible DAGs grows
super-exponentially in the number of variables $p$, rendering
exhaustive search infeasible beyond roughly 25 nodes
\citep{chickering2002optimal, koivisto2004}. The literature has
converged on three main families of methods. \emph{Score-based}
methods optimize a global criterion such as BIC or Bayesian marginal
likelihood under acyclicity constraints
\citep{chickering2002optimal, andrews2023boss, huang2018generalized}.
\emph{Constraint-based} methods, including PC and FCI, recover
structure through repeated conditional independence tests
\citep{spirtes2000causation, kalisch2007estimating, colombo2014pcstable}.
\emph{Hybrid} methods such as Max-Min Hill-Climbing
\citep{tsamardinos2006max} combine the two. More recently,
\emph{continuous-optimization} approaches have reformulated structure
learning as a smooth program over weighted adjacency matrices: NOTEARS
\citep{zheng2018dags} introduced a differentiable acyclicity
constraint via the trace of a matrix exponential, which has since been
extended to nonlinear and nonparametric settings
\citep{zheng2020learning, lachapelle2019gradient} and refined through
alternative acyclicity characterizations such as the log-determinant
formulation in DAGMA \citep{bello2022dagma} and identifiability-aware variants \citep{pmlr-v258-berrevoets25a}. Practical issues, such as finite samples, optimization
landscapes with poor local minima
\citep{wei2020nofears, ng2024closer}, and mild misspecification
routinely yield unstable structural estimates. Small
perturbations to the data or the estimation procedure can produce
markedly different graphs, and many DAGs may fit the data nearly
equally well \citep{wang2014learning, kitson2023comparison}. 

Recent benchmarking studies have further documented that the relative performance of individual algorithms varies widely and unpredictably
across datasets \citep{kitson2023comparison, theLandscape2024}, and
that simulated benchmarks can be deceptively easy to game
\citep{reisach2021beware, reisach2023scale}. Reporting a single ``best'' DAG therefore risks elevating sampling artifacts or algorithmic idiosyncrasies to the status of causal claims, undermining
the interpretability and credibility of downstream inference.

Model averaging offers a principled response to this instability.
Rather than committing to one structure, averaging integrates evidence
across multiple plausible candidates---obtained from different
algorithms, tuning parameters, or subsamples---to reduce variance and
produce more balanced estimates of the data-generating process. The idea has a long pedigree in statistics, with stability selection \citep{meinshausen2010stability} providing finite-sample
false-discovery control through subsampling, and Bayesian model
averaging offering a coherent probabilistic treatment for graphical
models \citep{friedman2003being, madigan1996bayesian}. The
frequentist counterpart for undirected Gaussian graphical models has also been developed recently \citep{10.1111/biom.13758}. Direct extensions of these ideas to DAG learning, however, are non-trivial:
naively averaging adjacency matrices can violate acyclicity and destroy the structural properties that make DAGs causally interpretable in the first place.

Existing aggregation strategies for DAGs fall into three broad
categories. The first relies on \textbf{bootstrap aggregation}: \citet{wang2014learning} introduced DAGBag, which
runs a score-based learner on bootstrap resamples and aggregates the
resulting graphs by minimizing structural-Hamming distance over the
ensemble. \citet{chowdhury2022dagbagm} extended this to mixed continuous and
binary nodes (DAGBagM), and \citet{debeire2024bagged} adapted bagging
to time-series causal discovery (Bagged-PCMCI+) using majority voting
on edge confidence. The second category uses \textbf{edge-frequency
thresholding}: edges are scored by the fraction of resamples in which they
appear, and high-confidence edges are retained
\citep{friedman1999bootstrap, imoto2002bayesian}. This idea predates
formal acyclicity-preserving aggregation and provides no guarantee
that the union of high-frequency edges forms a DAG. The
Model-Averaging Hill-Climbing (MAHC) algorithm of
\citet{constantinou2019model} addresses this by integrating averaging
into the search itself. Bayesian model averaging
\citep{madigan1996bayesian, friedman2003being} provides the most
principled probabilistic treatment, but its computational cost grows
sharply with $p$ and it does not naturally accommodate non-Bayesian
candidate learners such as NOTEARS or DAGMA. Stability selection
\citep{meinshausen2010stability} offers a related frequentist tool
with finite-sample false-discovery control via subsampling; our edge
importance score and consistency results are in this spirit, but
adapted to the DAG setting and combined with predictive-likelihood
weighting. \citet{10.1111/biom.13758} have recently developed
frequentist model averaging for undirected Gaussian graphical models;
to our knowledge an analogous treatment for DAGs with
acyclicity-preservation guarantees and edge-selection consistency
has not been given. The
third and most recent category formulates aggregation itself as an
\textbf{optimization or knowledge-elicitation problem}.
\citet{aslani2023ensemble} cast aggregation as a graph-distance
optimization that maximizes the marginal contribution of each edge,
while \citet{tench2026dynamic} use dynamically requested expert
knowledge---including LLMs as imperfect experts---to mediate
disagreements between candidate algorithms. \citet{caelen2025higher}
move beyond pairwise edge frequencies by aggregating over higher-order
edge structures. 

These approaches have improved stability, but share three limitations. First, most rely on edge-frequency or majority-voting heuristics that ignore differences in candidate quality, treating poorly-fitting and strongly-fitting candidates symmetrically. Second, the structural metrics they aggregate over---typically Hamming distance or raw edge frequencies---do not exploit out-of-sample predictive performance, which is a direct measure of which candidates the data favor. Third, the theoretical link between the aggregation rule and the consistency of edge recovery is typically weak or absent, leaving users without guidance on when averaging will help or how to choose threshold parameters. Approaches that do offer principled weighting (Bayesian model averaging) face computational and acyclicity-preservation challenges in continuous, high-dimensional settings.

In this paper, we propose a DAG model averaging framework that
addresses these gaps. Candidate DAGs are weighted by out-of-sample
predictive likelihood over repeated data splits. The framework is agnostic to the
underlying learner and integrates naturally with score-based,
constraint-based, hybrid, and continuous-optimization methods.

\paragraph{Contributions.} Our main contributions are as follows.
\begin{enumerate}
    \item \emph{Acyclicity-preserving aggregation.} Our
framework keeps the candidate set heterogeneous and weights each
complete candidate DAG by its held-out likelihood, which lets
poorly-fitting candidates be down-weighted automatically rather than
treated symmetrically with strong ones. The predictive-likelihood-based weighting combined with an importance-score threshold is sufficient to guarantee acyclicity (Lemma \ref{lem1}). 

    \item \emph{Theoretical guarantees.} We establish a risk bound (Theorem \ref{thmbound}) for
    the aggregated estimator. We also show  weak and strong consistency for edge selection (Theorems \ref{(a)-Assume-that} and \ref{thm2}), even when candidate
learners are heterogeneous and span multiple paradigms.

    \item \emph{Practical implementation.} The procedure is
    computationally tractable, agnostic to the underlying DAG
    learners, and integrates seamlessly with score-based,
    constraint-based, hybrid, and continuous-optimization methods,
    with explicit guidelines for choosing the number of candidates
    $K$, the number of splits $L$, and the threshold $c$.

    \item \emph{Empirical validation.} Across simulations with
    random, hub, and chain structures, and on the benchmark
    \citet{sachs2005causal} protein-signaling dataset, our method
    matches or exceeds the best individual estimators across nine to
    ten candidate algorithms, and outperforms standard ensemble
    baselines on FNR, MCC, and structural Hamming distance.
\end{enumerate}

The remainder of the paper is organized as follows.
Section~\ref{sec:setup} introduces the model setup and notation.
Section~\ref{sec:method} presents the proposed aggregation procedure
and practical guidelines for its tuning parameters.
Section~\ref{sec:theory} develops the theoretical properties of the
aggregated estimator, including a risk bound and acyclicity
preservation. Section~\ref{sec:edges} introduces the edge importance
score and establishes consistency results for edge selection.
Sections~\ref{sec:simulation} and~\ref{sec:realdata} report
simulation studies and an application to the
\citet{sachs2005causal} dataset. Section~\ref{sec:conclusion}
concludes with a discussion of extensions and open directions.

\section{Model Setup}\label{sec:setup}
Let $\bX=(X_{1},\ldots,X_{p})^{\top}$ be a $p$-dimensional random vector. We consider the structural equation model
\begin{equation}
    X_{j}=\sum_{k\neq j}U_{kj}X_{k}+\epsilon_{j},\quad j\in\{1,\ldots,p\},\label{eq:model}
\end{equation}
where the noise terms satisfy $\epsilon_{j}\overset{i.i.d.}{\sim}N(0,\sigma^{2})$ and $\mathbf{U}=\{U_{kj}\}_{p\times p}$ is the weighted adjacency matrix. As shown in \citet{yuan2019constrained}, model \eqref{eq:model} defines a DAG when $\mathbf{U}$ satisfies the acyclicity condition
\begin{equation}
\sum_{t=1}^{L}\I(U_{j_{t}j_{t+1}}\neq0)\le L-1 \label{eq:dagcondition}
\end{equation}
for any $j_{1},\ldots,j_{L}\in[p]$ and $L\in\{2,\ldots,p\}$, where $\mathbb{I}(\cdot)$ is the indicator function.

The DAG model in \eqref{eq:model} is identifiable on the parameter space
\begin{equation}
\Theta=\{(\mathbf{U},\sigma):\sigma>0,\ \mathbf{U}\text{ satisfies }\eqref{eq:dagcondition}\}.
\end{equation}
By the local Markov property \citep{edwards2012introduction}, the joint distribution of $(X_{1},\ldots,X_{p})$ factorizes as a product of conditional distributions of each $X_{j}$ given its parents. Given an i.i.d.\ sample $\{\mathbf{x}_{i}\in\mathbb{R}^{p}\}_{i=1}^{n}$, the log-likelihood is proportional to
\begin{equation}\label{eq:likelihood}
-\sum_{i=1}^{n}\sum_{j=1}^{p}\left[\frac{1}{2\sigma^{2}}\left(x_{ij}-\sum_{k\neq j}U_{jk}x_{ik}\right)^{2}+\frac{1}{2}\log\sigma^{2}\right],
\end{equation}
where $\mathbf{x}_{i}=(x_{i1},\ldots,x_{ip})^{\top}$. For convenience, we denote the per-observation log-likelihood by
\[
l_{\mathbf{U},\sigma^{2}}(\mathbf{x}_{i}):=-\sum_{j=1}^{p}\left[\frac{1}{2\sigma^{2}}\left(x_{ij}-\sum_{k\neq j}U_{jk}x_{ik}\right)^{2}+\frac{1}{2}\log\sigma^{2}\right].
\]
\paragraph{Notation.}
Throughout the paper, bold uppercase upright letters (e.g., $\mathbf{A}$, $\mathbf{X}$) denote matrices, bold lowercase upright letters (e.g., $\mathbf{x}$, $\mathbf{v}$) denote vectors, calligraphic letters (e.g., $\mathcal{V}$, $\mathcal{E}$) denote sets, and bold italic uppercase letters (e.g., $\boldsymbol{X}$, $\boldsymbol{Y}$) denote random vectors or random matrices.
We represent a DAG as a pair $(\mathcal{V},\mathcal{E})$ with vertex set $\mathcal{V}=\{1,\ldots,p\}$ and edge set $\mathcal{E}\subseteq\{(i,j):1\le i,j\le p\}$, where $(i,j)\in \mathcal{E}$ encodes the directed edge $i\to j$. For any integer $p\ge1$, write $[p]:=\{1,\ldots,p\}$. We use $\nabla$ for the symmetric difference of two sets and $|\cdot|$ for cardinality.

\section{Directed Acyclic Graph Aggregation (DAGgr)}\label{sec:method}

Given data $\mathbf{X}:=\{\mathbf{x}_{1},\ldots,\mathbf{x}_{n}\}\subset\mathbb{R}^{p}$ and a collection of candidate DAGs $\{(\mathcal{V}_{k},\mathcal{E}_{k})\}_{k=1}^{K}$ with common vertex set $\mathcal{V}_{k}\equiv[p]$, obtained either from different procedures or from a single procedure run with different tuning parameters, our goal is to construct a weighted aggregate that improves stability, interpretability, and predictive performance over any single candidate. The central technical difficulty is that naive aggregation can violate acyclicity. If a candidate method produces only an edge set $(\mathcal{V},\mathcal{E})$ without an estimated weighted adjacency matrix $\widehat{\mathbf{U}}$, we obtain one by maximizing \eqref{eq:likelihood} subject to the structural constraint $\I(\widehat{U}_{ij}\neq0)=\I((i,j)\in \mathcal{E})$. Given the resulting estimates $\{\widehat{\mathbf{U}}^{(k)}\}_{k=1}^{K}$, we form the aggregated adjacency matrix as the convex combination
\[
\widetilde{\mathbf{U}}=\sum_{k=1}^{K}w_{k}\widehat{\mathbf{U}}^{(k)},
\]
where the weights satisfy $w_{k}\ge0$ and $\sum_{k=1}^{K}w_{k}=1$. The aggregated DAG $(\,[p],\widetilde{\mathcal{E}}\,)$ is then read off from $\widetilde{\mathbf{U}}$ via $(i,j)\in\widetilde{\mathcal{E}}\iff\widetilde{U}_{ij}\neq0$.


The remaining challenge is to choose the weights so that acyclicity is preserved and the strengths of individual candidates are exploited. Uniform weights or weights based on a single training-set criterion typically fail to deliver these properties, especially when candidates differ in quality across regimes or when only certain edges are stably identified.

We propose a data-driven weighting scheme based on repeated data splitting, summarized in Algorithm~\ref{alg:ARM-weighting-method-1}. At each iteration we fit each candidate DAG on a training subsample and evaluate its predictive performance on the held-out subsample; the resulting validation likelihoods determine the weights, which are then averaged across splits.

\begin{algorithm}[!h]
{\bf Input:} Candidate DAGs $\{(\mathcal{V}_{k},\mathcal{E}_{k})\}_{k=1}^{K}$, with optional estimated adjacency matrices $\{\widehat{\mathbf{U}}^{(k)}\}_{k=1}^{K}$ and noise standard deviations $\{\widehat{\sigma}_{k}\}_{k=1}^{K}$.
\begin{itemize}
\item {\bf Step 1: Data splitting.} Randomly split the data into equal halves $\mathcal{S}_{1}$ and $\mathcal{S}_{2}$. After reindexing, write $\mathcal{S}_{1}=\{\mathbf{x}_{i}\}_{i=1}^{n/2}$ and $\mathcal{S}_{2}=\{\mathbf{x}_{i}\}_{i=n/2+1}^{n}$.
\item {\bf Step 2: Model fitting.} For each $k\in\{1,\ldots,K\}$, fit the $k$-th candidate DAG on $\mathcal{S}_{1}$ and obtain $\widehat{\mathbf{U}}^{(k)}$ and $\widehat{\sigma}^{(k)}$.
\item {\bf Step 3: Likelihood-based weights.} Compute
\[
w^{k}=\frac{\pi_k\exp(\lambda\sum_{i\in\mathcal{S}_{2}}l_{\widehat{\mathbf{U}}^{(k)},\widehat{\sigma}^{(k)}}(\mathbf{x}_{i}))}{\sum_{k'=1}^{K}\pi_k\exp(\lambda\sum_{i\in\mathcal{S}_{2}}l_{\widehat{\mathbf{U}}^{(k')},\widehat{\sigma}^{(k')}}(\mathbf{x}_{i}))}.
\]
\item {\bf Step 4: Averaging across splits.} Repeat Steps~1--3 a total of $L$ times and set the final weights to $w_{k}=\frac{1}{L}\sum_{l=1}^{L}w_{l}^{k}$.
\end{itemize}
\noindent\textbf{Output:} The weight vector $\mathbf{w}=(w_{1},\ldots,w_{K})$ and the aggregated adjacency matrix $\widetilde{\mathbf{U}}=\sum_{k=1}^{K}w_{k}\widehat{\mathbf{U}}^{(k)}$.
\caption{Directed Acyclic Graph Aggregation (DAGgr)\label{alg:ARM-weighting-method-1}}
\end{algorithm}

The parameter $\lambda>0$ is a learning rate that controls how sharply the weights concentrate on the best-fitting candidates: larger $\lambda$ produces a more decisive weight vector, while $\lambda\to0$ recovers the prior $\pi_{k}$. The prior $\{\pi_{k}\}_{k=1}^{K}$ encodes any domain knowledge about candidate quality and defaults to the uniform prior $\pi_{k}=1/K$. Theoretical guidance for choosing $\lambda$ is given in Theorem~\ref{thmbound} and its proof in the appendix; in our simulations and real-data analysis, we use the default $\lambda=1$ with uniform priors.
\subsection{Practical Guidelines for Choosing \texorpdfstring{$K$, $L$, $\lambda$, and $c$}{K, L, lambda, and c}}
The proposed framework involves four tuning parameters: the number of candidate DAGs $K$ and the number of data splits $L$, both introduced in Algorithm~\ref{alg:ARM-weighting-method-1}; the learning rate $\lambda$, which controls weight concentration in Step~3 of the same algorithm; and a threshold $c\in(0,1)$ applied to each edge's importance score---a weighted measure of how consistently the edge appears across candidates. Edges with importance score above $c$ are retained, so $c$ controls the sparsity of the final graph; we formally introduce the importance score and threshold in Section~\ref{sec:edges}. Together, $K$, $L$, $\lambda$, and $c$ control the trade-off between stability, sparsity, predictive performance, and computational cost.

\paragraph{Choosing $K$ (number of candidate DAGs).}
A practical range is $K\in[5,15]$, drawing from a mix of paradigms (score-based, constraint-based, hybrid, continuous-optimization) and tuning configurations. Excessive similarity among candidates dilutes the benefit of averaging, while including many weak candidates raises the computational burden and can destabilize the weight estimates.

\paragraph{Choosing $L$ (number of data splits).}
The split count $L$ controls the variance of the weight estimates. A default of $L=20$ is often sufficient; for small or noisy datasets, $L\in[30,50]$ is preferable. A simple stopping rule is to continue increasing $L$ until consecutive weight vectors differ negligibly.

\paragraph{Choosing $\lambda$ (learning rate).}
The learning rate $\lambda>0$ controls how strongly the weights favor high-likelihood candidates. The default $\lambda=1$ corresponds to standard Bayesian-style model averaging and works well in our experiments. Larger $\lambda$ produces sparser, more decisive weight vectors but can over-concentrate on a single candidate; smaller $\lambda$ smooths the weights toward the prior $\pi_{k}$. In settings where any individual candidate may be poorly calibrated, $\lambda<1$ (e.g., $\lambda\in[0.1,1]$) is a useful safeguard. Theoretical guidance is given by the conditions of Theorem~\ref{thmbound}, which require $\lambda$ small relative to $1/(p A^{2})$ for the oracle inequality to hold.

\paragraph{Choosing $c$ (edge-importance threshold).}
The threshold $c$ governs sparsity and the tolerance for false discoveries. A data-driven approach is to scan a grid (e.g., $c\in\{0.5,0.55,\ldots,0.95\}$) and select the value that optimizes validation performance subject to a target sparsity level. Larger $c$ yields sparser graphs with higher precision; smaller $c$ favors recall. We discuss the role of $c$ in greater depth in Section~\ref{sec:edges}.

\paragraph{Recommended defaults.}
As a starting point we recommend $K=10$, $L=30$, $\lambda=1$ with uniform prior $\pi_{k}=1/K$, and selecting $c$ from a validation grid, preferring the smallest threshold consistent with the desired false-discovery behavior. For transparency, reporting sensitivity across several values of $c$ (and, if relevant, $\lambda$) is also advisable.

\section{Theoretical Properties}\label{sec:theory}
This section establishes the theoretical foundations of DAGgr. We first prove a finite-sample risk bound for the aggregated estimator, then show that a simple thresholding rule guarantees that the aggregated graph is acyclic.
\begin{thm}\label{thmbound}
Let $\widetilde{\mathbf{U}}=\sum_{k=1}^{K}w_{k}\widehat{\mathbf{U}}^{(k)}$ denote the DAGgr estimator constructed via Algorithm~\ref{alg:ARM-weighting-method-1}, and let $\{\pi_{k}\}_{k=1}^{K}$ be any prior weights with $\pi_{k}>0$ and $\sum_{k}\pi_{k}=1$. Assume:
\begin{itemize}
    \item[(C1)] (\emph{Bounded coefficients.}) There exists a constant $A<\infty$ such that $\max_{i,j}|U_{ij}|\le A$ and $\max_{k,i,j}|\widehat{U}_{ij}^{(k)}|\le A$.
    \item[(C2)] (\emph{Tuning.}) The parameter $\lambda>0$ is chosen sufficiently small that the sub-exponential moment conditions in the appendix hold; namely,
\[
6\lambda p A^{2}\le d_{2}/d_{1}
\qquad\text{and}\qquad
36\lambda^{2}pA^{2}\!\left[\exp(72e^{2}d_{1}^{2}\lambda^{2}pA^{2})+\tfrac{d_{4}}{2}\exp(36 d_{3}\lambda^{2}pA^{2})\right]\le\lambda,
\]
where $d_{1}=\sup_{k\ge 1}k^{-1}( \E|X_{i}|^{2k})^{1/k}$ is the sub-exponential constant of $|X_{i}|^{2}$ for $X_i$'s in model \eqref{eq:model} and $d_{2}=1/(4e)$, $d_{3}=8e^{4}d_{1}^{2}$, $d_{4}=16\sqrt{2}\,d_{1}^{2}$ are the associated universal constants.
\end{itemize}
Then
\[
\E\|\bX-\widetilde{\mathbf{U}}\bX\|^{2}\le\inf_{k}\left\{\E\|\bX-\widehat{\mathbf{U}}^{(k)}\bX\|^{2}+\frac{2\log(1/\pi_{k})}{\lambda (n-n_{\textrm{train}}) }\right\}.
\]
\end{thm}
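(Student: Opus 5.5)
This follows the adaptive regression by mixing (ARM) / progressive mixture route of Yang and of Catoni, adapted to the multivariate Gaussian SEM. Conditional on the training half $\mathcal{S}_1$, treat the fitted candidates $\widehat{\mathbf{U}}^{(k)}$ as fixed. View the validation points $\mathbf{x}_{n/2+1},\ldots,\mathbf{x}_n$ as arriving sequentially. The weights of Step~3 are the exponentially weighted posterior after all $m:=n-\ntr$ validation points.

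Following the standard online-to-batch argument, I would first work with the progressive version. Let
\[
\mathbf{w}^{(i)}\propto \pi_k\exp\Bigl(\lambda\sum_{t<i}l_{\widehat{\mathbf{U}}^{(k)},\widehat\sigma^{(k)}}(\mathbf{x}_t)\Bigr),
\]
and form the Cesàro average of the mixtures $\widetilde{\mathbf{U}}_i=\sum_k w^{(i)}_k\widehat{\mathbf{U}}^{(k)}$. With $\sigma$ fixed, the log-likelihood reduces (up to constants and the factor $1/(2\sigma^2)$, absorbed into $\lambda$) to the squared loss $\ell_k(\mathbf{x})=\|\mathbf{x}-\widehat{\mathbf{U}}^{(k)}\mathbf{x}\|^2$. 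The weights are then exponential weights on $\ell_k$, and I will track the squared risk $R(\mathbf{U})=\E\|\bX-\mathbf{U}\bX\|^2$. I will also use the identity for the averaged Step~4 weights. Because $R$ is convex in $\mathbf{U}$, Jensen's inequality lets the averaging over the $L$ splits and over the progressive index be passed inside $R$. Hence it suffices to bound the average over $i$ of $\E R(\widetilde{\mathbf{U}}_i)$. Whether this is literally the weight of Step~3 or its progressive surrogate is a technical point I would handle as in Yang's ARM, by defining the estimator as the progressive average.

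The core is a telescoping log-partition bound. Define $Z_i=\sum_k\pi_k\exp(-\lambda\sum_{t\le i}\ell_k(\mathbf{x}_t))$. Then $-\log Z_m\le \lambda\sum_{t\le m}\ell_k(\mathbf{x}_t)+\log(1/\pi_k)$ for every $k$. Also
\[
-\log Z_m=\sum_{i}-\log \E_{k\sim \mathbf{w}^{(i)}}e^{-\lambda\ell_k(\mathbf{x}_i)}.
\]
Taking expectations, I need a one-step inequality of the form
\[
\E_{\mathbf{x}_i}\Bigl[-\tfrac1\lambda\log \E_{k\sim\mathbf{w}^{(i)}}e^{-\lambda\ell_k(\mathbf{x}_i)}\Bigr]\ \ge\ R\Bigl(\textstyle\sum_k w_k^{(i)}\widehat{\mathbf{U}}^{(k)}\Bigr)-\text{(slack)},
\]
with slack at most $R/2$-type terms, which produce the factor $2$ in the bound. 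Writing $\ell_k=\ell(\bar{\mathbf U})+2\langle \mathbf{x}-\bar{\mathbf U}\mathbf{x},(\bar{\mathbf U}-\widehat{\mathbf U}^{(k)})\mathbf{x}\rangle+\|(\bar{\mathbf U}-\widehat{\mathbf U}^{(k)})\mathbf{x}\|^2$, the cross term has mean zero under the mixture. The exponential moment of the remaining fluctuation must then be controlled by $\lambda$ times its variance.

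That control is the main obstacle. The losses are unbounded, being quadratic in Gaussian-type $\bX$, so I would expand $\E e^{\lambda Y}\le 1+\lambda\E Y+\sum_{r\ge2}\lambda^r\E|Y|^r/r!$. The moments are bounded using (C1): entries are at most $A$, so $\|(\mathbf{U}-\mathbf{U}')\mathbf{x}\|^2\le 4pA^2\|\mathbf{x}\|^2$ roughly. They are further bounded using the sub-exponential constant $d_1$ of $|X_i|^2$, with Bernstein-type constants $d_2,d_3,d_4$. Condition (C2) is exactly what makes this series converge ($6\lambda pA^2\le d_2/d_1$) and makes the second-order remainder at most $\lambda$ times the excess risk, i.e.\ a factor absorbable as $\tfrac12$. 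After summing over $i$, dividing by $\lambda m$, taking the infimum over $k$ and averaging over splits, this gives $\E\|\bX-\widetilde{\mathbf{U}}\bX\|^2\le \inf_k\{\E\|\bX-\widehat{\mathbf U}^{(k)}\bX\|^2+2\log(1/\pi_k)/(\lambda m)\}$. I expect checking the precise constants in (C2) to be tedious but routine once the moment expansion is set up.
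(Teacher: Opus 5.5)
Your proposal follows essentially the same route as the paper's proof. The paper likewise telescopes the normalizer $q^{n}_{n_{\textrm{train}}}$ into one-step mixture terms over progressive weights, and controls each log-Laplace term using (C1), the vanishing of the cross term under the mixture, and the sub-exponential moments of $|\mathbf{x}_i|^{2}$ under (C2). Where you would use a direct power-series expansion, it invokes Catoni's Lemma~3.6.1, with the third-moment-to-variance ratio bounded by $6pA^{2}|\mathbf{x}_i|^{2}$. It then concludes by convexity, making the same identification of $\widetilde{\mathbf{U}}$ with the Ces\`aro average of progressive mixtures that you flag explicitly.

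One correction to your bookkeeping on the factor $2$. The remainder is bounded by $\lambda$ times the mixture excess $\E_{\nu}\|(\widehat{\mathbf{U}}^{(J)}-\E_{\nu}\widehat{\mathbf{U}}^{(J)})\mathbf{x}_i\|^{2}$, which is exactly the gap between the mixed loss and the loss of the mixture, so it cancels completely. This gives the penalty $\log(1/\pi_k)/[\lambda(n-n_{\textrm{train}})]$ with constant one, and the $2$ in the statement is just slack. A slack of order $R/2$ as you describe would instead double the risk term too, and would not yield the stated inequality.
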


Theorem~\ref{thmbound} shows that the predictive risk of the aggregated DAG is no worse than that of the single best candidate, up to an additive term that vanishes as $n\to\infty$. This adaptivity property is what makes model averaging (in this case, DAGgr) a robust alternative to estimator selection when the practitioner does not know in advance which method is best suited to the data at hand.
\begin{remark}[Oracle interpretation]
The right-hand side of the bound is the so-called \emph{oracle risk}: the predictive risk one would attain if the identity of the best candidate were known a priori, plus a complexity-style penalty $2\log(1/\pi_k)/[\lambda(n-n_{\textrm{train}})]$. Because the bound holds for every $k$ simultaneously, taking the infimum shows that DAGgr automatically matches the best candidate up to a vanishing penalty---without ever needing to identify which candidate is best. With a uniform prior $\pi_k=1/K$, the penalty is $2\log K/[\lambda(n-n_{\textrm{train}})]$, so the cost of model averaging scales only logarithmically in the size of the candidate set.
\end{remark}

\begin{remark}[On the rate $\lambda$]
Condition (C3) requires $\lambda$ to be sufficiently small to ensure the sub-exponential moment generating functions remain finite. The constraints scale as $\lambda \lesssim 1/(pA^{2})$, so larger candidate matrices or larger graphs require more conservative weighting. The price of a smaller $\lambda$, however, is a slower vanishing rate for the additive penalty. The interplay $\lambda(n-n_{\textrm{train}})$ in the denominator clarifies the role of the data split: more data devoted to validation ($n-n_{\textrm{train}}$ large) sharpens the oracle bound but reduces the training sample $n_{\textrm{train}}$ used to estimate each $\widehat{\mathbf{U}}^{(k)}$; in our experiments the standard equal split $n_{\textrm{train}}=n/2$ proved a robust default.
\end{remark}

\subsection{Acyclicity-Preserving Aggregation}
While the procedure above is intuitive, a key concern is that averaging a collection of DAGs can in principle introduce cycles: each $\widehat{\mathbf{U}}^{(k)}$ is acyclic, but their weighted combination need not be. We now establish a thresholded variant of the aggregation rule that provably preserves acyclicity.

The intuition behind the construction is that any cycle in $\widetilde{\mathbf{U}}$ must be assembled from edges that fail to form cycles in any single candidate. By requiring each surviving edge to be supported by a sufficiently large weighted fraction of candidates, we ensure that no such assembled cycle can arise. Cycles tend to be unstable artifacts that do not appear consistently across diverse estimators or random splits, so this rule is also intuitively appealing.
\begin{lem}\label{lem1}
Let $\{\mathbf{U}^{(1)},\ldots,\mathbf{U}^{(K)}\}$ be a collection of DAGs with nonnegative weights $\{w_{k}\}_{k=1}^{K}$ satisfying $\sum_{k}w_{k}=1$. For a constant $c\in[1-1/p,1]$, define $\mathbf{U}$ entrywise by
\[
\mathbf{U}_{ij}=\begin{cases}
\sum_{k=1}^{K}w_{k}\mathbf{U}_{ij}^{(k)}, & \textrm{if }\sum_{k=1}^{K}w_{k}\I\!\left(\mathbf{U}_{ij}^{(k)}\neq0\right)>c,\\[4pt]
0, & \textrm{if }\sum_{k=1}^{K}w_{k}\I\!\left(\mathbf{U}_{ij}^{(k)}\neq0\right)\le c.
\end{cases}
\]
Then $\mathbf{U}$ is the adjacency matrix of a DAG.
\end{lem}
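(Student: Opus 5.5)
Suppose, for contradiction, that $\mathbf{U}$ contains a directed cycle $j_1\to j_2\to\cdots\to j_L\to j_{L+1}=j_1$ with distinct vertices $j_1,\ldots,j_L$, so $2\le L\le p$. Every edge $(j_t,j_{t+1})$ on this cycle survived the thresholding. Hence, for each $t$, the weighted support
\[
s_t:=\sum_{k=1}^{K}w_k\I\!\left(\mathbf{U}^{(k)}_{j_tj_{t+1}}\neq0\right)
\]
satisfies $s_t>c\ge 1-1/p$. Equivalently, the weight of candidates \emph{missing} that edge is $1-s_t<1/p$.

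The key step is a union-bound (pigeonhole) argument over the candidates. Sum the missing weights along the cycle and exchange the order of summation:
\[
\sum_{k=1}^{K}w_k\sum_{t=1}^{L}\I\!\left(\mathbf{U}^{(k)}_{j_tj_{t+1}}=0\right)=\sum_{t=1}^{L}(1-s_t)<\frac{L}{p}\le 1 .
\]
Each candidate $\mathbf{U}^{(k)}$ is a DAG, so by the acyclicity condition \eqref{eq:dagcondition} at least one of the $L$ cycle edges is absent from it. That is, for every $k$,
\[
\sum_{t=1}^{L}\I\!\left(\mathbf{U}^{(k)}_{j_tj_{t+1}}=0\right)\ge 1 .
\]
Since $\sum_k w_k=1$, the left-hand side above is therefore at least $\sum_k w_k=1$. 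This contradicts the strict bound $<1$, so $\mathbf{U}$ has no directed cycle.

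Two details need checking. First, the strict inequality must survive. It does: each $s_t>c$ strictly, so each term $1-s_t$ is strictly less than $1-c\le 1/p$, and the sum over $t$ is strictly less than $L/p\le 1$. The boundary case $c=1$ is consistent, because then no edge can pass the threshold and $\mathbf{U}=\mathbf{0}$. Second, a surviving entry $\sum_k w_k\mathbf{U}^{(k)}_{ij}$ could cancel to zero. This only removes edges and cannot create cycles, so it is harmless: the edge set of $\mathbf{U}$ is contained in the set of thresholded edges, and the argument above applies to that larger set.

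The main obstacle is the exchange-of-summation step, which is where the threshold $1-1/p$ comes from; the rest is bookkeeping. It also requires $L\le p$, and it suffices to consider simple cycles, since any directed cycle contains a simple one, whose length is at most $p$. Self-loops ($L=1$) are excluded because every candidate has a zero diagonal.
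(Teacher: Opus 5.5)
Your proof is correct and is essentially the paper's argument: the paper sums the supports $s_t>c$ along the putative cycle, exchanges the order of summation, and uses each candidate's acyclicity to get $L_0c<L_0-1$. You run the identical exchange on the complementary missing weights $1-s_t$. Your extra remarks on simple cycles, cancelled surviving entries, self-loops and the case $c=1$ are valid bookkeeping that the paper leaves implicit.
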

\begin{proof}
We show that $\sum_{t=1}^{L}\I(\mathbf{U}_{j_{t}j_{t+1}}\neq0)\le L-1$ for every cycle $(j_{1},\ldots,j_{L},j_{L+1}=j_{1})$ with $L\in\{2,\ldots,p\}$. Suppose, for contradiction, that there exist $L_{0}$ and indices $(j_{1},\ldots,j_{L_{0}},j_{L_{0}+1}=j_{1})$ such that $\sum_{t=1}^{L_{0}}\I(\mathbf{U}_{j_{t}j_{t+1}}\neq0)=L_{0}$, i.e., every edge in the cycle is retained. Then for each $t=1,\ldots,L_{0}$,
\[
\sum_{k=1}^{K}w_{k}\I\!\left(\mathbf{U}_{j_{t}j_{t+1}}^{(k)}\neq0\right)>c.
\]
Summing over $t$ and exchanging the order of summation,
\[
L_{0}c<\sum_{t=1}^{L_{0}}\sum_{k=1}^{K}w_{k}\I\!\left(\mathbf{U}_{j_{t}j_{t+1}}^{(k)}\neq0\right)=\sum_{k=1}^{K}w_{k}\sum_{t=1}^{L_{0}}\I\!\left(\mathbf{U}_{j_{t}j_{t+1}}^{(k)}\neq0\right)\le\sum_{k=1}^{K}w_{k}(L_{0}-1)=L_{0}-1,
\]
where the second inequality uses the acyclicity of each $\mathbf{U}^{(k)}$. Rearranging yields $c<1-1/L_{0}$, which contradicts $c\ge1-1/p\ge 1-1/L_{0}$. Hence no such cycle exists, and $\mathbf{U}$ is acyclic.
\end{proof}
\begin{remark}
The proof formalizes the intuition stated above: any putative cycle in $\widetilde{\mathbf{U}}$ would require every edge along it to be supported, in weighted fraction exceeding $c$, by candidate DAGs. The constraint $c\ge 1-1/p$ ensures that this requirement is incompatible with the acyclicity of each individual candidate.
\end{remark}
\begin{remark}[Geometric interpretation and tightness]
The condition $c\ge 1-1/p$ has a natural interpretation: in any cycle of length $L\le p$, the average ``vote'' across the $L$ edges cannot exceed $(L-1)/L=1-1/L$ since no individual candidate contains a cycle. The threshold $1-1/p$ is the worst-case bound, valid for cycles of maximum possible length $p$; for graphs known to have no long cycles, a looser threshold may be used. In our edge-importance framework (Section~\ref{sec:edges}), the threshold $c$ plays an analogous role and the same argument shows $c\ge 1-1/p$ suffices for acyclicity. Crucially, the result is \emph{purely combinatorial}: it requires no distributional assumptions on $\mathbf{x}_i$ and no statistical properties of the weights $w_k$, only that they form a valid convex combination.
\end{remark}

\section{Edge Importance and Selection}\label{sec:edges}
The aggregated matrix $\widetilde{\mathbf{U}}$ blends edges from multiple candidate DAGs, but these edges differ in how strongly they are supported by the ensemble. Some are consistently identified by many high-performing candidates; others appear only in a few, possibly weaker, estimates. To distinguish reliable edges from noisy ones, we introduce a stability-based importance measure that drives a principled edge-selection rule.
\begin{defn}[Edge importance]
The importance score of edge $(i,j)$ is
\[
s_{ij}:=\sum_{k=1}^{K}w_{k}\,\mathbb{I}\!\left(\mathbf{U}_{ij}^{(k)}\neq0\right),\quad i,j\in[p].
\]
\end{defn}

The score $s_{ij}\in[0,1]$ measures the weighted frequency with which edge $(i,j)$ appears in the candidate set: it equals one if and only if every candidate (weighted by $w_{k}$) contains the edge, and zero if and only if no candidate does. Edges supported by high-performing candidates carry large weight in this sum, so consistently chosen edges receive scores close to one, while spurious ones receive scores close to zero.

This naturally yields a thresholding rule: fix $c\in(0,1)$ and retain all edges $(i,j)$ with $s_{ij}>c$. We show below that, under mild consistency conditions on the weights, this rule recovers the true edge set with high probability as $n\to\infty$.

Let $\mathcal{E}^{*}:=\{(i,j):\mathbf{U}_{ij}\neq0\}$ denote the edge set of the true underlying DAG, and let $\mathcal{E}^{(k)}:=\{(i,j):\mathbf{U}_{ij}^{(k)}\neq0\}$ denote the edges selected by the $k$-th candidate.

\begin{defn}
The weight vector $\mathbf{w}$ is \emph{weakly consistent} if
\[
\frac{\sum_{k=1}^{K}w_{k}\,|\mathcal{E}^{(k)}\nabla\mathcal{E}^{*}|}{|\mathcal{E}^{*}|}\overset{p}{\longrightarrow}0\quad\text{as }n\to\infty,
\]
and \emph{strongly consistent} (or simply \emph{consistent}) if
\[
\sum_{k=1}^{K}w_{k}\,|\mathcal{E}^{(k)}\nabla\mathcal{E}^{*}|\overset{p}{\longrightarrow}0\quad\text{as }n\to\infty.
\]
\end{defn}

Weak consistency requires that the weighted edge-discrepancy, normalized by the size of the true edge set, vanishes in probability; strong consistency requires the unnormalized discrepancy to vanish. Under these conditions, true edges concentrate at importance scores near one and spurious edges at scores near zero, so the thresholding rule $s_{ij}>c$ recovers $\mathcal{E}^{*}$ asymptotically.
\begin{thm}\label{(a)-Assume-that}
(a) If $\mathbf{w}$ is weakly consistent, then
\[
\frac{\sum_{(i,j)\in\mathcal{E}^{*}}s_{ij}}{|\mathcal{E}^{*}|}\overset{p}{\longrightarrow}1\quad\text{and}\quad\frac{\sum_{(i,j)\notin\mathcal{E}^{*}}s_{ij}}{|\mathcal{E}^{*}|}\overset{p}{\longrightarrow}0\quad\text{as }n\to\infty.
\]

(b) If $\mathbf{w}$ is consistent, then
\[
\min_{(i,j)\in\mathcal{E}^{*}}s_{ij}\overset{p}{\longrightarrow}1\quad\text{and}\quad\max_{(i,j)\notin\mathcal{E}^{*}}s_{ij}\overset{p}{\longrightarrow}0\quad\text{as }n\to\infty.
\]
\end{thm}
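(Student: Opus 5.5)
The whole argument rests on one deterministic identity that rewrites the weighted symmetric-difference count in terms of the importance scores. For each candidate $k$,
\[
|\mathcal{E}^{(k)}\nabla\mathcal{E}^{*}|=\sum_{(i,j)\in\mathcal{E}^{*}}\I\bigl((i,j)\notin\mathcal{E}^{(k)}\bigr)+\sum_{(i,j)\notin\mathcal{E}^{*}}\I\bigl((i,j)\in\mathcal{E}^{(k)}\bigr).
\]
Multiply by $w_k$, sum over $k$, and exchange the order of summation. Since $\sum_k w_k=1$, we have $\sum_k w_k\I((i,j)\notin\mathcal{E}^{(k)})=1-s_{ij}$. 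This gives
\[
D:=\sum_{k=1}^{K}w_{k}|\mathcal{E}^{(k)}\nabla\mathcal{E}^{*}|=\sum_{(i,j)\in\mathcal{E}^{*}}(1-s_{ij})+\sum_{(i,j)\notin\mathcal{E}^{*}}s_{ij}.
\]
Both sums on the right have nonnegative terms, because $s_{ij}\in[0,1]$. The identity holds for every realization, so the statistical content reduces entirely to the assumed convergence of $D$.

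For part (a), divide the identity by $|\mathcal{E}^{*}|$. This gives
\[
0\le 1-\frac{\sum_{(i,j)\in\mathcal{E}^{*}}s_{ij}}{|\mathcal{E}^{*}|}\le \frac{D}{|\mathcal{E}^{*}|}
\qquad\text{and}\qquad
0\le\frac{\sum_{(i,j)\notin\mathcal{E}^{*}}s_{ij}}{|\mathcal{E}^{*}|}\le\frac{D}{|\mathcal{E}^{*}|}.
\]
Weak consistency says $D/|\mathcal{E}^{*}|\to0$ in probability. The two sandwich inequalities then give both claims directly. We need $|\mathcal{E}^{*}|\ge1$ for the normalization to make sense, and this is implicit in the definition of weak consistency.

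For part (b), bound each single term by the whole nonnegative sum. For any $(i,j)\in\mathcal{E}^{*}$, $0\le 1-s_{ij}\le D$, so $0\le 1-\min_{(i,j)\in\mathcal{E}^{*}}s_{ij}\le D$. Likewise, $0\le\max_{(i,j)\notin\mathcal{E}^{*}}s_{ij}\le D$. Strong consistency gives $D\to0$ in probability, which yields both conclusions. No union bound over the $p^2$ edges is needed, because a single random variable $D$ dominates every term simultaneously. This is why the argument also tolerates $p$ growing with $n$.

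The only step needing care is the interchange of sums. It must use $\sum_k w_k=1$, so that the missed-edge contribution equals $1-s_{ij}$ rather than just some weighted count. Once that identity is written down, everything else is an elementary sandwich argument.
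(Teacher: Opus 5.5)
Your proof is correct and follows essentially the same route as the paper. The paper likewise exchanges sums to rewrite $\sum_k w_k|\mathcal{E}^{*}\backslash\mathcal{E}^{(k)}|$ as $\sum_{(i,j)\in\mathcal{E}^{*}}(1-s_{ij})$ and $\sum_k w_k|\mathcal{E}^{(k)}\backslash\mathcal{E}^{*}|$ as $\sum_{(i,j)\notin\mathcal{E}^{*}}s_{ij}$, and bounds each by the weighted symmetric difference; it writes out only part (a), so your bound of each term $1-s_{ij}$ or $s_{ij}$ by the unnormalized $D$ is the natural completion for part (b).
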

\begin{remark}[Interpretation]
Theorem~\ref{(a)-Assume-that} bridges the abstract notion of weight consistency with the operational quantity used for edge selection: importance scores. Under weak consistency---the weaker of the two assumptions---true edges receive total importance close to their cardinality and false edges receive total importance vanishing in relative terms. Strong consistency strengthens this to a uniform separation: every true edge has score close to one and every false edge has score close to zero, ensuring that \emph{any} threshold $c\in(0,1)$ recovers the true graph with high probability. The dichotomy mirrors the classical distinction between averaged-risk consistency and uniform consistency in estimation theory.
\end{remark}

\begin{remark}[When does weight consistency hold?]
Strong consistency of $\mathbf{w}$ holds whenever at least one candidate is structurally consistent---i.e., $|\mathcal{E}^{(k^{*})}\nabla\mathcal{E}^{*}|\overset{p}{\to}0$ for some $k^{*}$---and the validation-likelihood weights asymptotically concentrate on such a candidate. Many established DAG learners are structurally consistent under appropriate conditions \citep{chickering2002optimal,yuan2019constrained,zheng2018dags}, so the assumption is mild whenever the candidate set is sufficiently diverse. Notably, DAGgr does \emph{not} require the practitioner to know which candidate is consistent---only that one exists.
\end{remark}
Define $\mathcal{E}_{c}:=\{(i,j):s_{ij}>c\}$ as the set of edges retained at threshold $c\in(0,1)$.
\begin{thm}\label{thm2}
For any $c\in(0,1)$:
\begin{itemize}
    \item[(a)] If $\mathbf{w}$ is weakly consistent, then
\[
\frac{|\{(i,j)\in\mathcal{E}^{*}:s_{ij}\le c\}|}{|\mathcal{E}^{*}|}\overset{p}{\longrightarrow}0\quad\text{and}\quad\frac{|\{(i,j)\notin\mathcal{E}^{*}:s_{ij}>c\}|}{|\mathcal{E}^{*}|}\overset{p}{\longrightarrow}0
\]
as $n\to\infty$.
    \item[(b)] If $\mathbf{w}$ is consistent, then
\[
|\{(i,j)\in\mathcal{E}^{*}:s_{ij}\le c\}|\overset{p}{\longrightarrow}0\quad\text{and}\quad|\{(i,j)\notin\mathcal{E}^{*}:s_{ij}>c\}|\overset{p}{\longrightarrow}0
\]
as $n\to\infty$.
\end{itemize}
\end{thm}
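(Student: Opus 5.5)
The plan is to reduce both parts to the identity linking importance scores to the weighted symmetric difference, and then apply a Markov-type counting argument. For each edge $(i,j)$, write $\I((i,j)\in\mathcal{E}^{(k)}\nabla\mathcal{E}^{*})$ as $1-\I(\mathbf{U}^{(k)}_{ij}\neq0)$ when $(i,j)\in\mathcal{E}^{*}$, and as $\I(\mathbf{U}^{(k)}_{ij}\neq0)$ when $(i,j)\notin\mathcal{E}^{*}$. Multiplying by $w_k$, summing over $k$, and using $\sum_k w_k=1$ gives
\[
\sum_{k=1}^{K}w_{k}|\mathcal{E}^{(k)}\nabla\mathcal{E}^{*}|=\sum_{(i,j)\in\mathcal{E}^{*}}(1-s_{ij})+\sum_{(i,j)\notin\mathcal{E}^{*}}s_{ij}.
\]
Call the left-hand side $D_n$. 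Both sums on the right are nonnegative because $s_{ij}\in[0,1]$, so each is bounded by $D_n$. This is the same exchange-of-summation device used in the proof of Lemma~\ref{lem1}. It is essentially the content of Theorem~\ref{(a)-Assume-that}, which I take as given: part (a) there gives $D_n/|\mathcal{E}^{*}|\to0$ in probability, and part (b) bounds $\max_{(i,j)\in\mathcal{E}^{*}}(1-s_{ij})$ and $\max_{(i,j)\notin\mathcal{E}^{*}}s_{ij}$ by $D_n$.

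Next, convert these sums into counts, with $c\in(0,1)$ fixed. If $(i,j)\in\mathcal{E}^{*}$ has $s_{ij}\le c$, then $1-s_{ij}\ge 1-c>0$. Likewise, if $(i,j)\notin\mathcal{E}^{*}$ has $s_{ij}>c$, then $s_{ij}>c>0$. These give the two bounds
\[
|\{(i,j)\in\mathcal{E}^{*}:s_{ij}\le c\}|\le \frac{D_n}{1-c},\qquad |\{(i,j)\notin\mathcal{E}^{*}:s_{ij}>c\}|\le \frac{D_n}{c}.
\]
For part (a), divide both bounds by $|\mathcal{E}^{*}|$ and invoke weak consistency. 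The right-hand sides are fixed constant multiples of a quantity that tends to zero in probability, so both ratios do too. For part (b), use the unnormalized bounds together with strong consistency, $D_n\overset{p}{\to}0$.

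For part (b) it is worth noting the sharper integer-valued form. Both counts are nonnegative integers, so $D_n<\min(c,1-c)$ forces both to be exactly zero. Hence
\[
P\bigl(\mathcal{E}_c\neq\mathcal{E}^{*}\bigr)\le P\bigl(D_n\ge\min(c,1-c)\bigr)\to0,
\]
which says that $\mathcal{E}_c$ recovers $\mathcal{E}^{*}$ exactly with probability tending to one.

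There is no real obstacle here; the main point requiring care is bookkeeping. First, the weights $w_k$ and the candidates $\mathcal{E}^{(k)}$ are random, so all inequalities should be stated pointwise on the sample space before passing to probabilities. Second, in part (a), $|\mathcal{E}^{*}|$ (and possibly $p$) may grow with $n$. The argument handles this as long as $|\mathcal{E}^{*}|\ge1$, because the constants $1/c$ and $1/(1-c)$ do not depend on $n$.
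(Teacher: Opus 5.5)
Your proof is correct and rests on the same key step as the paper. On the missed true edges $1-s_{ij}\ge 1-c$, and on the spurious retained edges $s_{ij}>c$, so each count is bounded by a constant multiple of $\sum_{(i,j)\in\mathcal{E}^{*}}(1-s_{ij})$ or $\sum_{(i,j)\notin\mathcal{E}^{*}}s_{ij}$; these vanish (after normalization in (a)) by Theorem~\ref{(a)-Assume-that}, or directly by your identity for $D_n$.

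The paper phrases part (a) as a proof by contradiction and covers the remaining cases with ``similarly.'' Your direct version is cleaner, treats part (b) explicitly, and uses the integrality of the counts to add $P(\mathcal{E}_c\neq\mathcal{E}^{*})\to0$.
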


\begin{remark}[False discoveries and missed edges]
The two conclusions correspond, respectively, to controlling false negatives ($\{(i,j)\in\mathcal{E}^{*}:s_{ij}\le c\}$, true edges erroneously discarded) and false positives ($\{(i,j)\notin\mathcal{E}^{*}:s_{ij}>c\}$, spurious edges erroneously retained). Under weak consistency, both error sets shrink at rate $|\mathcal{E}^{*}|^{-1}$; under strong consistency they shrink in absolute terms. The conclusion holds for every $c\in(0,1)$, so the choice of threshold does not affect asymptotic correctness (only the finite-sample trade-off between precision and recall), as discussed below.
\end{remark}

The symmetric difference decomposes as $\mathcal{E}_{c}\nabla\mathcal{E}^{*}=\{(i,j)\in\mathcal{E}^{*}:s_{ij}\le c\}\cup\{(i,j)\notin\mathcal{E}^{*}:s_{ij}>c\}$, so the threshold $c$ directly trades off false negatives against false positives. A larger $c$ (e.g., $c=0.9$) yields a sparser graph with high precision; a smaller $c$ (e.g., $c=0.5$) prioritizes recall at the cost of admitting more false positives. In practice, $c$ can be tuned by domain knowledge, model-complexity considerations, or validation-based selection.

To translate the importance scores into an acyclic graph, we combine thresholding with a deterministic pruning step. Starting from the averaged matrix $\widetilde{\mathbf{U}}$, we first zero out all entries with $s_{ij}\le c$. The remaining graph may still contain cycles, so we iteratively remove edges in increasing order of importance: at each iteration, we identify the surviving edges with the smallest importance score and delete them in increasing order of $|\widetilde{U}_{ij}|$ until either acyclicity is restored or no such edges remain. The full procedure is given in Algorithm~\ref{algo:prune}.
\begin{algorithm}[H]
\caption{DAG pruning procedure.}
\label{algo:prune}
\begin{algorithmic}[1]
\Require Averaged adjacency matrix $\widetilde{\mathbf{U}}$, importance scores $\mathcal{S}=(s_{ij})$, threshold $c$.
\State Initialize $\widehat{\mathbf{U}}\gets\widetilde{\mathbf{U}}$.
\State Set $\widehat{U}_{ij}\gets0$ for all $(i,j)$ with $s_{ij}\le c$.
\While{$\widehat{\mathbf{U}}$ is not a DAG}
    \State $\mathcal{E}\gets\{(i,j):\widehat{U}_{ij}\neq0\}$.
    \State $\mathcal{E}_{\min}\gets\{(i,j)\in\mathcal{E}:s_{ij}=\min_{(k,\ell)\in\mathcal{E}}s_{k\ell}\}$.
    \State Sort $\mathcal{E}_{\min}$ in increasing order of $|\widetilde{U}_{ij}|$.
    \For{each $(i,j)\in\mathcal{E}_{\min}$ in this order}
        \State $\widehat{U}_{ij}\gets0$.
        \If{$\widehat{\mathbf{U}}$ is a DAG}
            \State \textbf{break}.
        \EndIf
    \EndFor
\EndWhile
\State \Return the pruned adjacency matrix $\widehat{\mathbf{U}}$.
\end{algorithmic}
\end{algorithm}

The special case $c=0$ corresponds to applying Algorithm~\ref{algo:prune} without any prior thresholding; we refer to this as the \emph{step-size pruned} procedure.



\subsection{Directed and Undirected Edges}\label{sec: undirect}
Candidate DAGs frequently disagree on edge orientations. Inspired by the use of partially directed graphs (PDAGs) in bootstrap-based Bayesian network induction \citep{friedman1999bootstrap}---where edges within an equivalence class are classified as directed when all members agree on orientation and as undirected otherwise---we extend this idea to the weighted-aggregation setting.
\begin{defn}
Fix a confidence threshold $\tau\in(0,1]$ and a symmetry tolerance $\delta\in[0,1)$. We call $(i,j)$ a \emph{directed edge} if $s_{ij}\geq\tau$ and $s_{ij}-s_{ji}>\delta$.
\end{defn}
\begin{defn}
Fix a confidence threshold $\tau\in(0,1]$ and a symmetry tolerance $\delta\in[0,1)$. We call $(i,j)$ an \emph{undirected edge} if $s_{ij}+s_{ji}\geq\tau$ and $|s_{ij}-s_{ji}|\leq\delta$.
\end{defn}
\begin{remark}
An undirected edge indicates that the connection between $i$ and $j$ is well supported overall but that its orientation is underdetermined by the candidate set.
\end{remark}
Whereas \citet{friedman1999bootstrap} treat each bootstrap structure uniformly, our weights $w_{k}$ allow candidate graphs to contribute unequally, reflecting posterior probability, goodness-of-fit, or any other measure of structural plausibility. The resulting graph is therefore a weighted analogue of a PDAG: directed edges encode orientations on which the weighted ensemble strongly agrees, while undirected edges flag connections whose direction the data cannot resolve.

\section{Simulations}\label{sec:simulation}
We conduct extensive simulation studies to evaluate DAGgr under a range of adjacency-matrix structures. Data are generated from model~\eqref{eq:model} with $p=100$ nodes and $n=250$ observations. Let $\bm{E}=(\bm{\epsilon}_{1},\ldots,\bm{\epsilon}_{n})^{\top}$, where each $\bm{\epsilon}_{i}=(\epsilon_{i1},\ldots,\epsilon_{ip})^{\top}$ is drawn independently from a $p$-dimensional Gaussian distribution with mean zero and identity covariance. The data matrix is then
\[
\bm{X}=\bm{E}(\mathbf{I}_{p}-\mathbf{U})^{-1}.
\]

We consider three adjacency-matrix structures: (i) a \emph{random} structure in which nonzero entries are independently selected according to a Bernoulli distribution; (ii) a \emph{hub} structure in which a single node is a parent of all others; and (iii) a \emph{chain} structure in which each node has at most one parent and at most one child. Conditional on being nonzero, each entry of $\bm{U}$ is drawn independently from $\text{Unif}([-1.5,-0.5]\cup[0.5,1.5])$, ensuring a sufficiently strong signal. Representative DAGs from these three structures are shown in Figure~\ref{fig:simulation}.

\begin{figure}[!h]
    \centering
    \subfloat[Random adjacency matrix]{ \label{fig:random}
		 \includegraphics[width=0.31\textwidth]{plot_p100_n250_random_strong}}\ \ 
	\subfloat[Hub adjacency matrix]{\label{fig:hub}
		\includegraphics[width=0.31\textwidth]{plot_p100_n250_hub_strong}}\ \ 
	\subfloat[Chain adjacency matrix]{\label{fig:chain}
		\includegraphics[width=0.31\textwidth]{plot_p100_n250_chain_strong}}
	\caption{Representative DAG structures used in the simulation study.}
	\label{fig:simulation}
\end{figure}

For the simulation study, we use nine candidate DAG estimators spanning the major paradigms reviewed in Section~\ref{sec:intro}: the score-based Greedy Equivalence Search (GES) \citep{chickering2002optimal} and Hill-Climbing algorithms; the constraint-based PC algorithm \citep{spirtes2000causation, kalisch2007estimating}; the hybrid Max-Min Hill-Climbing \citep{tsamardinos2006max}; two likelihood-based methods, constrained maximum likelihood estimation \citep{yuan2012maximum} and its variant \texttt{clrdag} \citep{li2020likelihood}; two continuous-optimization methods, NOTEARS \citep{zheng2018dags} (linear version) and DAGMA \citep{bello2022dagma}; and the recently proposed NOTIME \citep{pmlr-v258-berrevoets25a}. For each replication, we run Algorithm~\ref{alg:ARM-weighting-method-1} (the DAGgr weighting procedure) to obtain $\widetilde{\mathbf{U}}$, followed by Algorithm~\ref{algo:prune} (the DAGgr pruning procedure) at three thresholds $c=$  $0.5$, $0.8$, and $1-1/p$.

For benchmarking, we also compare against the bootstrap-aggregation method DAGBag \citep{wang2014learning}, implemented via the R package \texttt{dagbag} \citep{dagbagR}. Other aggregation-based methods are not considered here due to the lack of publicly available software implementations.

Performance is evaluated using four metrics, reported as means with standard errors (in parentheses) over $200$ replications: the average squared error $\|\widehat{\mathbf{U}}-\mathbf{U}\|_{F}^{2}$ in Frobenius norm; the false negative rate (FNR) in edge recovery; the Matthews correlation coefficient (MCC); and the structural Hamming distance (SHD). Because DAGBag outputs only a binary aggregated adjacency matrix, the average squared error is not reported for that method. We additionally report the average weights assigned to each candidate by DAGgr. Detailed results for all three adjacency structures appear in Tables~\ref{table:random}--\ref{table:chain}.

Tables~\ref{table:random}--\ref{table:chain} show that DAGgr performs consistently strongly across all three graph structures and metrics. It is competitive with, and often indistinguishable from, the best individual candidate, while exhibiting greater robustness across structural regimes than any single estimator.

The mechanism behind this behavior is transparent: candidates that capture spurious edges or that fit the data poorly receive lower validation likelihood and therefore smaller weights, limiting their influence on the aggregate. Conversely, well-performing candidates dominate the weighted combination, steering it toward a structure that better reflects the true causal relationships.

Together, these empirical findings reinforce the practical relevance of the theory developed in Sections~\ref{sec:theory} and~\ref{sec:edges}: by combining diverse estimators within a predictive-likelihood-weighted scheme and enforcing acyclicity, DAGgr yields stable and reliable causal inference in complex, uncertain settings.

\begin{table}[!h]
\centering
\caption{Performance comparison under the random adjacency matrix structure}
\resizebox{\textwidth}{!}{\begin{tabular}{cccccc}
\hline  &  Average weights &Average squared error & FNR& MCC & SHD\\\hline
DAGgr-raw &  - &\textbf{0.9955} (0.0427) & \textbf{0.35\%} (0.05\%) & 0.97 (0.00) & 0.8 (0.1)\\
DAGgr-pruned (0.5) &  - &1.0006 (0.0432) & \textbf{0.47\%} (0.05\%) & \textbf{0.98} (0.00) & \textbf{0.7} (0.1)\\
DAGgr-pruned (0.8) &  - &1.0003 (0.0432) & \textbf{0.47\%} (0.05\%) & \textbf{0.98} (0.00) & \textbf{0.7} (0.1)\\
DAGgr-pruned (1-1/p) &  - &\textbf{0.9999} (0.0426) & 0.54\% (0.06\%) & \textbf{0.98} (0.00) & \textbf{0.7} (0.1)\\ 
DAGBag &  - &- & 0.62\% (0.05\%) & 0.96 (0.00) & 2.0 (0.1)\\
\hline\hline
GES &  0\% &23.5622 (0.3427) & 13.62\% (0.19\%) & 0.54 (0.00) & 129.0 (0.7)\\
PC-algorithm &  0\% &128.8640 (0.7159) & 54.82\% (0.41\%) & 0.40 (0.00) & 69.0 (0.5)\\
Hill climbing &  0\% &\textbf{6.6274} (0.1991) & \textbf{2.11\%} (0.10\%) & 0.61 (0.00) & 117.0 (0.7)\\
Max-min hill climbing &  0\% &10.5572 (0.2049) & 9.97\% (0.17\%) & 0.82 (0.00) & 23.7 (0.3)\\
Constrained-MLE &  0\% &32.5609 (0.4812) & 14.24\% (0.19\%) & 0.32 (0.00) & 505.3 (1.5)\\
clrdag &  0\% &13.4153 (0.1272) & 13.41\% (0.15\%) & \textbf{0.85} (0.00) & \textbf{21.3} (0.2)\\
NOTEARS-linear &  2.17\% &\textbf{2.0380} (0.0325) & \textbf{1.14\%} (0.07\%) & \textbf{0.98} (0.00) & \textbf{1.1} (0.1)\\
DAGMA-linear &  97.83\% &\textbf{1.0005} (0.0444) & \textbf{0.47\%} (0.05\%) & \textbf{0.98} (0.00) & \textbf{0.7} (0.1)\\
NOTIME &  0\% &83.1511 (0.5445) & 56.29\% (0.35\%) & 0.35 (0.00) & 106.2 (0.6)\\\hline
\end{tabular} }\label{table:random}
\end{table}

\begin{table}[!h]
	\centering
	\caption{Performance comparison under the hub adjacency matrix structure}
\resizebox{\textwidth}{!}{\begin{tabular}{cccccc}
\hline  
&  Average weights &Average squared error & FNR & MCC & SHD\\
\hline
DAGgr-raw &  - &\textbf{1.2289} (0.0652) & \textbf{0.04\%} (0.02\%) & 0.80 (0.00) & 29.1 (1.0)\\ 
DAGgr-pruned (0.5) &  - &\textbf{1.3667} (0.1220) & \textbf{0.81\%} (0.18\%) & 0.85 (0.00) & 2.4 (0.5)\\ 
DAGgr-pruned (0.8) &  - &1.8086 (0.1470) & 2.11\% (0.23\%) & 0.84 (0.00) & \textbf{2.3} (0.3)\\ 
DAGgr-pruned (1-1/p) &  - &1.8698 (0.1463) & 2.30\% (0.23\%) & \textbf{0.98} (0.00) & \textbf{2.3} (0.2)\\ 
DAGBag &  - &- & 23.49\% (1.71\%) & 0.69 (0.02) & 31.5 (1.7)\\
\hline\hline
GES &  0\% &44.9597 (2.7757) & 34.65\% (2.34\%) & 0.39 (0.01) & 156.8 (2.5)\\ 
PC-algorithm &  0\% &146.3620 (0.5386) & 98.41\% (0.10\%) & 0.00 (0.00) & 210.7 (0.6)\\ 
Hill climbing &  0\% &63.6739 (3.0546) & 51.45\% (2.63\%) & 0.30 (0.02) & 170.5 (2.9)\\ 
Max-min hill climbing &  0\% &111.3782 (0.2152) & 96.19\% (0.26\%) & 0.02 (0.00) & 188.1 (0.6)\\ 
Constrained-MLE &  17.18\% &\textbf{2.0259} (0.0218) & \textbf{0.02\%} (0.01\%) & 0.80 (0.00) & 34.8 (0.5)\\ 
clrdag &  36.88\% &\textbf{2.2002} (0.1899) & \textbf{2.28\%} (0.23\%) & \textbf{0.98} (0.00) & \textbf{2.3} (0.2)\\ 
NOTEARS-linear &  0\% &6.0077 (0.1025) & 4.46\% (0.15\%) & \textbf{0.95} (0.00) & \textbf{4.4} (0.1)\\ 
DAGMA-linear &  45.94\% &\textbf{1.0938} (0.0183) & \textbf{0.21\%} (0.03\%) & \textbf{0.99} (0.00) & \textbf{0.2} (0.0)\\ 
NOTIME &  0\% &109.8656 (0.0009) & 100.00\% (0.00\%) & 0.00 (0.00) & 99.0 (0.0)\\\hline 
	\end{tabular} }\label{table:hub}
\end{table}

\begin{table}[!h]
	\centering
	\caption{Performance comparison under the chain adjacency matrix structure}
\resizebox{\textwidth}{!}{\begin{tabular}{cccccc}
\hline  &  Average weights &Average squared error & FNR& MCC & SHD\\\hline 
DAGgr-raw &  - &0.1900 (0.0241) & \textbf{0.04\%} (0.01\%) & 0.77 (0.00) & 3.4 (0.8)\\
DAGgr-pruned (0.5)&  - &\textbf{0.1882} (0.0241) & \textbf{0.04\%} (0.01\%) & \textbf{0.98} (0.00) & \textbf{1.1} (0.1)\\
DAGgr-pruned (0.8)&  - &\textbf{0.1882} (0.0241) & \textbf{0.04\%} (0.01\%) & \textbf{0.98} (0.00) & \textbf{1.1} (0.1)\\
DAGgr-pruned (1-1/p) &  - &0.1984 (0.0266) & 0.05\% (0.02\%) & \textbf{0.98} (0.00) & 1.2 (0.1)\\
DAGBag &  - &- & 10.27\% (0.27\%) & 0.88 (0.00) & 10.3 (0.3)\\
\hline\hline
GES &  0\% &84.3782 (2.4629) & 47.69\% (1.29\%) & 0.38 (0.01) & 99.0 (1.2)\\
PC-algorithm &  0\% &61.8079 (2.3484) & 26.00\% (1.61\%) & 0.72 (0.02) & \textbf{26.3} (1.6)\\
Hill climbing &  0.7\% &\textbf{1.4332} (0.0791) & \textbf{0.27\%} (0.04\%) & \textbf{0.73} (0.00) & 55.0 (0.6)\\
Max-min hill climbing &  99.3\% &\textbf{0.1869} (0.0240) & \textbf{0.04\%} (0.01\%) & \textbf{0.98} (0.00) & \textbf{1.1} (0.1)\\
Constrained-MLE &  0\% &67.7678 (0.3708) & 38.67\% (0.17\%) & 0.18 (0.00) & 748.8 (1.6)\\
clrdag &  0\% &61.1686 (0.1008) & 42.72\% (0.06\%) & 0.44 (0.00) & 93.4 (0.3)\\
NOTEARS-linear &  0\% &\textbf{14.5058} (0.2740) & \textbf{12.92\%} (0.24\%) & \textbf{0.82} (0.00) & \textbf{18.5} (0.4)\\
DAGMA-linear &  0\% &25.0647 (0.2766) & 22.61\% (0.21\%) & 0.68 (0.00) & 39.1 (0.4)\\
NOTIME &  0\% &79.7030 (0.5253) & 48.13\% (0.27\%) & 0.45 (0.00) & 69.0 (0.4)\\
\hline
	\end{tabular} }\label{table:chain}
\end{table}

Additional simulation results are reported in the supplementary material. These include a weak-signal regime in which nonzero entries of $\mathbf{U}$ are drawn from $\text{Unif}([-0.5,-0.1]\cup[0.1,0.5])$, and a lower-dimensional regime with $p=25$, $n=100$. The supplementary tables also report false-positive rates, false-discovery rates, and KL divergence. Overall, these extended results confirm the robustness of DAGgr.

\section{Real Data Analysis}\label{sec:realdata}

We apply DAGgr to the dataset of \citet{sachs2005causal}, which contains simultaneous measurements of $11$ phosphorylated proteins and phospholipids collected from thousands of individual immune cells under both general and targeted molecular interventions. The interventions activate signaling pathways and, by revealing the effects of stimulatory and inhibitory perturbations, enable causal inference of edge directions. The objective is to recover the causal network among the eleven proteins from observational and interventional data combined. Because the underlying signaling pathways are well established in \citet{sachs2005causal}, we assess DAGgr by comparing its estimated network with the corresponding ``gold-standard'' network shown in Figure~\ref{fig:real_golden}.

\begin{figure}[!h] 
\centering
\begin{tikzpicture}[  scale=0.8,transform shape,
    every node/.style={
        draw,
        ellipse,
        fill={rgb,255:red,182; green,216; blue,228},
        minimum width=1.3cm,
        minimum height=0.8cm,
        font=\large
    },
    arrow/.style={->,line width=0.5mm,>=stealth }
]

\node (Akt)  at (7.69, 0.77)  {\textbf{Akt}};
\node (Erk)  at (5.09, 1.27) {\textbf{Erk}};
\node (Jnk)  at (9.29, 2.07) {\textbf{Jnk}};
\node (Mek)  at (2.9, 2.12) {\textbf{Mek}};
\node (P38)  at (10.03,3.57) {\textbf{P38}};
\node (PIP2) at (8.97,6.17) {\textbf{PIP2}};
\node (PIP3) at (10.98,7.07) {\textbf{PIP3}};
\node (PKA)  at (6.75, 3.87) {\textbf{PKA}};
\node (PKC)  at (6.87, 5.37) {\textbf{PKC}};
\node (Plcg) at (8.58,7.97) {\textbf{Plcg}};
\node (Raf)  at (3.17, 3.67) {\textbf{Raf}};

\draw[arrow] (PIP3) -- (PIP2);
\draw[arrow] (Plcg) -- (PIP2);
\draw[arrow] (Plcg) -- (PIP3);
\draw[arrow,dashed] (Plcg) -- (PKC);
\draw[arrow,dashed] (PIP2) -- (PKC);

\draw[arrow] (PKC) -- (PKA);
\draw[arrow] (PKC) -- (Raf);
\draw[arrow, bend right=90] (PKC.west) to[out=-40] (Mek.west);
\draw[arrow] (PKC) -- (Jnk);
\draw[arrow] (PKC) -- (P38);

\draw[arrow] (PKA) -- (Raf);
\draw[arrow] (PKA) -- (Mek);
\draw[arrow] (PKA) -- (Erk);
\draw[arrow] (PKA) -- (Akt);
\draw[arrow] (PKA) -- (Jnk);
\draw[arrow] (PKA) -- (P38);

\draw[arrow] (Raf) -- (Mek);  
\draw[arrow] (Mek) -- (Erk);
\draw[arrow] (Erk) -- (Akt); 

\draw[arrow,dashed, bend left=80] (PIP3) to  (Akt);

\end{tikzpicture}
\caption{The golden standard causal network in \citet{sachs2005causal}}\label{fig:real_golden}
\end{figure}

It is worth noting that the empirical distributions of molecular concentrations deviate substantially from normality. Most variables exhibit pronounced right-skewness: concentrations are strictly positive, with a large proportion of small values clustered near zero, leading to highly asymmetric distributions that violate the Gaussian assumptions underlying Gaussian Bayesian networks. To mitigate this issue, we apply a logarithmic transformation to all concentration measurements, which yields distributions that are closer to normal.

To implement DAGgr on this dataset, we use ten candidate methods spanning several classes of causal-discovery approaches. The first five---GES \citep{chickering2002optimal}, the PC algorithm \citep{spirtes2000causation, kalisch2007estimating}, constrained maximum likelihood estimation \citep{yuan2012maximum}, its variant \texttt{clrdag} \citep{li2020likelihood}, and the linear version of NOTEARS \citep{zheng2018dags}---are carried over from the simulation study. To broaden the candidate set, we add five flexible nonlinear methods: NOTEARS-MLP \citep{zheng2020learning}, which extends NOTEARS with multilayer perceptrons; Causal Additive Models (CAM) \citep{CAM}, designed for nonlinear additive dependencies; R2Sort \citep{reisach2023scale}, a scalable ordering-based approach; NHTS \citep{hiremath2024hybrid}, a hybrid constraint- and score-based learner; and RESIT \citep{RESIT}, a functional causal model tailored for nonlinear, non-Gaussian settings. Together these ten methods provide a diverse and representative collection of candidates, all of which have been previously applied to the \citet{sachs2005causal} data.

Although the log-transformed data are approximately Gaussian, mild heteroscedasticity persists. For settings with node-specific noise variances, we use the heteroscedastic extension of Algorithm~\ref{alg:ARM-weighting-method-1} described in Appendix \ref{app:hetero-algo}, which replaces the homoscedastic likelihood in Step~3 with a per-node Gaussian likelihood and introduces a dispersion weight $\gamma\in[0,1]$ that controls how strongly the validation score penalizes the estimated log-variances $\log\widehat{\sigma}_{k,j}^{2}$; we use $\gamma=1$ (the full heteroscedastic likelihood) throughout the simulations and revisit the choice of $\gamma$ in the real-data analysis. We consider two pruned specifications: a baseline with $\lambda=1$, $\gamma=0$ and an alternative with $\lambda=1$, $\gamma=0.03$. Both are obtained via the step-size pruned procedure of Algorithm~\ref{algo:prune}. Consistent with the simulation study, we compare against the bootstrap-aggregation method DAGBag \citep{wang2014learning}. Table~\ref{table:real_results} reports the mean squared error $\frac{1}{np}\|\mathbf{X}(\mathbf{I}_{p}-\widehat{\mathbf{U}})\|_{F}^{2}$ together with true positives (TP), false positives (FP), false negatives (FN), MCC, F1 score, and SHD; DAGBag is excluded from MSE since it returns only a binary aggregated adjacency matrix. We additionally report the DAGgr weights assigned to each candidate under the two pruned specifications.

\begin{table}[h]
\centering
\caption{Performance comparison on the \citet{sachs2005causal} dataset}
\resizebox{\textwidth}{!}{
\begin{tabular}{cccccccccc}
\hline
 \multirow{2}{*}{Method} & \multicolumn{2}{c}{Weights} & \multirow{2}{*}{MSE} &  \multirow{2}{*}{TP} & \multirow{2}{*}{FP} & \multirow{2}{*}{FN} & \multirow{2}{*}{MCC} & \multirow{2}{*}{F1 score} & \multirow{2}{*}{SHD}\\
 &  $\gamma=0$ &  $\gamma=0.03$  & & & & & & &\\
\hline  
DAGgr step-size pruned($\gamma=0$) &  - &- & 7.76 & 7 & 3 & 10 & \textbf{0.48} & 0.52 & \textbf{11}\\
DAGgr step-size pruned($\gamma=0.03$) &- &-  & \textbf{6.26} & \textbf{9} & 7 & \textbf{8} & 0.47 & \textbf{0.55} & 13\\
DAGBag & -& - & - &5 & \textbf{2} & 12 & 0.40 & 0.42 & 12\\
\hline\hline
GES &  0.34\%  &0\% & 8.70 & 4 & 3 & 13 & 0.30 & 0.33 & 13\\ 
PC-algorithm & 0.26\% &0\%  & 9.23 & 5 & 3 & 12 & 0.36 & 0.40 & 13\\ 
Constrained-MLE &24.76\% &47.08\%  & \textbf{3.78} & \textbf{9} & 7 & \textbf{8} & \textbf{0.47} & \textbf{0.55} & 13\\ 
clrdag & 9.85\% &20.09\%  & \textbf{3.26} & 3 & 11 & 14 & 0.06 & 0.19 & 21\\ 
NOTEARS-linear &  5.77\% &0\%  & 9.49 & 4 & \textbf{2} & 13 & 0.34 & 0.35 & 13\\ 
NOTEARS-MLP &  30.19\% &0.15\%  & 9.58 & 5 & \textbf{1} & 12 & \textbf{0.45} & \textbf{0.43} & \textbf{12}\\ 
CAM & 14.04\%  &1.38\%   & 9.56 & \textbf{6} & 5 & \textbf{11} & 0.36 & \textbf{0.43} & \textbf{12}\\ 
R2Sort &5.25\% &0\%  & 9.32 & 4 & 3 & 13 & 0.30 & 0.33 & 13\\ 
NHTS &  9.15\% &10.75\%  & 9.53 & \textbf{6} & 5 & \textbf{11} & 0.36 & 0.43 & 13\\ 
RESIT & 0.39\%   &20.55\%  & 9.46 & \textbf{6} & 10 & \textbf{11} & 0.25 & 0.36 & 18\\ 
\hline
\end{tabular} }\label{table:real_results}
\end{table}


The results show that the proposed pruned models achieve competitive and stable performance across multiple evaluation metrics. The $\gamma=0.03$ specification improves edge recovery and reduces MSE relative to $\gamma=0$, achieving the highest F1 score ($0.55$) and an MCC ($0.47$) comparable to the best-performing individual method, Constrained-MLE. The weights also display the adaptive nature of the framework: under $\gamma=0.03$, more weight is assigned to methods better suited to complex dependency structures. Overall, DAGgr offers a favorable trade-off between estimation accuracy and structural recovery, consistently matching or exceeding strong individual competitors while maintaining stability.

Figure~\ref{fig:real_results} visualizes the two pruned models. Black edges denote correctly identified causal relationships consistent with the gold-standard network; red edges indicate missed true links; green edges correspond to spurious edges absent from the gold standard; and orange edges flag incorrectly oriented (reversed) causal directions.

\begin{figure}[!h] 
\centering
	\subfloat[Pruned model with $\gamma=0$]{ \label{fig:gamma0}
    \begin{tikzpicture}[  scale=0.65,transform shape,
    every node/.style={
        draw,
        ellipse,
        fill={rgb,255:red,182; green,216; blue,228},
        minimum width=1.3cm,
        minimum height=0.8cm,
        font=\large
    },
    arrow/.style={->,line width=0.5mm,>=stealth } 
]
\node (Akt)  at (7.69, 0.77)  {\textbf{Akt}};
\node (Erk)  at (5.09, 1.27) {\textbf{Erk}};
\node (Jnk)  at (9.29, 2.07) {\textbf{Jnk}};
\node (Mek)  at (2.9, 2.12) {\textbf{Mek}};
\node (P38)  at (10.03,3.57) {\textbf{P38}};
\node (PIP2) at (8.97,6.17) {\textbf{PIP2}};
\node (PIP3) at (10.98,7.07) {\textbf{PIP3}};
\node (PKA)  at (6.75, 3.87) {\textbf{PKA}};
\node (PKC)  at (6.87, 5.37) {\textbf{PKC}};
\node (Plcg) at (8.58,7.97) {\textbf{Plcg}};
\node (Raf)  at (3.17, 3.67) {\textbf{Raf}};

\draw[arrow,orange] (PIP2) -- (PIP3);
\draw[arrow,orange] (Jnk) -- (PKC);
\draw[arrow, red,  line width=0.5mm,
      postaction={
        decorate,
        decoration={
          markings,
          mark= at position 0.5 with {
            \draw[-] (-0.15,-0.15) -- (0.15,0.15);
            \draw[-] (-0.15,0.15) -- (0.15,-0.15);
          }
        }
      }]  (Plcg) -- (PIP2);
\draw[arrow] (Plcg) -- (PIP3);
\draw[arrow] (Erk) -- (Akt);  
\draw[arrow] (PKA) -- (Erk);
\draw[arrow] (PKA) -- (Akt);
\draw[arrow] (Raf) -- (Mek);  
\draw[arrow] (PKC) -- (P38);
\draw[arrow] (PKA) -- (P38);

\draw[arrow,green] (Jnk) -- (P38); 

\draw[arrow, red,  line width=0.5mm,
      postaction={
        decorate,
        decoration={
          markings,
          mark= at position 0.5 with {
            \draw[-] (-0.15,-0.15) -- (0.15,0.15);
            \draw[-] (-0.15,0.15) -- (0.15,-0.15);
          }
        }
      }
]  (PKC) -- (PKA);
\draw[arrow, red,  line width=0.5mm,
      postaction={
        decorate,
        decoration={
          markings,
          mark= at position 0.5 with {
            \draw[-] (-0.15,-0.15) -- (0.15,0.15);
            \draw[-] (-0.15,0.15) -- (0.15,-0.15);
          }
        }
      }
]  (PKC) -- (Raf);
\draw[arrow, red,  line width=0.5mm,bend right=90,
      postaction={
        decorate,
        decoration={
          markings,
          mark= at position 0.5 with {
            \draw[-] (-0.15,-0.15) -- (0.15,0.15);
            \draw[-] (-0.15,0.15) -- (0.15,-0.15);
          }
        }
      }
] (PKC.west) to[out=-40] (Mek.west);

\draw[arrow, red,  line width=0.5mm,
      postaction={
        decorate,
        decoration={
          markings,
          mark= at position 0.5 with {
            \draw[-] (-0.15,-0.15) -- (0.15,0.15);
            \draw[-] (-0.15,0.15) -- (0.15,-0.15);
          }
        }
      }] (PKA) -- (Raf);
\draw[arrow, red,  line width=0.5mm,
      postaction={
        decorate,
        decoration={
          markings,
          mark= at position 0.5 with {
            \draw[-] (-0.15,-0.15) -- (0.15,0.15);
            \draw[-] (-0.15,0.15) -- (0.15,-0.15);
          }
        }
      }] (PKA) -- (Mek);
\draw[arrow, red,  line width=0.5mm,
      postaction={
        decorate,
        decoration={
          markings,
          mark= at position 0.5 with {
            \draw[-] (-0.15,-0.15) -- (0.15,0.15);
            \draw[-] (-0.15,0.15) -- (0.15,-0.15);
          }
        }
      }
]  (PKA) -- (Jnk);

\draw[arrow, red,  line width=0.5mm,
      postaction={
        decorate,
        decoration={
          markings,
          mark= at position 0.5 with {
            \draw[-] (-0.15,-0.15) -- (0.15,0.15);
            \draw[-] (-0.15,0.15) -- (0.15,-0.15);
          }
        }
      }
] (Mek) -- (Erk);

\end{tikzpicture} 
} 
	\subfloat[Pruned model with $\gamma=0.03$]{\label{fig:gamma3}
    \begin{tikzpicture}[ scale=0.65,transform shape,
    every node/.style={
        draw,
        ellipse,
        fill={rgb,255:red,182; green,216; blue,228},
        minimum width=1.3cm,
        minimum height=0.8cm,
        font=\large
    },
    arrow/.style={->,line width=0.5mm,>=stealth } 
]
\node (Akt)  at (7.69, 0.77)  {\textbf{Akt}};
\node (Erk)  at (5.09, 1.27) {\textbf{Erk}};
\node (Jnk)  at (9.29, 2.07) {\textbf{Jnk}};
\node (Mek)  at (2.9, 2.12) {\textbf{Mek}};
\node (P38)  at (10.03,3.57) {\textbf{P38}};
\node (PIP2) at (8.97,6.17) {\textbf{PIP2}};
\node (PIP3) at (10.98,7.07) {\textbf{PIP3}};
\node (PKA)  at (6.75, 3.87) {\textbf{PKA}};
\node (PKC)  at (6.87, 5.37) {\textbf{PKC}};
\node (Plcg) at (8.58,7.97) {\textbf{Plcg}};
\node (Raf)  at (3.17, 3.67) {\textbf{Raf}};

\draw[arrow,orange] (PIP2) -- (PIP3);
\draw[arrow, red,  line width=0.5mm,
      postaction={
        decorate,
        decoration={
          markings,
          mark= at position 0.5 with {
            \draw[-] (-0.15,-0.15) -- (0.15,0.15);
            \draw[-] (-0.15,0.15) -- (0.15,-0.15);
          }
        }
      }
]  (Plcg) -- (PIP2);
\draw[arrow] (Plcg) -- (PIP3);

\draw[arrow, red,  line width=0.5mm,
      postaction={
        decorate,
        decoration={
          markings,
          mark= at position 0.5 with {
            \draw[-] (-0.15,-0.15) -- (0.15,0.15);
            \draw[-] (-0.15,0.15) -- (0.15,-0.15);
          }
        }
      }
]  (PKC) -- (PKA);
\draw[arrow, red,  line width=0.5mm,
      postaction={
        decorate,
        decoration={
          markings,
          mark= at position 0.5 with {
            \draw[-] (-0.15,-0.15) -- (0.15,0.15);
            \draw[-] (-0.15,0.15) -- (0.15,-0.15);
          }
        }
      }
]  (PKC) -- (Raf);
\draw[arrow, red,  line width=0.5mm,bend right=90,
      postaction={
        decorate,
        decoration={
          markings,
          mark= at position 0.5 with {
            \draw[-] (-0.15,-0.15) -- (0.15,0.15);
            \draw[-] (-0.15,0.15) -- (0.15,-0.15);
          }
        }
      }
] (PKC.west) to[out=-40] (Mek.west);
\draw[arrow,orange] (Jnk) -- (PKC);
\draw[arrow] (PKC) -- (P38);

\draw[arrow] (PKA) -- (Raf);
\draw[arrow] (PKA) -- (Mek);
\draw[arrow] (PKA) -- (Erk);
\draw[arrow] (PKA) -- (Akt);
\draw[arrow, red,  line width=0.5mm,
      postaction={
        decorate,
        decoration={
          markings,
          mark= at position 0.5 with {
            \draw[-] (-0.15,-0.15) -- (0.15,0.15);
            \draw[-] (-0.15,0.15) -- (0.15,-0.15);
          }
        }
      }
]  (PKA) -- (Jnk);
\draw[arrow] (PKA) -- (P38);

\draw[arrow] (Raf) -- (Mek);  
\draw[arrow, red,  line width=0.5mm,
      postaction={
        decorate,
        decoration={
          markings,
          mark= at position 0.5 with {
            \draw[-] (-0.15,-0.15) -- (0.15,0.15);
            \draw[-] (-0.15,0.15) -- (0.15,-0.15);
          }
        }
      }
] (Mek) -- (Erk); 
\draw[arrow] (Erk) -- (Akt); 


\draw[arrow,green,bend right=28] (Mek) to (Akt);
\draw[arrow,green] (Mek) -- (P38);
\draw[arrow,green] (Jnk) -- (P38);
\draw[arrow,green,bend left=60] (P38) to (Akt.east);
\draw[arrow,green] (Akt) .. controls (14,0) and (14,9) ..   (Plcg) ; 
\end{tikzpicture} } 
\caption{Estimated DAGs obtained from the two pruned specifications}
\label{fig:real_results}
\end{figure}

Figure~\ref{fig:real_results} shows that both DAGgr step-size pruned specifications recover a substantial portion of the known signaling structure. The $\gamma=0.03$ model yields a denser and more informative reconstruction, capturing additional true-positive edges---particularly those involving PKA and PKC and their downstream signaling relationships---at the cost of a moderate increase in spurious or incorrectly oriented edges (green and orange links). The $\gamma=0$ specification produces a sparser graph with fewer false positives but more missed true edges, and it achieves the lowest SHD ($11$) among all candidate DAGs. The comparison illustrates the expected trade-off between sparsity and sensitivity: larger $\gamma$ improves recall but slightly reduces structural precision.

To address the disagreement among candidate DAGs on edge directions---which can produce ``suspect'' directed edges in the pruned graph---we also apply the PDAG procedure of Section~\ref{sec: undirect} under both specifications. The resulting PDAGs are shown in Figure~\ref{fig:real_results_agg}, where blue lines denote undirected edges.

\begin{figure}[!h] 
\centering
	\subfloat[PDAG with $\gamma=0$,  $\tau=0.4$ and $\delta=0.2$]{ \label{fig:gamma0_agg}
    \begin{tikzpicture}[  scale=0.65,transform shape,
    every node/.style={
        draw,
        ellipse,
        fill={rgb,255:red,182; green,216; blue,228},
        minimum width=1.3cm,
        minimum height=0.8cm,
        font=\large
    },
    arrow/.style={->,line width=0.5mm,>=stealth } 
]
\node (Akt)  at (7.69, 0.77)  {\textbf{Akt}};
\node (Erk)  at (5.09, 1.27) {\textbf{Erk}};
\node (Jnk)  at (9.29, 2.07) {\textbf{Jnk}};
\node (Mek)  at (2.9, 2.12) {\textbf{Mek}};
\node (P38)  at (10.03,3.57) {\textbf{P38}};
\node (PIP2) at (8.97,6.17) {\textbf{PIP2}};
\node (PIP3) at (10.98,7.07) {\textbf{PIP3}};
\node (PKA)  at (6.75, 3.87) {\textbf{PKA}};
\node (PKC)  at (6.87, 5.37) {\textbf{PKC}};
\node (Plcg) at (8.58,7.97) {\textbf{Plcg}};
\node (Raf)  at (3.17, 3.67) {\textbf{Raf}};

\draw[arrow] (Erk) -- (Akt);   
\draw[arrow] (PKA) -- (Akt);   
\draw[arrow] (PKA) -- (Erk);   
\draw[arrow] (Raf) -- (Mek);    
\draw[arrow] (PKA) -- (P38); 
\draw[arrow] (PKC) -- (P38);

\draw[arrow,green] (Jnk) -- (P38); 
\draw[arrow,orange] (Jnk) -- (PKC);
\draw[arrow,orange] (PIP2) -- (PIP3); 

\draw[blue,line width=0.5mm] (Plcg) -- (PIP3);

\end{tikzpicture} 
} 
	\subfloat[PDAG with $\gamma=0.03$,  $\tau=0.45$ and $\delta=0.2$]{\label{fig:gamma3_agg}
    \begin{tikzpicture}[ scale=0.65,transform shape,
    every node/.style={
        draw,
        ellipse,
        fill={rgb,255:red,182; green,216; blue,228},
        minimum width=1.3cm,
        minimum height=0.8cm,
        font=\large
    },
    arrow/.style={->,line width=0.5mm,>=stealth } 
]
\node (Akt)  at (7.69, 0.77)  {\textbf{Akt}};
\node (Erk)  at (5.09, 1.27) {\textbf{Erk}};
\node (Jnk)  at (9.29, 2.07) {\textbf{Jnk}};
\node (Mek)  at (2.9, 2.12) {\textbf{Mek}};
\node (P38)  at (10.03,3.57) {\textbf{P38}};
\node (PIP2) at (8.97,6.17) {\textbf{PIP2}};
\node (PIP3) at (10.98,7.07) {\textbf{PIP3}};
\node (PKA)  at (6.75, 3.87) {\textbf{PKA}};
\node (PKC)  at (6.87, 5.37) {\textbf{PKC}};
\node (Plcg) at (8.58,7.97) {\textbf{Plcg}};
\node (Raf)  at (3.17, 3.67) {\textbf{Raf}};

\draw[arrow,orange] (PIP2) -- (PIP3);
\draw[arrow] (Plcg) -- (PIP3);

\draw[line width=0.5mm,blue] (Jnk) -- (PKC);
\draw[arrow] (PKC) -- (P38);

\draw[arrow] (PKA) -- (Raf);
\draw[arrow] (PKA) -- (Mek);
\draw[line width=0.5mm,blue] (PKA) -- (Erk);
\draw[arrow] (PKA) -- (Akt);
 
\draw[arrow] (PKA) -- (P38);

\draw[arrow] (Raf) -- (Mek);  
 
\draw[arrow] (Erk) -- (Akt);

\draw[arrow,green,bend right=28] (Mek) to (Akt);
\draw[arrow,green] (Mek) -- (P38);
\draw[line width=0.5mm,blue] (Jnk) -- (P38);
\draw[arrow,green,bend left=60] (P38) to (Akt.east);
\draw[arrow,green] (Akt) .. controls (14,0) and (14,9) ..   (Plcg) ; 
\end{tikzpicture} } 
\caption{Estimated PDAGs obtained from the two pruned specifications}
\label{fig:real_results_agg}
\end{figure}

Figure~\ref{fig:gamma0_agg} presents the PDAG constructed from the $\gamma=0$ specification with $\tau=0.4$ and $\delta=0.2$, while Figure~\ref{fig:gamma3_agg} corresponds to the $\gamma=0.03$ specification with $\tau=0.45$ and $\delta=0.2$. The figures suggest that the procedure effectively identifies uncertain edge directions by representing many reversed or potentially spurious edges as undirected, producing a more conservative and stable characterization of the underlying causal structure, particularly in settings where the estimated DAG contains a relatively large number of potential false positives.

\section{Conclusion and Discussion}\label{sec:conclusion}
We developed DAGgr, a model-averaging framework for learning Directed Acyclic Graphs. By weighting candidate DAGs through out-of-sample predictive likelihood across repeated data splits and combining them under an importance-score threshold, DAGgr produces an aggregated graph that is provably acyclic, more stable than any individual candidate, and more interpretable. Our theoretical analysis established a finite-sample risk bound, proved acyclicity preservation, and showed that edge selection is consistent under mild conditions on the weights. Simulations and the application to the \citet{sachs2005causal} dataset confirm that DAGgr reduces false positives, improves true-edge detection, and achieves lower structural Hamming distance than single-model alternatives.

Several extensions warrant further study. Promising directions include extending the framework to partially observed data, integrating prior information about causal directions, and incorporating nonparametric or deep-learning-based candidate estimators. Adaptive thresholding strategies and fast approximation algorithms would further enhance computational scalability. We expect the ideas developed here to encourage broader adoption of model averaging in causal structure learning, ultimately enabling more reliable and actionable causal inference in complex systems. A further extension worth pursuing is relaxing the homoscedastic noise assumption in model~\eqref{eq:model}. The current likelihood treats $\sigma$ as common across nodes, but in practice the noise scale often varies by variable---a feature already explored in our heteroscedastic simulations, where each $X_{j}$ is generated with its own node-specific variance $\sigma_{j}^{2}$. Extending the framework to fit and weight candidates under a node-dependent noise model $\epsilon_{j}\overset{\textrm{ind.}}{\sim}N(0,\sigma_{j}^{2})$ is conceptually straightforward---the per-observation log-likelihood in~\eqref{eq:likelihood} becomes a sum over node-specific terms---and could improve both the calibration of the validation-likelihood weights and the recovery of edges incident to high-noise nodes. A related direction is observation-dependent (data-driven) heteroscedasticity, where $\sigma_{j}$ depends on the values of the parents; this connects DAGgr to recent identifiability results that exploit noise heterogeneity to resolve Markov-equivalence ambiguity.

\section*{\normalsize Generative AI Use Statement}
During the preparation of this work, the authors used Opus 4.7, Anthropic to summarize background literature and to improve the language and readability of the manuscript. After using this tool, the authors reviewed, verified, and edited the content as needed, and take full responsibility for the content of the publication.

 \appendix
 \section{Proofs}
 \subsection{Proof of Theorem 1}
 \begin{proof}
Without confusion, we denote $l_{k}(\mathbf{x}_{i}):=l_{\hat{\bU}^{(k)},\hat{\sigma}_{k}}(\mathbf{x}_{i})$. We also write
\[
L(\mathbf{x},\mathbf{y}):=\|\mathbf{x}-\mathbf{y}\|^{2}=\sum_{j=1}^{p}(x_{j}-y_{j})^{2}
\]
for the squared-error loss between $\mathbf{x},\mathbf{y}\in\mathbb{R}^{p}$. Under the Gaussian model~\eqref{eq:model}, the per-observation log-likelihood satisfies $l_{\mathbf{U},\sigma^{2}}(\mathbf{x}_{i})=-\frac{1}{2\sigma^{2}}L(\mathbf{x}_{i},\mathbf{U}\mathbf{x}_{i})-\frac{p}{2}\log\sigma^{2}$, so $L$ is the natural loss associated with the likelihood.

Denote 
\[
q_{\ntr}^{n}=\sum_{k=1}^{K}\pi_{k}\exp\left\{ \lambda\sum_{i=\ntr+1}^{n}l_{k}(\mathbf{x}_{i})\right\} ,
\]
which can be decomposed as 
\begin{align}
q_{\ntr}^{n}= & \sum_{k=1}^{K}\pi_{k}\exp\left\{ \lambda l_{k}(\mathbf{x}_{\ntr+1})\right\} \times\frac{\sum_{k=1}^{K}\pi_{k}\exp\left\{ \lambda\sum_{i=\ntr+1}^{\ntr+2}l_{k}(\mathbf{x}_{i})\right\} }{\sum_{k=1}^{K}\pi_{k}\exp\left\{ \lambda l_{k}(\mathbf{x}_{\ntr+1})\right\} }\nonumber \\
 & \times\cdots\times\frac{\sum_{k=1}^{K}\pi_{k}\exp\left\{ \lambda\sum_{i=\ntr+1}^{n}l_{k}(\mathbf{x}_{i})\right\} }{\sum_{k=1}^{K}\pi_{k}\exp\left\{ \lambda\sum_{i=\ntr+1}^{n-1}l_{k}(\mathbf{x}_{i})\right\} }\nonumber \\
= & \prod_{i=\ntr+1}^{n}\left(\sum_{k=1}^{K}w_{k,i}\exp\left\{ \lambda l_{k}(\mathbf{x}_{i})\right\} \right).\label{eq:qn_decomp}
\end{align}

For a fixed $i\in\{\ntr+1,...,n\}$, let $J$ be a discrete random
variable with $\ensuremath{P(J=k)=w_{k,i}}$, where $k=1,...,K$.
Let $\nu$ be the discrete measure induced by $J$ on $\mathbb{Z}^{+}$
such that $\text{\ensuremath{\nu(\{k\})=P(J=k).}}$ Denote $h(J)=-L(\mathbf{x}_{i},\widehat{\bU}^{(J)}\mathbf{x}_{i})$
and we have 
\[
\sum_{k=1}^{K}w_{k,i}\exp\left\{ \lambda l_{k}(\mathbf{x}_{i})\right\} =\E_{\nu}\exp(\lambda h(J)).
\]

By Lemma 3.6.1 of \citet{catoni2004statistical}, we have 
\[
\log \E_{\nu}\exp(\lambda h(J))\le\lambda \E_{\nu}h(J)+\frac{\lambda^{2}}{2}\mathrm{Var}_{\nu}(h(J))\exp\left(\lambda\max\left\{ 0,\sup_{\gamma\in[0,\lambda]}\dfrac{M_{\nu_{\gamma}}^{3}(h(J))}{\mathrm{Var}_{\nu_{\gamma}}(h(J))}\right\} \right),
\]
where $\nu_{\gamma}(k)=\frac{w_{k,i}\exp(\gamma h(k))}{\sum_{k=1}^{K}w_{k,i}\exp(\gamma h(k))}$
for $k\ge1$ and $M_{\nu_{\gamma}}^{3}(h(J))=\E_{\nu_{\gamma}}(h(J)-\E_{\nu_{\gamma}}(h(J)))^{3}.$
Thus, we have

\begin{align*}\sup_{\gamma\in[0,\lambda]}\frac{M_{\nu_{\gamma}}^{3}(h(J))}{\mathrm{Var}_{\nu_{\gamma}}(h(J))} & \le\sup_{\gamma\in[0,\lambda]}\sup_{k\ge0}|h(k)-\E_{\nu_{\gamma}}h(J)|\\
 & \le2\sup_{k\ge0}\left|L\left(\mathbf{x}_{i},\widehat{\bU}^{(k)}\mathbf{x}_{i}\right)-L\left(\mathbf{x}_{i},\bU\mathbf{x}_{i}\right)\right|\\
 & \le2\sup_{k\ge0}\{|(\mathbf{x}_{i}-\bU\mathbf{x}_{i})^\top\cdot(\widehat{\bU}^{(k)}\mathbf{x}_{i}-\bU\mathbf{x}_{i})|+2(\widehat{\bU}^{(k)}\mathbf{x}_{i}-\bU\mathbf{x}_{i})^{2}\}\\
 & \le2|\epsilon_{i}|\cdot\sqrt{p}A|\mathbf{x}_{i}|+4pA^{2}|\mathbf{x}_{i}|^{2}\\
 & \le2\sqrt{p}A|\mathbf{x}_{i}|\cdot\sqrt{p}A|\mathbf{x}_{i}|+4pA^{2}|\mathbf{x}_{i}|^{2}\\
 & =6pA^{2}|\mathbf{x}_{i}|^{2}
\end{align*}

and

\begin{align*}\mathrm{Var}_{\nu}(h(J)) & \le \E_{\nu}\left(L(\mathbf{x}_{i},\widehat{\bU}^{(J)}\mathbf{x}_{i})-L(\mathbf{x}_{i},\E_{\nu}\widehat{\bU}^{(J)}\mathbf{x}_{i})\right)^{2}\\
 & \le\sup_{k\ge1}\left(2(\mathbf{x}_{i}-\widehat{\bU}^{(k)}\mathbf{x}_{i})+c_{2}|\widehat{\bU}^{(k)}\mathbf{x}_{i}-\E_{\nu}\widehat{\bU}^{(J)}\mathbf{x}_{i}|\right)^{2}\E_{v}\left(\widehat{\bU}^{(J)}\mathbf{x}_{i}-\E_{\nu}\widehat{\bU}^{(J)}\mathbf{x}_{i}\right)^{2}\\
 & \le\sup_{k\ge1}\left(2(\mathbf{x}_{i}-\bU\mathbf{x}_{i})+4c_{2}\sup_{k\ge1}|\widehat{\bU}^{(k)}\mathbf{x}_{i}-\bU\mathbf{x}_{i}|\right)^{2}\cdot \E_{v}\left(\widehat{\bU}^{(J)}\mathbf{x}_{i}-\E_{\nu}\widehat{\bU}^{(J)}\mathbf{x}_{i}\right)^{2}\\
 & \le\left(2|\epsilon_{i}|+4\sqrt{p}A|\mathbf{x}_{i}|\right)^{2}\E_{v}\left(\widehat{\bU}^{(J)}\mathbf{x}_{i}-\E_{\nu}\widehat{\bU}^{(J)}\mathbf{x}_{i}\right)^{2}.\\
 & \le\left(2\sqrt{p}A|\mathbf{x}_{i}|+4\sqrt{p}A|\mathbf{x}_{i}|\right)^{2}\E_{v}\left(\widehat{\bU}^{(J)}\mathbf{x}_{i}-\E_{\nu}\widehat{\bU}^{(J)}\mathbf{x}_{i}\right)^{2}
\end{align*}

We also have 
\begin{align*} & \E_{v}\left[L\left(\mathbf{x}_{i},\widehat{\bU}^{(J)}\mathbf{x}_{i}\right)-L\left(\mathbf{x}_{i},\E_{\nu}\widehat{\bU}^{(J)}\mathbf{x}_{i}\right)\right]\\
\ge & \E_{v}\left[2\left(\mathbf{x}_{i}-\E_{\nu}\widehat{\bU}^{(J)}\mathbf{x}_{i}\right)\left(\E_{\nu}\widehat{\bU}^{(J)}\mathbf{x}_{i}-\widehat{\bU}^{(J)}\mathbf{x}_{i}\right)\right]+c_{1}\E_{v}\left(\E_{\nu}\widehat{\bU}^{(J)}\mathbf{x}_{i}-\widehat{\bU}^{(J)}\mathbf{x}_{i}\right)^{2}\\
= & c_{1}\E_{v}\left(\E_{\nu}\widehat{\bU}^{(J)}\mathbf{x}_{i}-\widehat{\bU}^{(J)}\mathbf{x}_{i}\right)^{2},
\end{align*}

which leads to 
\[
\E_{v}(\E_{\nu}\widehat{\bU}^{(J)}\mathbf{x}_{i}-\widehat{\bU}^{(J)}\mathbf{x}_{i})^{2}\le\frac{1}{c_{1}}\E_{v}\left[L\left(\mathbf{x}_{i},\widehat{\bU}^{(J)}\mathbf{x}_{i}\right)-L\left(\mathbf{x}_{i},\E_{\nu}\widehat{\bU}^{(J)}\mathbf{x}_{i}\right)\right].
\]

We have  
\begin{align}
 \log \E_{v}\exp\{\lambda h(J)\}
\le & -\lambda \E_{\nu}L\left(\mathbf{x}_{i},\widehat{\bU}^{(J)}\mathbf{x}_{i}\right)\nonumber\\&+36\lambda^{2}pA^{2}(\exp(72e^{2}d_{1}^{2}\lambda^{2}pA^{2})+\frac{d_{4}}{2}\exp(36d_{3}\lambda^{2}pA^{2}))\nonumber\\
 & \cdot\frac{1}{c_{1}}\E_{v}\left[L\left(\mathbf{x}_{i},\widehat{\bU}^{(J)}\mathbf{x}_{i}\right)-L\left(\mathbf{x}_{i},\E_{\nu}\widehat{\bU}^{(J)}\mathbf{x}_{i}\right)\right].\label{eq:}
\end{align}
For any sub-exponential variable $Z$ we have $\E\exp\left(t|Z|\right)\le2\exp(2e^{2}d_{1}^{2}t^{2})$
and $\E\left[Z^{2}\exp\left\{ t|Z|\right\} \right]\le d_{4}\exp(d_{3}t^{2})$
for any $|t|\le d_{2}/d_{1}$, where $d_{1}=\sup_{k\ge1}k^{-1}(\E|Z|^{k})^{1/k}$,
$d_{2}=1/(4e)$, $d_{3}=8e^{4}d_{1}^{2}$, $d_{4}=16\sqrt{2}d_{1}^{2}$.
It follows that 
\[
\E\left[|Z|\exp\left\{ t|Z|\right\} \right]\le \E\left[\frac{1+|Z|^{2}}{2}\exp\left\{ t|Z|\right\} \right]\le\exp(2e^{2}d_{1}^{2}t^{2})+\frac{d_{4}\exp(d_{3}t^{2})}{2}.
\]
Since $|\mathbf{x}_{i}|^{2}$ follows a non-central chi-square distribution
and $\mathbf{x}_{i}$ follows a Gaussian distribution with mean $\mathbf{0}$,
we have $|\mathbf{x}_{i}|^{2}$ is sub-exponential. Thus the above
inequality holds if we replace $Z$ with $|\mathbf{x}_{i}|^{2}$. 

Taking expectation $\E_{\mathbf{x}_{i}|\mathbf{x}_{1},...,\mathbf{x}_{i-1},J}$
(denoted as $\E_{i}$ for convenience) of both sides of the inequality
\eqref{eq:}, for $6\lambda pA^{2}\le d_{2}/d_{1}$ we have 
\begin{align*}
 & \E_{i}\log \E_{v}\exp\{\lambda h(J)\}\\
\le & -\lambda \E_{i}\E_{\nu}L\left(\mathbf{x}_{i},\widehat{\bU}^{(J)}\mathbf{x}_{i}\right)+36\lambda^{2}pA^{2}\E_{i}(\exp(72e^{2}d_{1}^{2}\lambda^{2}pA^{2})+\frac{d_{4}}{2}\exp(36d_{3}\lambda^{2}pA^{2}))\\
 & \cdot\frac{1}{c_{1}}\E_{v}\left[L\left(\mathbf{x}_{i},\widehat{\bU}^{(J)}\mathbf{x}_{i}\right)-L\left(\mathbf{x}_{i},\E_{\nu}\widehat{\bU}^{(J)}\mathbf{x}_{i}\right)\right].
\end{align*}
 By choosing a small enough $\lambda$ such that 
\[
36\lambda^{2}pA^{2}\E_{i}(\exp(72e^{2}d_{1}^{2}\lambda^{2}pA^{2})+\frac{d_{4}}{2}\exp(36d_{3}\lambda^{2}pA^{2}))\text{\ensuremath{\le\lambda}},
\]
 which gives 
\begin{align*}
\E_{i}\log \E_{v}e^{\lambda h(J)}&\le-\lambda \E_{i}\E_{\nu}L\left(\mathbf{x}_{i},\widehat{\bU}^{(J)}\mathbf{x}_{i}\right)+\lambda \E_{i}\E_{v}\left[L\left(\mathbf{x}_{i},\widehat{\bU}^{(J)}\mathbf{x}_{i}\right)-L\left(\mathbf{x}_{i},\E_{\nu}\widehat{\bU}^{(J)}\mathbf{x}_{i}\right)\right]\\
&=-\lambda \E_{i}\left[L\left(\mathbf{x}_{i},\E_{\nu}\widehat{\bU}^{(J)}\mathbf{x}_{i}\right)\right].
\end{align*}
 We have 
\begin{align*}
\E\log(1/q_{\ntr}^{n})= & -\sum_{i=\ntr+1}^{n}\E\log\left(\sum_{k=1}^{K}w_{k,i}\exp\left\{ -\lambda l_{k}(\mathbf{x}_{i})\right\} \right)\\
= & -\sum_{i=\ntr+1}^{n}\E\E_{i}\log\left(\E_{v}\exp\left\{ -\lambda L(\mathbf{x}_{i},\widehat{\bU}^{(J)}\mathbf{x}_{i})\right\} \right)\\
\end{align*}
 We also have for each $k\ge1$, 
\begin{align*}
\E\log(1/q_{\ntr}^{n})\le & \log(1/\pi_{k})+\lambda\sum_{i=\ntr+1}^{n}\E L\left(\mathbf{x}_{i},\widehat{\bU}^{(J)}\mathbf{x}_{i}\right)=\log(1/\pi_{k})+\lambda(n-\ntr)\E L\left(\mathbf{X},\widehat{\bU}^{(J)}(\mathbf{X})\right).
\end{align*}
 Thus, by convexity, we have 
\begin{align*}
\E L\left(\mathbf{x}_{i},\tilde{\bU}\mathbf{x}_{i}\right)\le & \frac{1}{n-\ntr}\sum_{i=\ntr+1}^{n}\E L\left(\mathbf{x}_{i},\E_{v}\widehat{\bU}^{(J)}\mathbf{x}_{i}\right)\le\frac{\E\log(1/q_{\ntr}^{n})}{\lambda(n-\ntr)}\\
\le&\frac{\log(1/\pi_{k})}{\lambda(n-\ntr)}+\E L\left(\mathbf{X},\widehat{\bU}^{(J)}(\mathbf{X})\right),
\end{align*}
 where the desired result follows. 
\end{proof}

\subsection{Proof of Theorem 2}
\begin{proof}
Denote by ${\cal E}^{*}\backslash{\cal E}^{(k)}$ the set of variables
contained in ${\cal E}^{*}$ but not in ${\cal E}^{(k)}$. Since 
\begin{eqnarray*}
\dfrac{\sum_{k=1}^{K}w_{k}|{\cal E}^{*}\backslash{\cal E}^{(k)}|}{|{\cal E}^{*}|} & = & \dfrac{\sum_{k=1}^{K}w_{k}\sum_{(i,j)\in{\cal E}^{*}}\I((i,j)\notin{\cal E}^{(k)})}{|{\cal E}^{*}|}\\
 & = & \dfrac{\sum_{(i,j)\in{\cal E}^{*}}\sum_{k=1}^{K}w_{k}\I((i,j)\notin{\cal E}^{(k)})}{|{\cal E}^{*}|}\\
 & = & \dfrac{\sum_{(i,j)\in{\cal E}^{*}}\sum_{k=1}^{K}w_{k}(1-\I((i,j)\in{\cal E}^{(k)}))}{|{\cal E}^{*}|}\\
 & = & \dfrac{\sum_{(i,j)\in{\cal E}^{*}}(1-s_{ij})}{|{\cal E}^{*}|}.
\end{eqnarray*}
 and by the definition of weak consistency, 
\[
0\leq\dfrac{\sum_{k=1}^{K}w_{k}|{\cal E}^{*}\backslash{\cal E}^{(k)}|}{|{\cal E}^{*}|}\leq\dfrac{\sum_{k=1}^{K}w_{k}|{\cal E}^{(k)}\nabla{\cal E}^{*}|}{|{\cal E}^{*}|}{\displaystyle \ \overset{p}{\to}\ }0.
\]
 Hence, 
\begin{eqnarray*}
\dfrac{\sum_{(i,j)\in{\cal E}^{*}}(1-s_{ij})}{|{\cal E}^{*}|} & {\displaystyle \overset{p}{\to}} & 0.
\end{eqnarray*}
 On the other hand, 
\begin{eqnarray*}
\dfrac{\sum_{(i,j)\notin{\cal E}^{*}}s_{ij}}{|{\cal E}^{*}|} & = & \dfrac{\sum_{(i,j)\notin{\cal E}^{*}}\sum_{k=1}^{K}w_{k}\I((i,j)\in{\cal E}^{(k)})}{|{\cal E}^{*}|}\\
 & = & \dfrac{\sum_{k=1}^{K}w_{k}\sum_{(i,j)\notin{\cal E}^{*}}\I((i,j)\in{\cal E}^{(k)})}{|{\cal E}^{*}|}\\
 & = & \dfrac{\sum_{k=1}^{K}w_{k}|{\cal E}^{(k)}\backslash{\cal E}^{*}|}{|{\cal E}^{*}|}\\
 & \leq & \dfrac{\sum_{k=1}^{K}w_{k}|{\cal E}^{(k)}\nabla{\cal E}^{*}|}{|{\cal E}^{*}|}\overset{p}{\to}0.
\end{eqnarray*}
\end{proof}
\subsection{Proof of Theorem 3}
\begin{proof}
We prove by contradiction. Assume $\frac{|\{(i,j)\in{\cal E}^{*}:s_{ij}\le c\}|}{|{\cal E}^{*}|}$
does not converge to $0$ in probability as $n$ tends to infinity
($|{\cal E}^{*}|$ may or may not depend on $n$), then there exists
a positive constant $\epsilon_{0}$, such that $P\left(\frac{|\{(i,j)\in{\cal E}^{*}:s_{ij}\le c\}|}{|{\cal E}^{*}|}\geq\epsilon_{0}\right)$
does not converge to 0. On the other hand,
\begin{eqnarray*}
\dfrac{\sum_{(i,j)\in{\cal E}^{*}}(1-s_{ij})}{|{\cal E}^{*}|} & = & \dfrac{\sum_{(i,j)\in{\cal E}^{*},s_{ij}\leq c}(1-s_{ij})}{|{\cal E}^{*}|}+\dfrac{\sum_{(i,j)\in{\cal E}^{*},s_{ij}>c}(1-s_{ij})}{|{\cal E}^{*}|}\\
 & \geq & \dfrac{\sum_{(i,j)\in{\cal E}^{*},s_{ij}\leq c}(1-s_{ij})}{|{\cal E}^{*}|}\\
 & \geq & \dfrac{\sum_{(i,j)\in{\cal E}^{*},s_{ij}\leq c}(1-c)}{|{\cal E}^{*}|}\\
 & = & (1-c)\frac{|\{(i,j)\in{\cal E}^{*}:s_{ij}\le c\}|}{|{\cal E}^{*}|}.
\end{eqnarray*}
 So we have $P\left(\dfrac{\sum_{(i,j)\in{\cal E}^{*}}(1-s_{ij})}{|{\cal E}^{*}|}\geq(1-c)\epsilon_{0}\right)\geq P(\frac{|\{(i,j)\in{\cal E}^{*}:s_{ij}\le c\}|}{|{\cal E}^{*}|}\geq\epsilon_{0})$,
which does not converge to 0. But this contradicts with Theorem \ref{(a)-Assume-that}.
Hence, we have $\frac{|\{(i,j)\in{\cal E}^{*}:s_{ij}\le c\}|}{|{\cal E}^{*}|}{\displaystyle \ \overset{p}{\to}\ }0.$
Similarly, we can prove $\frac{|\{(i,j)\notin{\cal E}^{*}:s_{ij}>c\}|}{|{\cal E}^{*}|}{\displaystyle \ \overset{p}{\to}\ }0$.
\end{proof}
\section{Heteroscedastic data-splitting procedure with $\gamma$-tempered dispersion penalty}\label{app:hetero-algo}
\begin{algorithm}[H]
{\bf Input:} Candidate DAGs $\{(\mathcal{V}_{k},\mathcal{E}_{k})\}_{k=1}^{K}$, with optional estimated adjacency matrices $\{\widehat{\mathbf{U}}^{(k)}\}_{k=1}^{K}$ and node-specific noise standard deviations $\{\widehat{\boldsymbol{\sigma}}_{k}=(\widehat{\sigma}_{k,1},\ldots,\widehat{\sigma}_{k,p})\}_{k=1}^{K}$; learning rate $\lambda>0$; dispersion weight $\gamma\in[0,1]$; prior weights $\{\pi_{k}\}_{k=1}^{K}$ with $\pi_{k}>0$ and $\sum_{k}\pi_{k}=1$ (default: $\pi_{k}=1/K$).
\begin{itemize}
\item {\bf Step 1: Data splitting.} Randomly split the data into equal halves $\mathcal{S}_{1}$ and $\mathcal{S}_{2}$.
\item {\bf Step 2: Model fitting.} For each $k\in\{1,\ldots,K\}$, fit the $k$-th candidate on $\mathcal{S}_{1}$ to obtain $\widehat{\mathbf{U}}^{(k)}$ and node-specific noise estimates $\widehat{\sigma}_{k,1},\ldots,\widehat{\sigma}_{k,p}$.
\item {\bf Step 3: $\gamma$-tempered heteroscedastic scoring.} For each $k$, compute the validation score
\[
L_{k}=\sum_{j=1}^{p}\left[\frac{1}{2\widehat{\sigma}_{k,j}^{2}}\sum_{i\in\mathcal{S}_{2}}\!\left(x_{ij}-\sum_{k'\neq j}\widehat{U}^{(k)}_{jk'}\,x_{ik'}\right)^{\!2} \;+\;\frac{\gamma\,|\mathcal{S}_{2}|}{2}\log\widehat{\sigma}_{k,j}^{2}\right],
\]
and the weight
\[
w^{k}=\frac{\pi_{k}\exp\!\left\{-\lambda\,L_{k}\right\}}{\sum_{k'=1}^{K}\pi_{k'}\exp\!\left\{-\lambda\,L_{k'}\right\}}.
\]
\item {\bf Step 4: Averaging across splits.} Repeat Steps~1--3 a total of $L$ times and set $w_{k}=\frac{1}{L}\sum_{l=1}^{L}w_{l}^{k}$.
\end{itemize}
\noindent\textbf{Output:} The weight vector $\mathbf{w}=(w_{1},\ldots,w_{K})$ and the aggregated adjacency matrix $\widetilde{\mathbf{U}}=\sum_{k=1}^{K}w_{k}\widehat{\mathbf{U}}^{(k)}$.
\caption{Heteroscedastic data-splitting procedure with $\gamma$-tempered dispersion penalty.\label{alg:ARM-weighting-method-1-hetero}}
\end{algorithm}

The dispersion weight $\gamma\in[0,1]$ controls how heavily the heteroscedastic log-variance term $\log\widehat{\sigma}_{k,j}^{2}$ contributes to the validation score. When $\gamma=1$, $\mathcal{S}_{k}$ is the exact negative Gaussian log-likelihood (up to additive constants), so the procedure reduces to standard likelihood-based weighting under a heteroscedastic model with node-specific variances $\sigma_{j}^{2}$. When $\gamma=0$, the $\log\widehat{\sigma}_{k,j}^{2}$ term drops out entirely and $\mathcal{S}_{k}$ becomes a pure $\sigma_{j}^{-2}$-weighted sum of squared prediction errors on the validation set, making the score robust to noisy or systematically biased variance estimates. Intermediate values $\gamma\in(0,1)$ interpolate between these regimes. In the homoscedastic case $\sigma_{k,1}=\cdots=\sigma_{k,p}=\widehat{\sigma}_{k}$, the dispersion term reduces to a global constant per candidate and Algorithm~\ref{alg:ARM-weighting-method-1-hetero} recovers the homoscedastic procedure of Section~\ref{sec:method} regardless of $\gamma$.

\bibliographystyle{apalike}
\bibliography{reference}

\newpage
\begin{center}
	{\large\bf Supplementary Material for ``Stable Causal Discovery via Directed Acyclic Graph Aggregation''}
\end{center}

 \setcounter{section}{0}
 \renewcommand{\thesection}{S\arabic{section}}  
 
 \setcounter{table}{0}
 \renewcommand{\thetable}{S\arabic{table}}  
 \setcounter{figure}{0}
 \renewcommand{\thefigure}{S\arabic{figure}}  

\section{Additional Simulation Studies}

Additional simulation results are presented in this section. 

The data generation procedure follows that of the main paper. We consider the same three adjacency-matrix structures studied therein: random, hub, and chain structures. For each adjacency-matrix structure, we examine four simulation settings obtained from combinations of signal strength and dimensionality. Specifically, we consider a \textbf{strong-signal} regime in which nonzero entries of $\bm{U}$ are independently generated from $\text{Unif}([-1.5,-0.5]\cup[0.5,1.5])$, and a \textbf{weak-signal} regime in which nonzero entries are generated from $\text{Unif}([-0.5,-0.1]\cup[0.1,0.5])$.
We further consider a \textbf{lower-dimensional} setting with $p=25$ and $n=100$, and a \textbf{higher-dimensional} setting with (p=100) and (n=250). Representative DAGs for the three structures, across four settings each, are displayed in Figures~\ref{fig:random_all}--\ref{fig:chain_all}.

We employ the same candidate DAG estimators as those used in the main paper. Performance is evaluated using seven metrics, reported as means with standard errors (in parentheses) over 200 replications: the average squared estimation error $|\widehat{\mathbf{U}}-\mathbf{U}|_{F}^{2}$, where $|\cdot|_F$ denotes the Frobenius norm; the false positive rate (FPR), false discovery rate (FDR), and false negative rate (FNR) for edge recovery; the Matthews correlation coefficient (MCC); the Kullback--Leibler divergence (KL); and the structural Hamming distance (SHD). We additionally report the average model averaging weights assigned to each candidate estimator by DAGgr.

\subsection{Additional Results for Random Adjacency Matrix Structures}
\begin{figure}[H]
\centering
\subfloat[Strong-signal (\(p=25\))]{ \label{fig:random_strong_p25}
	\includegraphics[width=0.23\textwidth]{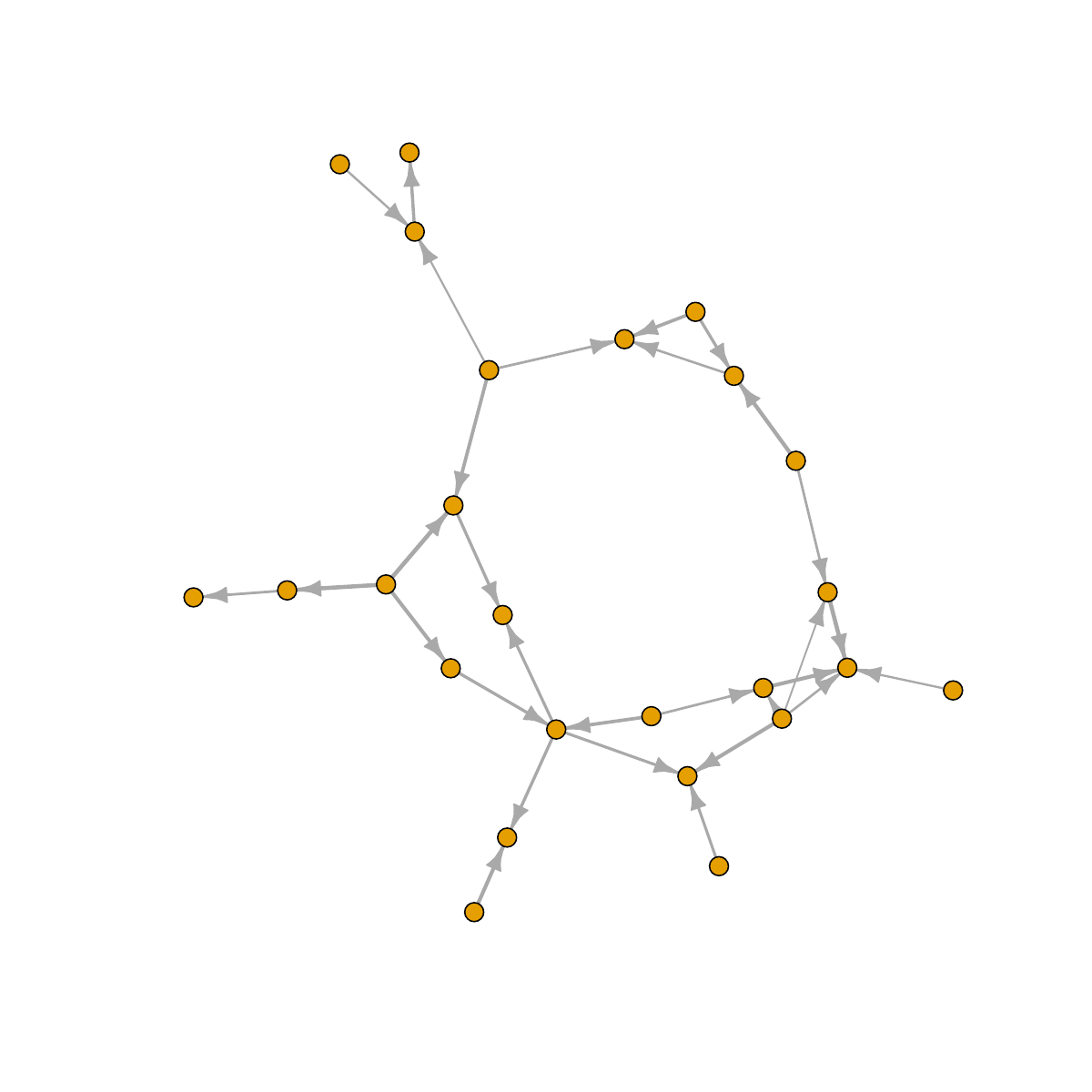}}\ \ 
\subfloat[Weak-signal (\(p=25\))]{ \label{fig:random_weak_p25}
	\includegraphics[width=0.23\textwidth]{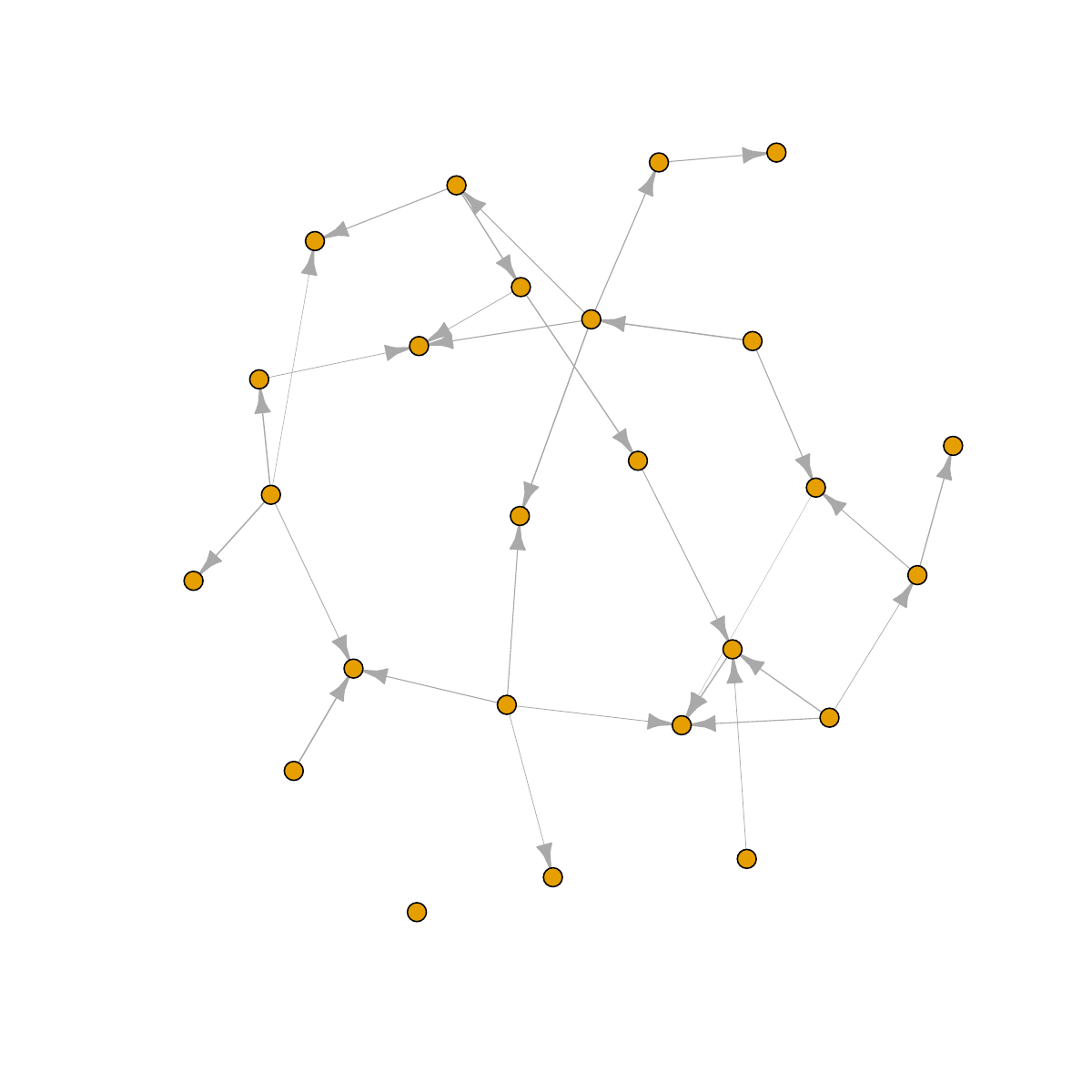}} \ \ 
\subfloat[Strong-signal (\(p=100\))]{ \label{fig:random_strong_p100}
	\includegraphics[width=0.23\textwidth]{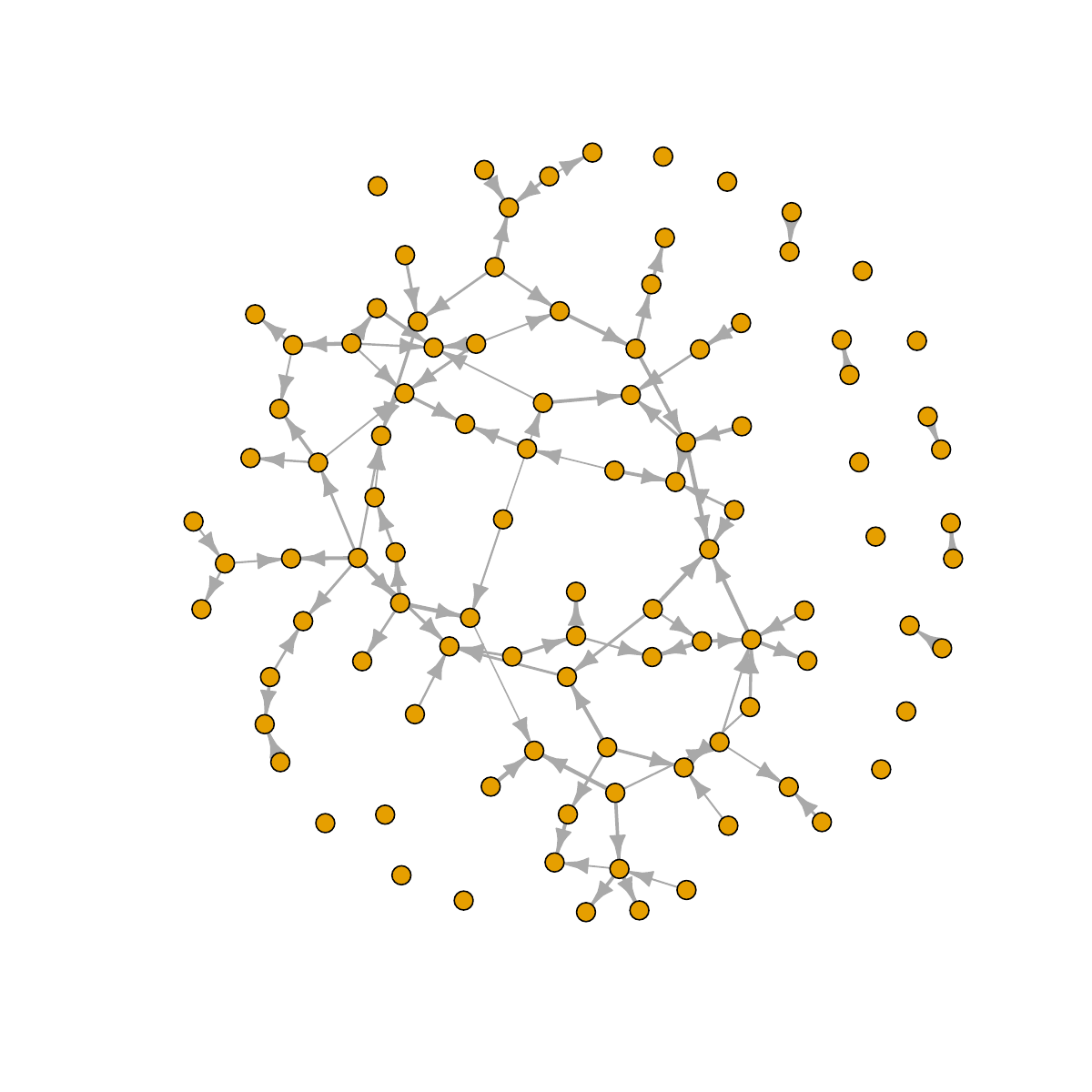}}\ \ 
\subfloat[Weak-signal (\(p=100\))]{ \label{fig:random_weak_p100}
	\includegraphics[width=0.23\textwidth]{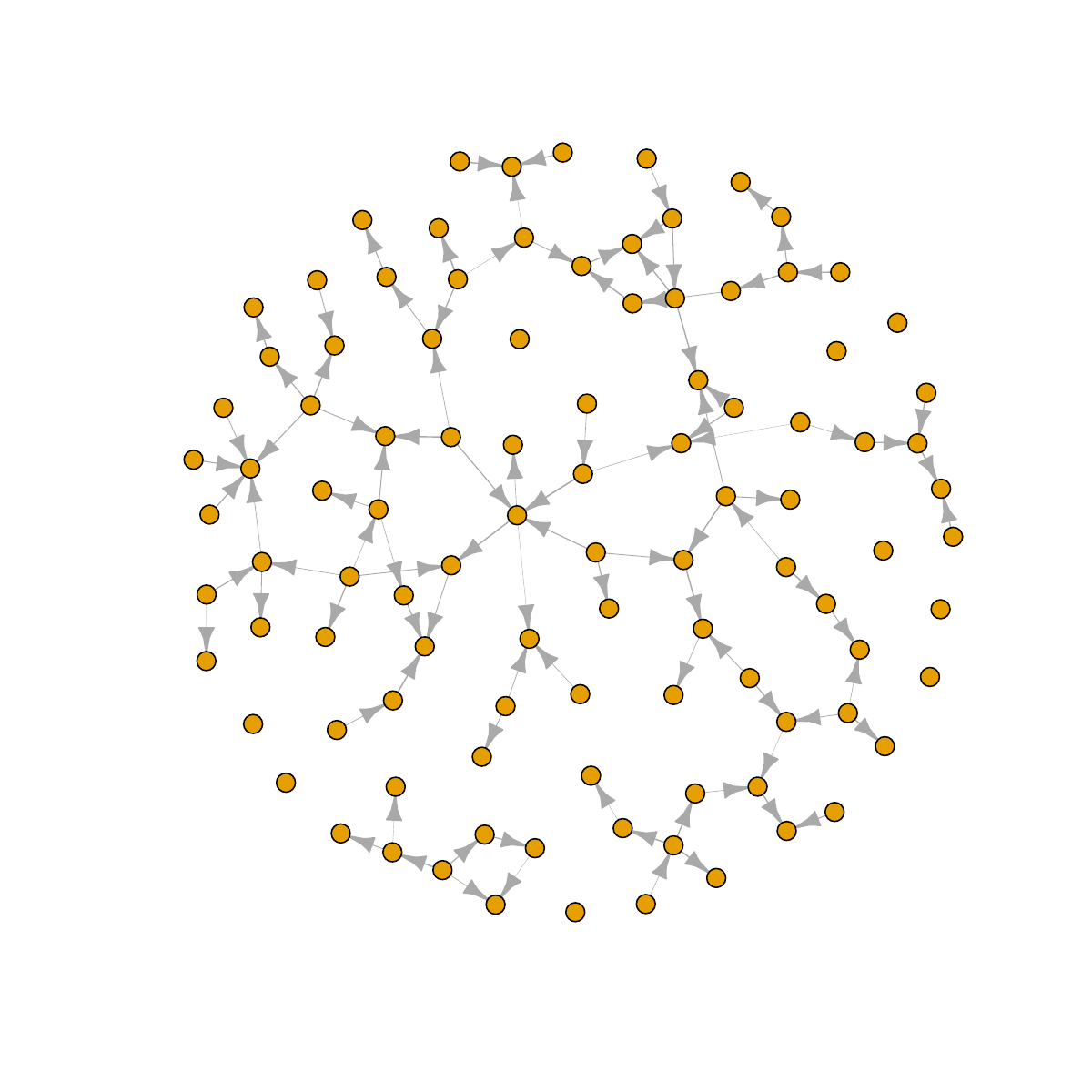}}
\caption{Random adjacency matrix structures used in the Table~\ref{tab:random_strong_p25}-Table~\ref{tab:random_weak_p100}.}
\label{fig:random_all}
\end{figure}

\begin{table}[H]
\centering
\caption{Performance comparison under the random adjacency matrix structure in the strong-signal setting with $p=25$ and $n=100$}
\resizebox{1.05\textwidth}{!}{	\begin{tabular}{ccccccccc}
		\hline  & Average weights  &Average squared error & FPR & FDR & FNR& MCC & KL & SHD\\\hline 
		DAGgr-raw & -  &\textbf{1.0666} (0.0451) & 6.89\% (0.38\%) & 32.19\% (1.78\%) & \textbf{0.02\%} (0.02\%) & 0.46 (0.00) & 33.56 (0.12) & 38.7 (2.1)\\ 
		DAGgr-pruned (0.5) & -  &\textbf{1.1410} (0.0504) & 0.23\% (0.02\%) & 3.93\% (0.25\%) & \textbf{1.98\%} (0.15\%) & 0.92 (0.00) & 33.34 (0.13) & 1.6 (0.1)\\ 
		DAGgr-pruned (0.8) & -  &1.4562 (0.0790) & 0.10\% (0.01\%) & 1.68\% (0.21\%) & 4.30\% (0.29\%) & 0.89 (0.00) & 33.03 (0.14) & 1.7 (0.1)\\ 
		DAGgr-pruned (1-1/p) & -  &2.0858 (0.1287) & \textbf{0.05\%} (0.01\%) & \textbf{0.91\%} (0.17\%) & 7.08\% (0.48\%) & 0.86 (0.00) & \textbf{32.45} (0.17) & 2.4 (0.1)\\ 
		DAGBag & -  &- & 0.08\% (0.01\%) & 1.48\% (0.15\%) & 3.72\% (0.19\%) & \textbf{0.94} (0.00) & - & \textbf{1.3} (0.1)\\
		\hline\hline
		GES & 0.03\%  &12.2422 (0.2846) & 3.35\% (0.06\%) & 35.39\% (0.61\%) & 24.97\% (0.52\%) & 0.54 (0.00) & 31.66 (0.20) & 20.4 (0.3)\\ 
		PC-algorithm & 0\%  &43.3443 (0.4480) & 2.17\% (0.04\%) & 42.64\% (0.82\%) & 57.93\% (0.75\%) & 0.40 (0.01) & \textbf{16.48} (0.33) & 18.3 (0.2)\\ 
		Hill climbing & 2.07\%  &6.7060 (0.2362) & 2.48\% (0.05\%) & 26.19\% (0.54\%) & 11.90\% (0.35\%) & 0.64 (0.00) & 36.14 (0.15) & 15.4 (0.3)\\ 
		Max-min hill climbing & 0\%  &8.6882 (0.1682) & 0.41\% (0.02\%) & 8.39\% (0.32\%) & 24.08\% (0.37\%) & 0.78 (0.00) & 31.95 (0.16) & 8.1 (0.1)\\ 
		Constrained-MLE & 3.24\%  &10.4168 (0.2967) & 11.52\% (0.06\%) & 62.52\% (0.34\%) & 10.35\% (0.31\%) & 0.44 (0.00) & 35.94 (0.12) & 65.8 (0.4)\\ 
		clrdag & 61.01\%  &\textbf{1.5213} (0.0724) & \textbf{0.28\%} (0.02\%) & \textbf{4.69\%} (0.25\%) & \textbf{3.02\%} (0.18\%) & \textbf{0.91} (0.00) & 35.23 (0.09) & \textbf{2.1} (0.1)\\ 
		NOTEARS-linear & 16.71\%  &\textbf{1.8790} (0.1027) & \textbf{0.09\%} (0.01\%) & \textbf{1.69\%} (0.20\%) & \textbf{3.13\%} (0.22\%) & \textbf{0.95} (0.00) & \textbf{29.26} (0.09) & \textbf{1.1} (0.1)\\ 
		DAGMA-linear & 16.93\%  &\textbf{1.2125} (0.0674) & \textbf{0.34\%} (0.02\%) & \textbf{5.27\%} (0.30\%) & \textbf{1.00\%} (0.13\%) & \textbf{0.89} (0.00) & 33.37 (0.10) & \textbf{2.1} (0.1)\\ 
		NOTIME & 0\%  &29.3907 (0.3662) & 3.98\% (0.06\%) & 47.22\% (0.73\%) & 55.77\% (0.67\%) & 0.32 (0.01) & \textbf{23.83} (0.32) & 31.4 (0.4)\\
		\hline
\end{tabular} }\label{tab:random_strong_p25}
\end{table}  

\begin{table}[H]
\centering
\caption{Performance comparison under the random adjacency matrix structure in the weak-signal setting with $p=25$ and $n=100$} 
\resizebox{1.05\textwidth}{!}{	\begin{tabular}{ccccccccc} 
		\hline  & Average weights  &Average squared error & FPR & FDR & FNR& MCC & KL & SHD\\\hline
		DAGgr-raw & -  &\textbf{1.6353} (0.0236) & 5.64\% (0.09\%) & 44.01\% (0.67\%) & \textbf{20.73\%} (0.47\%) & 0.47 (0.00) & 2.84 (0.04) & 33.8 (0.4)\\ 
		DAGgr-pruned (0.5) & -  &\textbf{1.8858} (0.0289) & 0.87\% (0.05\%) & 11.87\% (0.69\%) & \textbf{47.48\%} (0.70\%) & 0.43 (0.01) & 2.20 (0.04) & 17.0 (0.2)\\ 
		DAGgr-pruned (0.8) & -  &2.4253 (0.0320) & 0.19\% (0.03\%) & 2.78\% (0.40\%) & 67.10\% (0.86\%) & 0.28 (0.01) & 1.68 (0.04) & 20.8 (0.2)\\ 
		DAGgr-pruned (1-1/p) & -  &2.7867 (0.0323) & \textbf{0.05\%} (0.01\%) & \textbf{1.48\%} (0.24\%) & 76.88\% (0.79\%) & 0.28 (0.01) & \textbf{1.28} (0.04) & 23.3 (0.2)\\ 
		DAGBag & -  &- & 0.31\% (0.01\%) & 8.14\% (0.38\%) & 52.75\% (0.54\%) & \textbf{0.54} (0.01) & - & \textbf{16.4} (0.2)\\
		\hline\hline
		GES & 8.59\%  &3.4652 (0.0577) & 3.62\% (0.05\%) & 43.90\% (0.66\%) & 48.17\% (0.76\%) & 0.38 (0.01) & 4.42 (0.04) & 27.0 (0.4)\\ 
		PC-algorithm & 27.11\%  &\textbf{2.2203} (0.0424) & 1.07\% (0.03\%) & 20.30\% (0.49\%) & \textbf{42.80\%} (0.61\%) & \textbf{0.56} (0.01) & \textbf{2.46} (0.02) & \textbf{16.5} (0.2)\\ 
		Hill climbing & 14.07\%  &2.5146 (0.0414) & 2.96\% (0.05\%) & 37.53\% (0.59\%) & \textbf{36.87\%} (0.64\%) & \textbf{0.49} (0.01) & 4.57 (0.03) & 23.0 (0.3)\\ 
		Max-min hill climbing & 16.2\%  &\textbf{2.3361} (0.0343) & 1.36\% (0.03\%) & 25.88\% (0.56\%) & \textbf{47.95\%} (0.54\%) & \textbf{0.50} (0.01) & 3.61 (0.03) & \textbf{18.1} (0.2)\\ 
		Constrained-MLE & 4.82\%  &2.5286 (0.0406) & \textbf{0.64\%} (0.02\%) & \textbf{15.91\%} (0.58\%) & 59.70\% (0.63\%) & 0.44 (0.01) & 2.97 (0.03) & 19.0 (0.2)\\ 
		clrdag & 0.14\%  &3.8062 (0.0156) & \textbf{0.04\%} (0.01\%) & \textbf{4.20\%} (0.61\%) & 97.42\% (0.20\%) & 0.06 (0.00) & \textbf{0.39} (0.03) & 29.2 (0.1)\\ 
		NOTEARS-linear & 8.75\%  &2.5386 (0.0312) & \textbf{0.23\%} (0.01\%) & \textbf{8.83\%} (0.50\%) & 70.03\% (0.51\%) & 0.41 (0.01) & \textbf{1.64} (0.02) & 21.1 (0.2)\\ 
		DAGMA-linear & 18.61\%  &\textbf{2.4773} (0.0458) & 0.69\% (0.02\%) & 16.48\% (0.57\%) & 56.53\% (0.65\%) & 0.47 (0.01) & 3.05 (0.03) & \textbf{18.2} (0.2)\\ 
		NOTIME & 1.7\%  &2.7582 (0.0412) & 0.85\% (0.03\%) & 20.21\% (0.62\%) & 60.47\% (0.62\%) & 0.43 (0.01) & 2.96 (0.04) & 19.6 (0.2)\\
		\hline
\end{tabular} }\label{tab:random_weak_p25}
\end{table}

\begin{table}[H]
\centering
\caption{Performance comparison under the random adjacency matrix structure in the strong-signal setting with $p=100$ and $n=250$}
\resizebox{1.05\textwidth}{!}{	\begin{tabular}{ccccccccc}
		\hline  &  Average weights &Average squared error & FPR & FDR & FNR& MCC & KL & SHD\\\hline
		DAGgr-raw &  - &\textbf{0.9955} (0.0427) & 0.01\% (0.00\%) & 0.65\% (0.08\%) & \textbf{0.35\%} (0.05\%) & 0.97 (0.00) & 99.36 (0.11) & 0.8 (0.1)\\
		DAGgr-pruned (0.5) &  - &1.0006 (0.0432) & 0.01\% (0.00\%) & 0.59\% (0.08\%) & \textbf{0.47\%} (0.05\%) & \textbf{0.98} (0.00) & 99.35 (0.11) & \textbf{0.7} (0.1)\\
		DAGgr-pruned (0.8) &  - &1.0003 (0.0432) & 0.01\% (0.00\%) & 0.59\% (0.08\%) & \textbf{0.47\%} (0.05\%) & \textbf{0.98} (0.00) & 99.35 (0.11) & \textbf{0.7} (0.1)\\
		DAGgr-pruned (1-1/p) &  - &\textbf{0.9999} (0.0426) & \textbf{0.00\%} (0.00\%) & \textbf{0.46\%} (0.07\%) & 0.54\% (0.06\%) & \textbf{0.98} (0.00) & \textbf{99.32} (0.12) & \textbf{0.7} (0.1)\\
		DAGBag &  - &- & 0.02\% (0.00\%) & 1.63\% (0.09\%) & 0.62\% (0.05\%) & 0.96 (0.00) & -& 2.0 (0.1)\\
		\hline\hline
		GES &  0\% &23.5622 (0.3427) & 1.31\% (0.01\%) & 54.38\% (0.31\%) & 13.62\% (0.19\%) & 0.54 (0.00) & 96.10 (0.23) & 129.0 (0.7)\\
		PC-algorithm &  0\% &128.8640 (0.7159) & 0.62\% (0.01\%) & 51.13\% (0.43\%) & 54.82\% (0.41\%) & 0.40 (0.00) & \textbf{49.55} (0.52) & 69.0 (0.5)\\
		Hill climbing &  0\% &\textbf{6.6274} (0.1991) & 1.19\% (0.01\%) & 48.13\% (0.30\%) & \textbf{2.11\%} (0.10\%) & 0.61 (0.00) & 107.31 (0.12) & 117.0 (0.7)\\
		Max-min hill climbing &  0\% &10.5572 (0.2049) & 0.15\% (0.00\%) & 13.00\% (0.24\%) & 9.97\% (0.17\%) & 0.82 (0.00) & 97.72 (0.19) & 23.7 (0.3)\\
		Constrained-MLE &  0\% &32.5609 (0.4812) & 5.15\% (0.01\%) & 79.07\% (0.23\%) & 14.24\% (0.19\%) & 0.32 (0.00) & 104.40 (0.18) & 505.3 (1.5)\\
		clrdag &  0\% &13.4153 (0.1272) & \textbf{0.12\%} (0.00\%) & \textbf{11.40\%} (0.15\%) & 13.41\% (0.15\%) & \textbf{0.85} (0.00) & 102.19 (0.14) & \textbf{21.3} (0.2)\\
		NOTEARS-linear &  2.17\% &\textbf{2.0380} (0.0325) & \textbf{0.00\%} (0.00\%) & \textbf{0.41\%} (0.04\%) & \textbf{1.14\%} (0.07\%) & \textbf{0.98} (0.00) & \textbf{87.95} (0.07) & \textbf{1.1} (0.1)\\
		DAGMA-linear &  97.83\% &\textbf{1.0005} (0.0444) & \textbf{0.01\%} (0.00\%) & \textbf{0.59\%} (0.08\%) & \textbf{0.47\%} (0.05\%) & \textbf{0.98} (0.00) & 99.62 (0.09) & \textbf{0.7} (0.1)\\
		NOTIME &  0\% &83.1511 (0.5445) & 0.85\% (0.01\%) & 59.02\% (0.39\%) & 56.29\% (0.35\%) & 0.35 (0.00) & \textbf{66.33} (0.40) & 106.2 (0.6)\\\hline
\end{tabular} }\label{tab:random_strong_p100}
\end{table}

\begin{table}[H]
\centering
\caption{Performance comparison under the random adjacency matrix structure in the weak-signal setting with $p=100$ and $n=250$}
\resizebox{1.05\textwidth}{!}{	\begin{tabular}{ccccccccc}
		\hline  &  Average weights &Average squared error & FPR & FDR & FNR& MCC & KL & SHD\\\hline
		DAGgr-raw &  - &\textbf{4.8229} (0.0883) & 1.52\% (0.02\%) & 50.06\% (0.66\%) & \textbf{15.70\%} (0.37\%) & 0.48 (0.00) & 10.55 (0.08) & 139.0 (1.3)\\
		DAGgr-pruned (0.5) &  - &\textbf{5.4875} (0.0953) & 0.75\% (0.02\%) & 42.05\% (1.06\%) & \textbf{37.00\%} (0.51\%) & 0.47 (0.00) & 8.86 (0.11) & 89.7 (1.4)\\
		DAGgr-pruned (0.8) &  - &5.9417 (0.0788) & 0.29\% (0.02\%) & 16.82\% (1.26\%) & 52.31\% (0.94\%) & 0.36 (0.01) & 7.26 (0.16) & 72.3 (1.0)\\
		DAGgr-pruned (1-1/p) &  - &6.1971 (0.0733) & \textbf{0.14\%} (0.02\%) & \textbf{8.58\%} (0.95\%) & 60.25\% (0.94\%) & 0.31 (0.01) & \textbf{6.14} (0.15) & 69.2 (0.7)\\
		DAGBag &  - &- & 0.10\% (0.00\%) & 11.75\% (0.27\%) & 39.31\% (0.32\%) & \textbf{0.65} (0.00) & -& \textbf{45.4} (0.3)\\\hline\hline
		GES &  0\% &9.6403 (0.0775) & 1.50\% (0.01\%) & 62.23\% (0.35\%) & 37.34\% (0.35\%) & 0.40 (0.00) & 14.29 (0.04) & 158.1 (0.9)\\
		PC-algorithm &  9.17\% &6.7732 (0.2751) & 0.77\% (0.01\%) & 47.84\% (0.78\%) & \textbf{30.08\%} (1.20\%) & \textbf{0.55} (0.01) & 8.90 (0.06) & 89.5 (1.2)\\
		Hill climbing &  0\% &\textbf{6.4061} (0.0609) & 1.36\% (0.01\%) & 55.18\% (0.34\%) & \textbf{24.03\%} (0.30\%) & 0.48 (0.00) & 14.72 (0.04) & 143.9 (0.9)\\
		Max-min hill climbing &  19.73\% &\textbf{4.9592} (0.0650) & 0.68\% (0.01\%) & 42.70\% (0.34\%) & \textbf{28.18\%} (0.32\%) & \textbf{0.57} (0.00) & 12.51 (0.03) & 80.5 (0.6)\\
		Constrained-MLE &  58.72\% &7.8422 (0.0703) & 0.92\% (0.01\%) & 52.07\% (0.39\%) & 39.43\% (0.34\%) & 0.45 (0.00) & 11.73 (0.03) & 103.9 (0.7)\\
		clrdag &  0\% &10.9579 (0.0112) & \textbf{0.00\%} (0.00\%) & \textbf{1.25\%} (0.39\%) & 99.06\% (0.06\%) & 0.05 (0.00) & \textbf{0.38} (0.03) & 97.1 (0.1)\\
		NOTEARS-linear &  0\% &6.4838 (0.0407) & \textbf{0.02\%} (0.00\%) & \textbf{4.41\%} (0.21\%) & 71.93\% (0.23\%) & 0.44 (0.00) & \textbf{4.05} (0.03) & \textbf{70.5} (0.2)\\
		DAGMA-linear &  12.38\% &\textbf{5.1858} (0.0532) & \textbf{0.06\%} (0.00\%) & \textbf{10.07\%} (0.29\%) & 58.11\% (0.26\%) & \textbf{0.55} (0.00) & \textbf{7.55} (0.04) & \textbf{57.0} (0.3)\\
		NOTIME &  0\% &6.5081 (0.0780) & 0.10\% (0.00\%) & 16.61\% (0.41\%) & 61.40\% (0.36\%) & 0.49 (0.00) & 7.78 (0.05) & \textbf{60.3} (0.4)\\\hline
\end{tabular} }\label{tab:random_weak_p100}
\end{table}

\subsection{Additional Results for Hub Adjacency Matrix Structures}
\begin{figure}[H]
\centering
\subfloat[Strong-signal (\(p=25\))]{ \label{fig:hub_strong_p25}
	\includegraphics[width=0.23\textwidth]{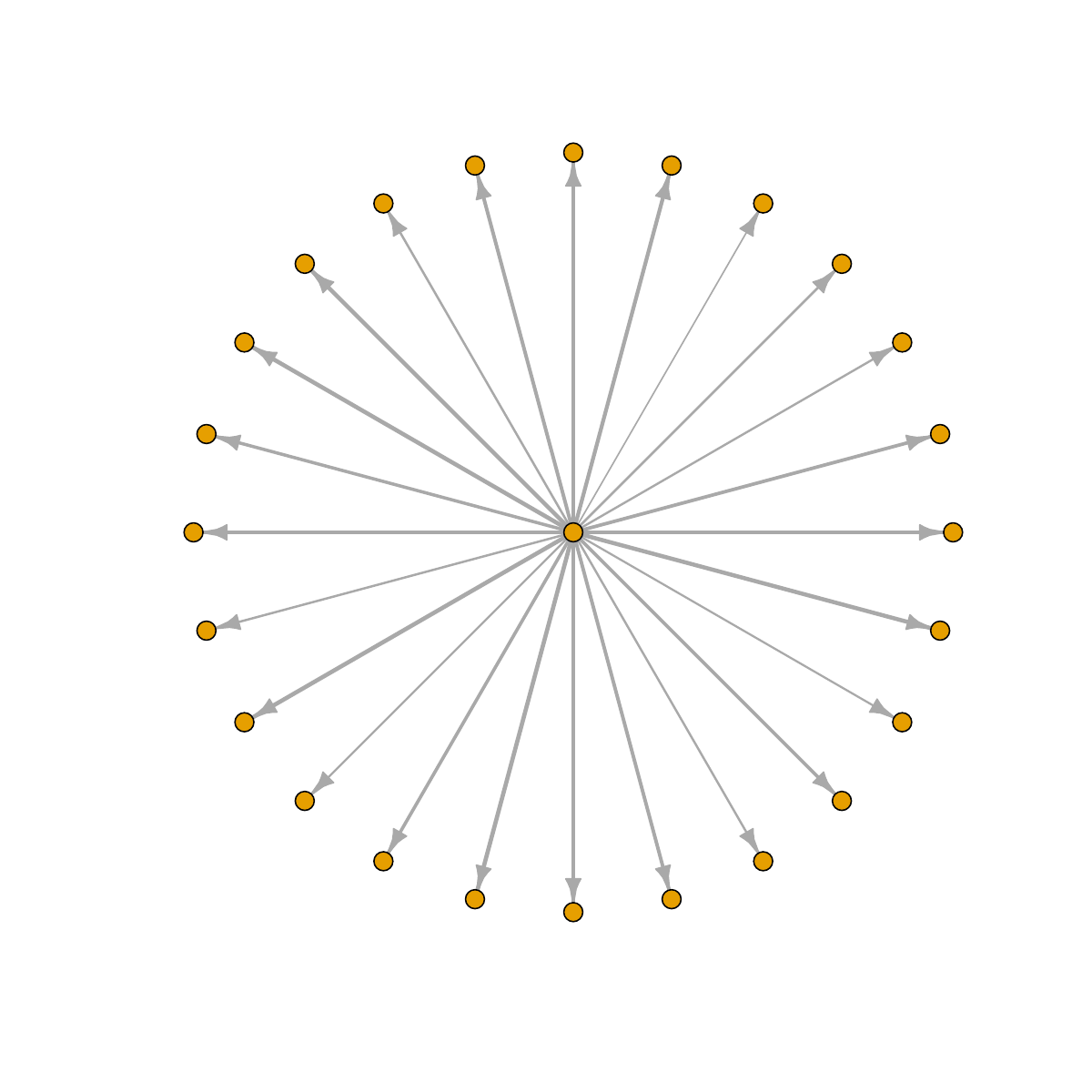}}\ \ 
\subfloat[Weak-signal (\(p=25\))]{ \label{fig:hub_weak_p25}
	\includegraphics[width=0.23\textwidth]{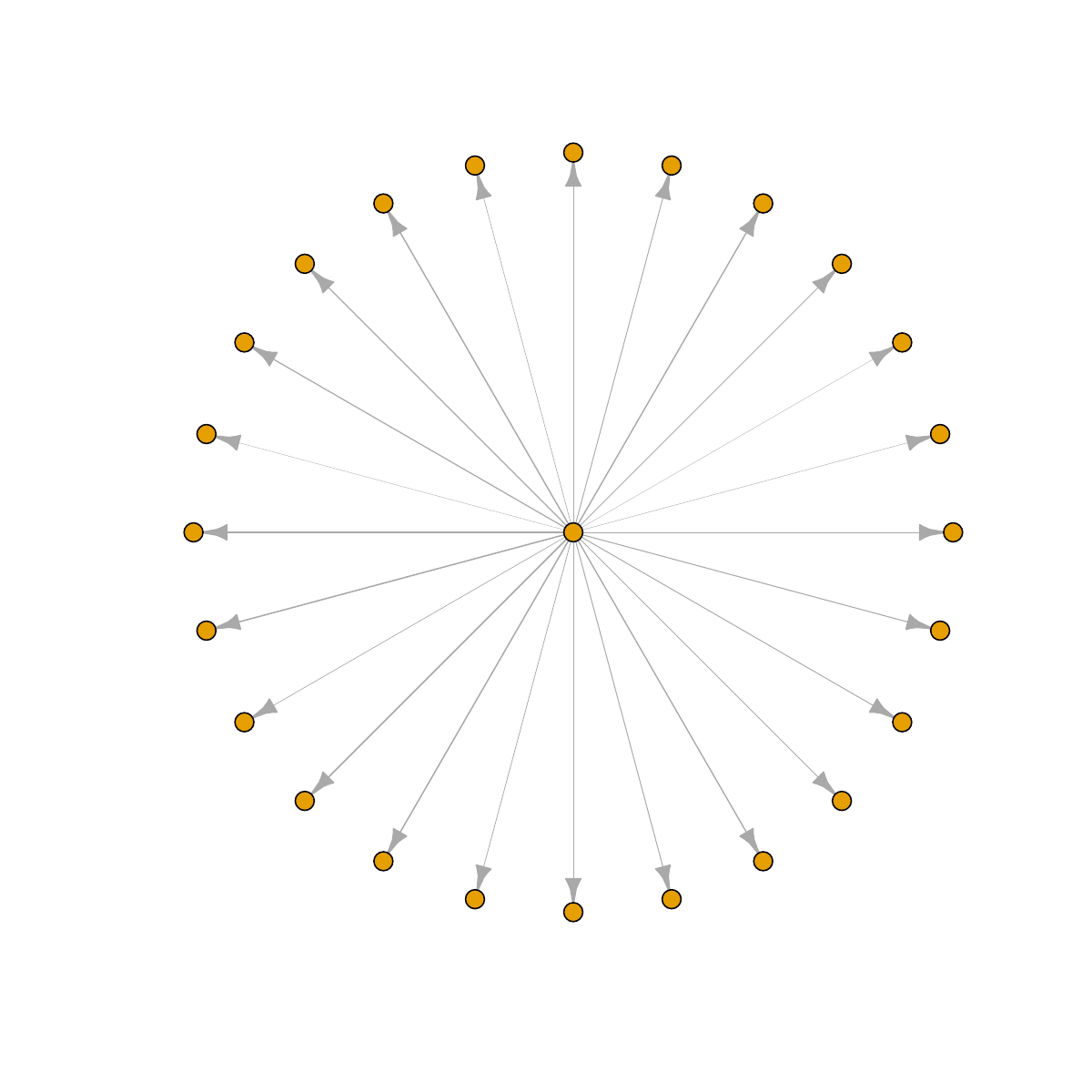}} \ \ 
\subfloat[Strong-signal (\(p=100\))]{ \label{fig:hub_strong_p100}
	\includegraphics[width=0.23\textwidth]{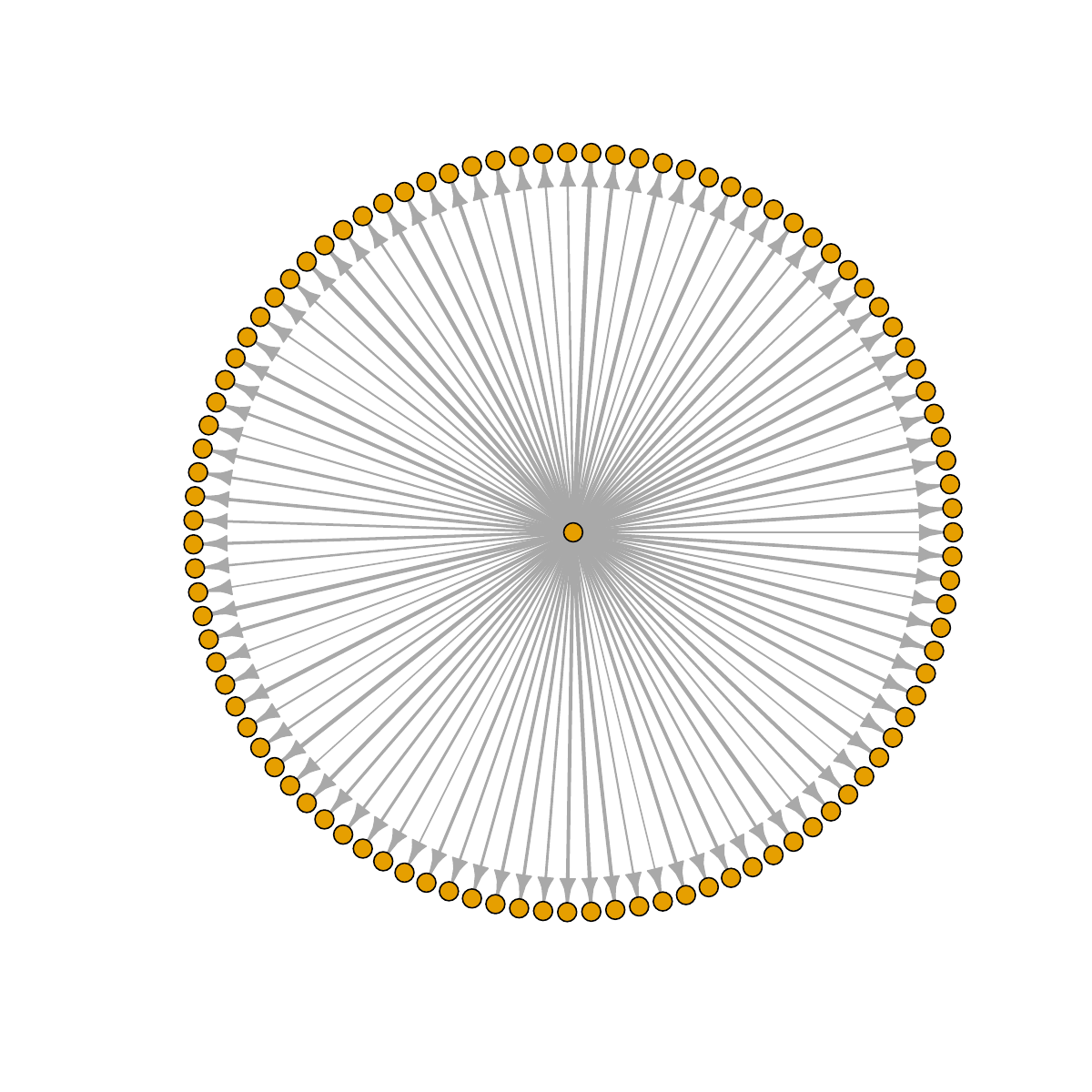}}\ \ 
\subfloat[Weak-signal (\(p=100\))]{ \label{fig:hub_weak_p100}
	\includegraphics[width=0.23\textwidth]{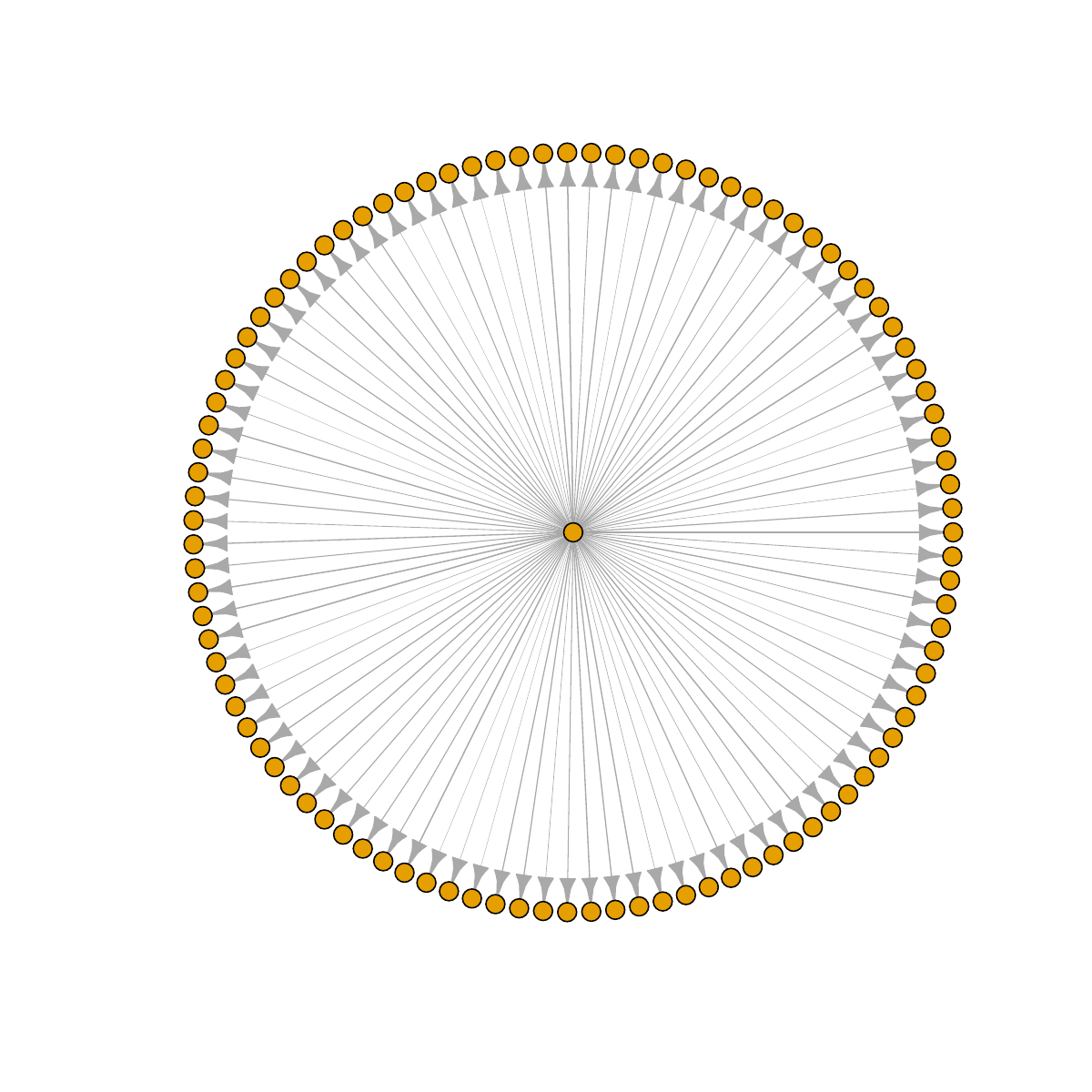}}
\caption{Hub adjacency matrix structures used in the Table~\ref{tab:hub_strong_p25}-Table~\ref{tab:hub_weak_p100}.}
\label{fig:hub_all}
\end{figure}

\begin{table}[H]
\centering
\caption{Performance comparison under the hub adjacency matrix structure in the strong-signal setting with $p=25$ and $n=100$}
\resizebox{1.05\textwidth}{!}{	\begin{tabular}{ccccccccc}
		\hline&  Average weights &Average squared error & FPR & FDR & FNR & MCC & KL & SHD\\\hline
		DAGgr-raw &  - &\textbf{0.5617} (0.0268) & 2.77\% (0.13\%) & 19.70\% (0.96\%) & \textbf{0.02\%} (0.02\%) & 0.53 (0.00) & 29.32 (0.15) & 14.3 (0.7)\\ 
		DAGgr-pruned (0.5) &  - &\textbf{0.5700} (0.0300) & 0.05\% (0.01\%) & 1.13\% (0.17\%) & \textbf{0.65\%} (0.15\%) & \textbf{0.93} (0.00) & 29.29 (0.15) & \textbf{0.5} (0.1)\\ 
		DAGgr-pruned (0.8) &  - &1.0227 (0.0971) & 0.01\% (0.00\%) & 0.20\% (0.06\%) & 4.04\% (0.47\%) & \textbf{0.93} (0.00) & 29.07 (0.17) & 1.0 (0.1)\\ 
		DAGgr-pruned (1-1/p) &  -&2.6224 (0.3667) & \textbf{0.00\%} (0.00\%) & \textbf{0.06\%} (0.03\%) & 11.00\% (1.22\%) & 0.85 (0.01) & \textbf{27.88} (0.38) & 2.7 (0.3)\\ 
		DAGBag &  - &- & 0.11\% (0.01\%) & 2.37\% (0.23\%) & 10.04\% (0.73\%) & 0.84 (0.01) &- & 3.0 (0.2)\\
		\hline\hline
		GES &  4.41\% &1.7833 (0.2313) & 2.19\% (0.06\%) & 25.77\% (0.67\%) & 2.02\% (0.70\%) & 0.67 (0.01) & 32.14 (0.17) & 12.9 (0.4)\\  
		PC-algorithm &  0\% &36.4323 (0.2565) & 1.60\% (0.04\%) & 44.00\% (1.12\%) & 82.35\% (0.45\%) & 0.17 (0.01) & \textbf{5.81} (0.25) & 27.2 (0.3)\\ 
		Hill climbing &  0.61\% &26.1085 (0.8788) & 4.42\% (0.11\%) & 45.50\% (1.10\%) & 79.12\% (2.79\%) & 0.11 (0.02) & 8.32 (0.90) & 36.5 (0.9)\\ 
		Max-min hill climbing &  0\% &31.0361 (0.1692) & 2.27\% (0.04\%) & 65.30\% (1.08\%) & 97.83\% (0.47\%) & 0.00 (0.01) & \textbf{1.98} (0.32) & 30.6 (0.2)\\ 
		Constrained-MLE &  39.36\% &\textbf{0.4168} (0.0190) & \textbf{0.03\%} (0.00\%) & \textbf{0.67\%} (0.11\%) & \textbf{0.75\%} (0.13\%) & \textbf{0.95} (0.00) & 31.48 (0.10) & \textbf{0.4} (0.0)\\ 
		clrdag &  16.3\% &2.5788 (0.1576) & \textbf{0.01\%} (0.00\%) & \textbf{0.12\%} (0.05\%) & 10.56\% (0.62\%) & 0.87 (0.01) & 28.75 (0.24) & 2.6 (0.1)\\ 
		NOTEARS-linear &  10.61\% &\textbf{1.0112} (0.0302) & \textbf{0.01\%} (0.00\%) & \textbf{0.26\%} (0.07\%) & \textbf{1.25\%} (0.14\%) & \textbf{0.97} (0.00) & 24.87 (0.11) & \textbf{0.4} (0.0)\\ 
		DAGMA-linear &  28.7\% &\textbf{0.5258} (0.0147) & 0.20\% (0.01\%) & 3.90\% (0.27\%) & \textbf{0.33\%} (0.08\%) & \textbf{0.90} (0.00) & 29.54 (0.10) & \textbf{1.2} (0.1)\\
		NOTIME &  0\% &33.8061 (0.2334) & 1.76\% (0.13\%) & 23.55\% (1.71\%) & 100.00\% (0.00\%) & -0.01 (0.00) & \textbf{2.55} (0.23) & 34.1 (0.7)\\
		\hline
\end{tabular} }\label{tab:hub_strong_p25}
\end{table}

\begin{table}[H]
\centering
\caption{Performance comparison under the hub adjacency matrix structure in the weak-signal setting with $p=25$ and $n=100$}
\resizebox{1.05\textwidth}{!}{	\begin{tabular}{ccccccccc} 
		\hline  &  Average weights &Average squared error & FPR & FDR & FNR& MCC & KL & SHD\\\hline 
		DAGgr-raw &  - &\textbf{1.1903} (0.0192) & 3.03\% (0.11\%) & 30.57\% (1.14\%) & \textbf{40.75\%} (0.81\%) & 0.36 (0.01) & 1.54 (0.03) & 24.8 (0.6)\\ 
		DAGgr-pruned (0.5) &  - &\textbf{1.3593} (0.0247) & 0.08\% (0.01\%) & 2.42\% (0.34\%) & \textbf{63.06\%} (0.82\%) & \textbf{0.42} (0.01) & 1.47 (0.04) & \textbf{15.6} (0.2)\\ 
		DAGgr-pruned (0.8) &  - &1.6688 (0.0315) & 0.02\% (0.01\%) & 0.78\% (0.16\%) & 74.25\% (0.91\%) & 0.29 (0.01) & 1.27 (0.04) & 18.0 (0.2)\\ 
		DAGgr-pruned (1-1/p) &  - &1.9814 (0.0346) & \textbf{0.01\%} (0.00\%) & \textbf{0.25\%} (0.11\%) & 83.40\% (0.89\%) & 0.21 (0.01) & \textbf{0.94} (0.05) & 20.0 (0.2)\\ 
		DAGBag &  - &- & 0.17\% (0.01\%) & 8.00\% (0.70\%) & 79.88\% (0.46\%) & 0.28 (0.01) &- & 19.8 (0.1)\\
		\hline\hline
		GES &  5.17\% &2.2455 (0.0509) & 2.45\% (0.05\%) & 33.55\% (0.75\%) & \textbf{54.19\%} (1.10\%) & 0.32 (0.01) & 3.43 (0.04) & 24.5 (0.4)\\ 
		PC-algorithm &  0.48\% &3.2992 (0.0392) & 1.58\% (0.04\%) & 45.55\% (1.06\%) & 85.60\% (0.37\%) & 0.14 (0.00) & 1.15 (0.03) & 28.0 (0.2)\\ 
		Hill climbing &  0.44\% &3.5296 (0.0492) & 3.61\% (0.07\%) & 54.70\% (1.06\%) & 91.40\% (1.29\%) & 0.04 (0.01) & 1.63 (0.07) & 35.7 (0.5)\\ 
		Max-min hill climbing &  0.62\% &3.2176 (0.0178) & 2.10\% (0.03\%) & 67.19\% (1.03\%) & 98.67\% (0.34\%) & -0.01 (0.00) & \textbf{0.75} (0.03) & 30.7 (0.2)\\ 
		Constrained-MLE &  20.16\% &\textbf{1.3679} (0.0245) & \textbf{0.03\%} (0.01\%) & \textbf{1.15\%} (0.21\%) & \textbf{59.79\%} (0.74\%) & \textbf{0.47} (0.01) & 1.95 (0.03) & \textbf{14.5} (0.2)\\ 
		clrdag &  6.57\% &2.2915 (0.0194) & \textbf{0.00\%} (0.00\%) & \textbf{0.00\%} (0.00\%) & 89.25\% (0.50\%) & 0.17 (0.01) & \textbf{0.93} (0.04) & 21.4 (0.1)\\ 
		NOTEARS-linear &  24.29\% &\textbf{1.5522} (0.0196) & \textbf{0.01\%} (0.00\%) & \textbf{0.54\%} (0.15\%) & 72.40\% (0.51\%) & \textbf{0.38} (0.01) & 1.19 (0.02) & \textbf{17.4} (0.1)\\ 
		DAGMA-linear &  40.46\% &\textbf{1.1987} (0.0177) & 0.17\% (0.01\%) & 5.50\% (0.41\%) & \textbf{56.98\%} (0.59\%) & \textbf{0.49} (0.01) & 2.08 (0.03) & \textbf{14.7} (0.2)\\ 
		NOTIME &  1.81\% &2.7654 (0.0159) & 0.27\% (0.02\%) & 17.00\% (1.09\%) & 99.23\% (0.29\%) & 0.01 (0.00) & \textbf{0.22} (0.02) & 25.3 (0.1)\\
		\hline
\end{tabular} }\label{tab:hub_weak_p25}
\end{table}

\begin{table}[H]
\centering
\caption{Performance comparison under the hub adjacency matrix structure in the strong-signal setting with $p=100$ and $n=250$}
\resizebox{1.05\textwidth}{!}{	\begin{tabular}{ccccccccc}
		\hline  
		&  Average weights &Average squared error & FPR & FDR & FNR & MCC & KL & SHD\\
		\hline
		DAGgr-raw &  - &\textbf{1.2289} (0.0652) & 0.30\% (0.01\%) & 18.98\% (0.64\%) & \textbf{0.04\%} (0.02\%) & 0.80 (0.00) & 105.55 (0.22) & 29.1 (1.0)\\ 
		DAGgr-pruned (0.5) &  - &\textbf{1.3667} (0.1220) & 0.02\% (0.00\%) & 1.16\% (0.36\%) & \textbf{0.81\%} (0.18\%) & 0.85 (0.00) & 105.45 (0.23) & 2.4 (0.5)\\ 
		DAGgr-pruned (0.8) &  - &1.8086 (0.1470) & \textbf{0.00\%} (0.00\%) & 0.13\% (0.13\%) & 2.11\% (0.23\%) & 0.84 (0.00) & 105.24 (0.24) & \textbf{2.3} (0.3)\\ 
		DAGgr-pruned (1-1/p) &  - &1.8698 (0.1463) & \textbf{0.00\%} (0.00\%) & \textbf{0.00\%} (0.00\%) & 2.30\% (0.23\%) & \textbf{0.98} (0.00) & \textbf{105.20} (0.23) & \textbf{2.3} (0.2)\\ 
		DAGBag &  - &- & 0.09\% (0.00\%) & 7.13\% (0.20\%) & 23.49\% (1.71\%) & 0.69 (0.02) & -& 31.5 (1.7)\\
		\hline\hline
		GES &  0\% &44.9597 (2.7757) & 1.30\% (0.01\%) & 48.67\% (0.36\%) & 34.65\% (2.34\%) & 0.39 (0.01) & 103.11 (1.12) & 156.8 (2.5)\\ 
		PC-algorithm &  0\% &146.3620 (0.5386) & 1.24\% (0.01\%) & 84.15\% (0.40\%) & 98.41\% (0.10\%) & 0.00 (0.00) & \textbf{36.57} (0.54) & 210.7 (0.6)\\ 
		Hill climbing &  0\% &63.6739 (3.0546) & 1.31\% (0.01\%) & 52.47\% (0.45\%) & 51.45\% (2.63\%) & 0.30 (0.02) & 85.84 (2.64) & 170.5 (2.9)\\ 
		Max-min hill climbing &  0\% &111.3782 (0.2152) & 1.02\% (0.01\%) & 80.53\% (0.50\%) & 96.19\% (0.26\%) & 0.02 (0.00) & \textbf{20.89} (1.24) & 188.1 (0.6)\\ 
		Constrained-MLE &  17.18\% &\textbf{2.0259} (0.0218) & 0.35\% (0.00\%) & 22.73\% (0.30\%) & \textbf{0.02\%} (0.01\%) & 0.80 (0.00) & 112.12 (0.13) & 34.8 (0.5)\\ 
		clrdag &  36.88\% &\textbf{2.2002} (0.1899) & \textbf{0.00\%} (0.00\%) & \textbf{0.00\%} (0.00\%) & \textbf{2.28\%} (0.23\%) & \textbf{0.98} (0.00) & 109.30 (0.24) & \textbf{2.3} (0.2)\\ 
		NOTEARS-linear &  0\% &6.0077 (0.1025) & \textbf{0.00\%} (0.00\%) & \textbf{0.00\%} (0.00\%) & 4.46\% (0.15\%) & \textbf{0.95} (0.00) & 81.95 (0.22) & \textbf{4.4} (0.1)\\ 
		DAGMA-linear &  45.94\% &\textbf{1.0938} (0.0183) & \textbf{0.00\%} (0.00\%) & \textbf{0.02\%} (0.01\%) & \textbf{0.21\%} (0.03\%) & \textbf{0.99} (0.00) & 101.20 (0.14) & \textbf{0.2} (0.0)\\ 
		NOTIME &  0\% &109.8656 (0.0009) & \textbf{0.00\%} (0.00\%) & 1.50\% (0.86\%) & 100.00\% (0.00\%) & 0.00 (0.00) & \textbf{0.00} (0.00) & 99.0 (0.0)\\\hline 
\end{tabular} }\label{tab:hub_strong_p100}
\end{table}

\begin{table}[H]
\centering
\caption{Performance comparison under the hub adjacency matrix structure in the weak-signal setting with $p=100$ and $n=250$}
\resizebox{1.05\textwidth}{!}{		\begin{tabular}{ccccccccc} 
		\hline  &  Average weights &Average squared error & FPR & FDR & FNR& MCC & KL & SHD\\\hline
		DAGgr-raw &  - &\textbf{3.1550} (0.0375) & 0.37\% (0.00\%) & 27.78\% (0.34\%) & \textbf{25.74\%} (0.32\%) & \textbf{0.64} (0.00) & 8.16 (0.11) & 61.6 (0.6)\\ 
		DAGgr-pruned (0.5) &  - &\textbf{3.5747} (0.0613) & 0.23\% (0.01\%) & 17.54\% (0.98\%) & \textbf{39.06\%} (1.30\%) & \textbf{0.53} (0.01) & 7.98 (0.12) & \textbf{61.3} (0.4)\\ 
		DAGgr-pruned (0.8) &  - &4.2422 (0.0578) & 0.08\% (0.01\%) & 6.00\% (0.84\%) & 54.66\% (1.13\%) & 0.40 (0.01) & 7.25 (0.13) & 61.7 (0.4)\\ 
		DAGgr-pruned (1-1/p) &  - &4.6691 (0.0860) & \textbf{0.03\%} (0.01\%) & \textbf{2.62\%} (0.61\%) & 60.91\% (0.96\%) & 0.34 (0.01) & \textbf{6.73} (0.13) & 63.6 (0.6)\\ 
		DAGBag &  - &- & 0.10\% (0.00\%) & 23.12\% (0.61\%) & 84.23\% (0.34\%) & 0.24 (0.01) & - & 92.1 (0.4)\\
		\hline\hline
		GES &  0\% &10.4352 (0.1559) & 1.33\% (0.01\%) & 63.11\% (0.40\%) & \textbf{63.70\%} (1.13\%) & 0.24 (0.01) & 12.64 (0.14) & 185.5 (1.4)\\ 
		PC-algorithm &  0\% &16.3944 (0.0604) & 1.25\% (0.01\%) & 84.81\% (0.39\%) & 98.38\% (0.09\%) & 0.00 (0.00) & 6.28 (0.06) & 211.2 (0.6)\\ 
		Hill climbing &  0\% &11.5562 (0.1491) & 1.33\% (0.01\%) & 66.11\% (0.52\%) & 74.08\% (1.18\%) & 0.17 (0.01) & 10.86 (0.24) & 193.2 (1.6)\\ 
		Max-min hill climbing &  0\% &12.9314 (0.0321) & 1.01\% (0.01\%) & 77.04\% (0.48\%) & 96.19\% (0.26\%) & 0.02 (0.00) & 4.76 (0.13) & 188.0 (0.6)\\ 
		Constrained-MLE &  60.11\% &\textbf{3.4934} (0.0348) & 0.37\% (0.00\%) & 28.00\% (0.34\%) & \textbf{25.79\%} (0.32\%) & \textbf{0.65} (0.00) & 10.83 (0.06) & \textbf{61.6} (0.6)\\ 
		clrdag &  0.65\% &8.8249 (0.0460) & \textbf{0.00\%} (0.00\%) & \textbf{0.00\%} (0.00\%) & 87.70\% (0.35\%) & 0.24 (0.01) & 3.82 (0.10) & 86.8 (0.3)\\ 
		NOTEARS-linear &  0.62\% &\textbf{7.0341} (0.0390) & \textbf{0.00\%} (0.00\%) & \textbf{0.00\%} (0.00\%) & 79.85\% (0.24\%) & \textbf{0.36} (0.00) & 3.20 (0.04) & \textbf{79.0} (0.2)\\ 
		DAGMA-linear &  38.52\% &\textbf{4.4217} (0.0369) & \textbf{0.00\%} (0.00\%) & \textbf{0.05\%} (0.02\%) & \textbf{62.32\%} (0.28\%) & \textbf{0.54} (0.00) & 6.11 (0.05) & \textbf{61.7} (0.3)\\ 
		NOTIME &  0.1\% &10.1950 (0.0014) & \textbf{0.00\%} (0.00\%) & 1.50\% (0.70\%) & 100.00\% (0.00\%) & 0.00 (0.00) & 0.00 (0.00) & 99.0 (0.0)\\ \hline
\end{tabular} }\label{tab:hub_weak_p100}
\end{table}

\subsection{Additional Results for Chain Adjacency Matrix Structures}

\begin{figure}[H]
\centering
\subfloat[Strong-signal (\(p=25\))]{ \label{fig:chain_strong_p25}
	\includegraphics[width=0.23\textwidth]{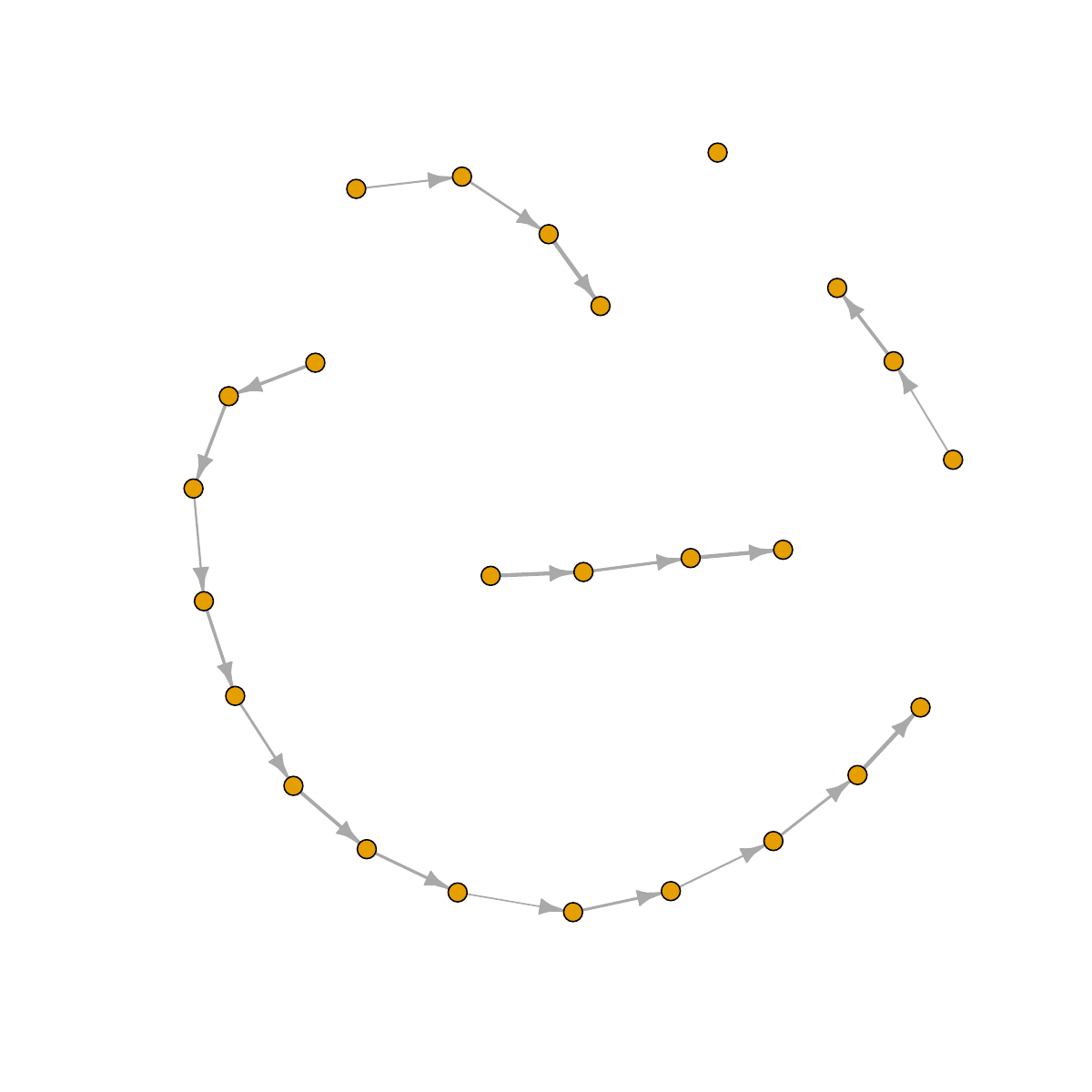}}\ \ 
\subfloat[Weak-signal (\(p=25\))]{ \label{fig:chain_weak_p25}
	\includegraphics[width=0.23\textwidth]{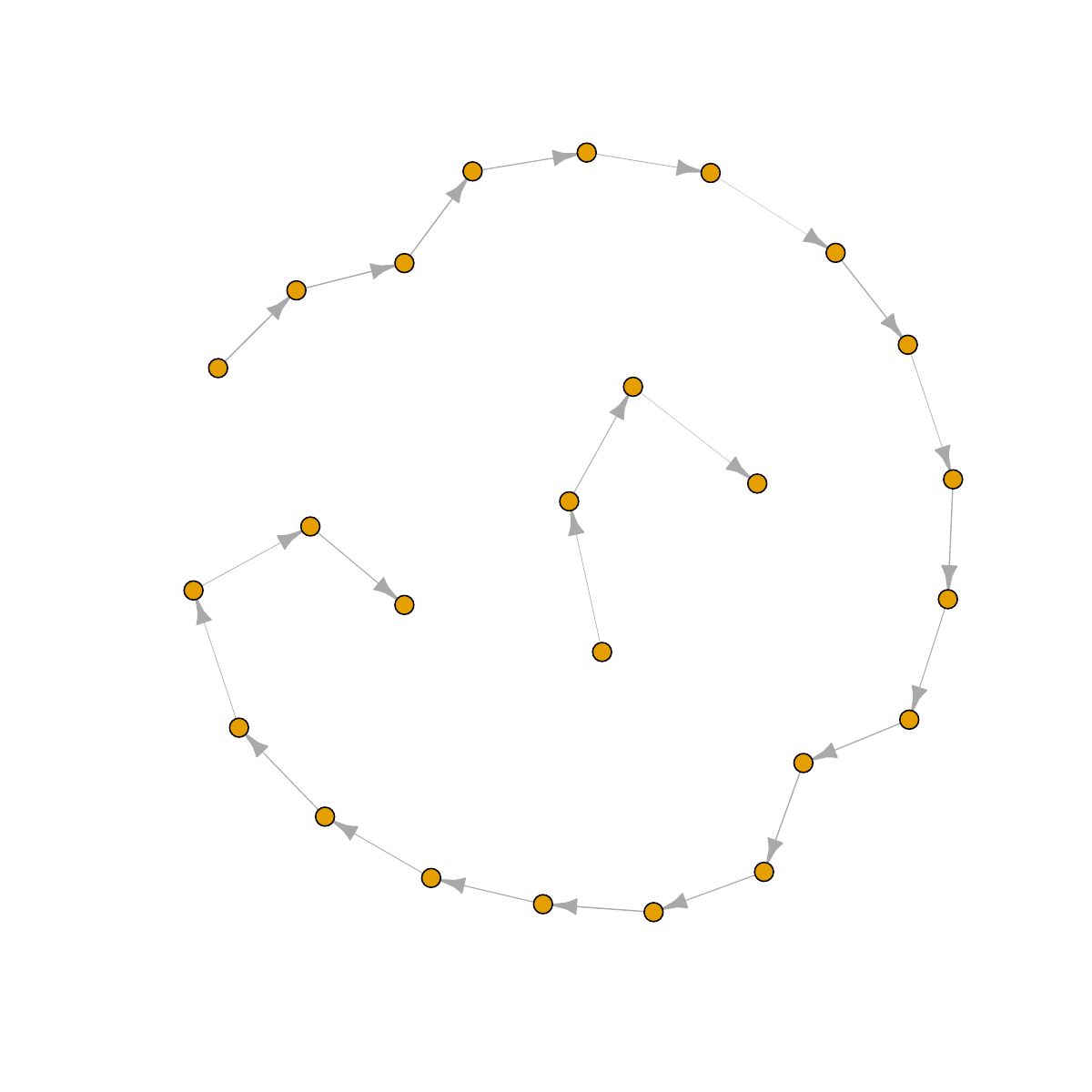}} \ \ 
\subfloat[Strong-signal (\(p=100\))]{ \label{fig:chain_strong_p100}
	\includegraphics[width=0.23\textwidth]{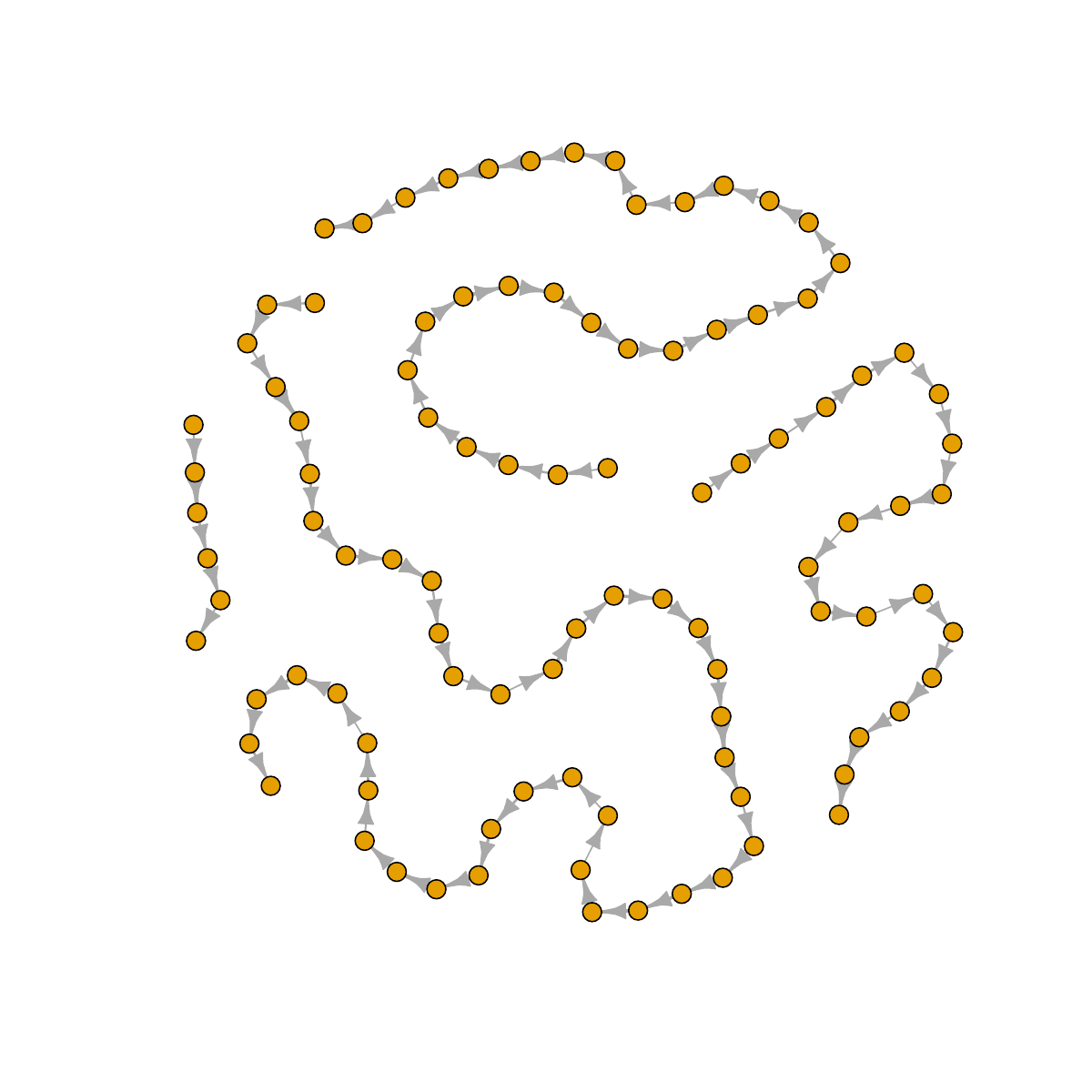}}\ \ 
\subfloat[Weak-signal (\(p=100\))]{ \label{fig:chain_weak_p100}
	\includegraphics[width=0.23\textwidth]{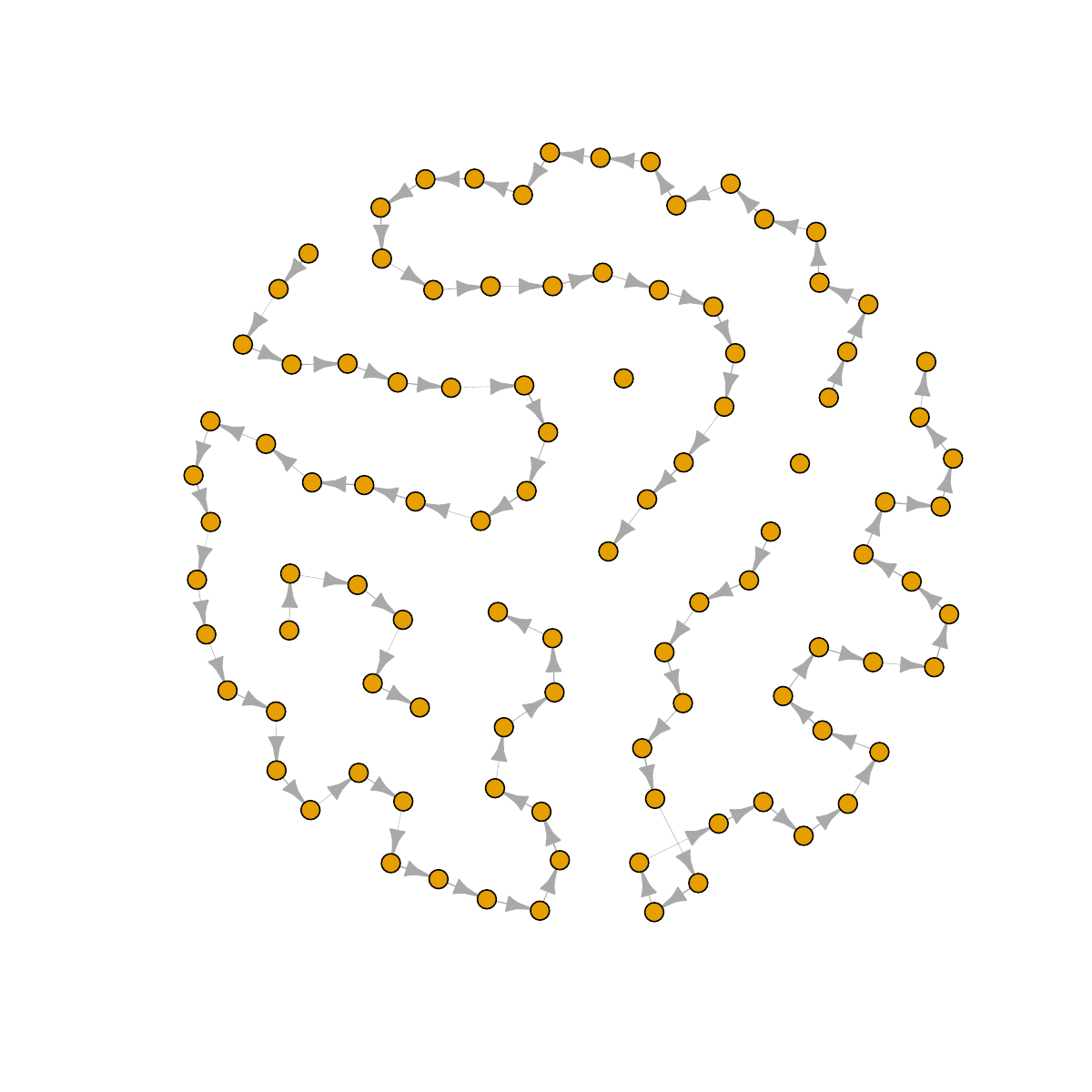}}
\caption{Chain adjacency matrix structures used in the Table~\ref{tab:chain_strong_p25}-Table~\ref{tab:chain_weak_p100}.}
\label{fig:chain_all}
\end{figure}

\begin{table}[H]
\centering
\caption{Performance comparison under the chain adjacency matrix structure in the strong-signal setting with $p=25$ and $n=100$}
\resizebox{1.05\textwidth}{!}{	\begin{tabular}{ccccccccc}
		\hline  &  Average weights &Average squared error & FPR & FDR & FNR& MCC & KL & SHD\\\hline  
		DAGgr-raw &  - &\textbf{0.2198} (0.0157) & 2.95\% (0.18\%) & 16.77\% (1.00\%) & \textbf{0.00\%} (0.00\%) & 0.43 (0.00) & 21.66 (0.07) & 16.8 (1.0)\\ 
		DAGgr-pruned (0.5) &  - &\textbf{0.2208} (0.0207) & 0.09\% (0.01\%) & 1.82\% (0.24\%) & \textbf{0.40\%} (0.10\%) & 0.81 (0.00) & 21.56 (0.07) & 0.6 (0.1)\\ 
		DAGgr-pruned (0.8) &  - &0.3231 (0.0345) & 0.04\% (0.01\%) & 1.14\% (0.17\%) & 1.47\% (0.24\%) & \textbf{0.93} (0.00) & 21.49 (0.08) & \textbf{0.5} (0.1)\\ 
		DAGgr-pruned (1-1/p) &  - &0.5299 (0.0512) & \textbf{0.02\%} (0.00\%) & \textbf{0.48\%} (0.10\%) & 3.45\% (0.40\%) & 0.91 (0.00) & \textbf{21.31} (0.08) & 0.8 (0.1)\\ 
		DAGBag &  - &- & 0.14\% (0.01\%) & 3.63\% (0.28\%) & 2.45\% (0.24\%) & 0.91 (0.00) & - & 0.9 (0.1)\\
		\hline\hline
		GES &  0\% &18.7715 (0.5082) & 3.57\% (0.06\%) & 54.50\% (0.91\%) & 54.08\% (1.53\%) & 0.31 (0.01) & \textbf{16.03} (0.16) & 20.7 (0.3)\\ 
		PC-algorithm &  0\% &8.8676 (0.4725) & \textbf{0.42\%} (0.05\%) & \textbf{10.67\% }(1.26\%) & 10.15\% (1.30\%) & \textbf{0.83} (0.01) & \textbf{12.01} (0.10) & \textbf{2.5} (0.3)\\ 
		Hill climbing &  8.09\% &\textbf{0.8056} (0.0535) & 1.55\% (0.03\%) & 24.32\% (0.54\%) & \textbf{0.67\%} (0.13\%) & 0.72 (0.00) & 22.77 (0.07) & 9.0 (0.2)\\ 
		Max-min hill climbing &  67.32\% &\textbf{0.2445} (0.0284) & \textbf{0.09\%} (0.01\%) & \textbf{2.24\%} (0.22\%) & \textbf{0.58\%} (0.11\%) & \textbf{0.93} (0.00) & 22.36 (0.06) & \textbf{0.6} (0.1)\\ 
		Constrained-MLE &  0.06\% &11.5174 (0.1998) & 12.28\% (0.09\%) & 67.82\% (0.47\%) & 31.65\% (0.39\%) & 0.26 (0.00) & 20.51 (0.09) & 71.3 (0.5)\\ 
		clrdag &  0.25\% &8.7559 (0.0952) & 2.30\% (0.03\%) & 41.64\% (0.48\%) & 29.38\% (0.24\%) & 0.54 (0.00) & 20.25 (0.09) & 13.6 (0.1)\\ 
		NOTEARS-linear &  22.01\% &\textbf{0.7701} (0.0688) & \textbf{0.15\%} (0.02\%) & \textbf{3.74\%} (0.40\%) & \textbf{2.50\%} (0.29\%) & \textbf{0.91} (0.00) & 19.48 (0.06) & \textbf{0.9} (0.1)\\ 
		DAGMA-linear &  2.26\% &1.3543 (0.0883) & 0.61\% (0.03\%) & 12.73\% (0.60\%) & 4.60\% (0.33\%) & 0.80 (0.00) & 21.42 (0.07) & 3.7 (0.2)\\ 
		NOTIME &  0\% &15.2177 (0.2388) & 2.05\% (0.03\%) & 41.00\% (0.62\%) & 42.50\% (0.64\%) & 0.46 (0.01) & \textbf{14.71} (0.14) & 12.6 (0.2)\\
		\hline 
\end{tabular} }\label{tab:chain_strong_p25}
\end{table}

\begin{table}[H]
\centering
\caption{Performance comparison under the chain adjacency matrix structure in the weak-signal setting with $p=25$ and $n=100$}
\resizebox{1.05\textwidth}{!}{		\begin{tabular}{ccccccccc} 
		\hline  &  Average weights &Average squared error & FPR & FDR & FNR& MCC & KL & SHD\\\hline 
		DAGgr-raw &  - &\textbf{1.3449} (0.0239) & 5.81\% (0.08\%) & 48.60\% (0.65\%) & \textbf{23.07\%} (0.50\%) & 0.41 (0.00) & 2.47 (0.03) & 34.2 (0.4)\\ 
		DAGgr-pruned (0.5) &  - &\textbf{1.5287} (0.0332) & 0.98\% (0.03\%) & 21.67\% (0.74\%) & \textbf{48.65\%} (1.00\%) & \textbf{0.47} (0.01) & 1.62 (0.03) & \textbf{13.6} (0.2)\\ 
		DAGgr-pruned (0.8) &  - &2.0718 (0.0290) & 0.22\% (0.02\%) & 6.35\% (0.53\%) & 75.04\% (0.95\%) & 0.26 (0.01) & 1.00 (0.03) & 17.8 (0.2)\\ 
		DAGgr-pruned (1-1/p) &  - &2.4135 (0.0217) & \textbf{0.05\%} (0.01\%) & \textbf{1.76\%} (0.31\%) & 88.13\% (0.70\%) & 0.13 (0.01) & \textbf{0.54} (0.03) & 20.4 (0.1)\\ 
		DAGBag &  - &- & 0.58\% (0.02\%) & 17.47\% (0.60\%) & 58.26\% (0.56\%) & 0.45 (0.01) & - & 13.9 (0.1)\\
		\hline\hline
		GES &  4.28\% &3.9028 (0.0596) & 3.85\% (0.05\%) & 56.99\% (0.72\%) & 63.26\% (0.83\%) & 0.25 (0.01) & 3.54 (0.03) & 26.9 (0.3)\\ 
		PC-algorithm &  26.56\% &2.2634 (0.0471) & 1.55\% (0.03\%) & 35.70\% (0.72\%) & \textbf{47.57\%} (0.83\%) & \textbf{0.48} (0.01) & 2.69 (0.02) & \textbf{14.8} (0.2)\\ 
		Hill climbing &  5.77\% &\textbf{1.9875} (0.0358) & 2.67\% (0.05\%) & 40.50\% (0.72\%) & \textbf{36.59\%} (0.58\%) & \textbf{0.47} (0.00) & 3.58 (0.03) & 20.1 (0.3)\\ 
		Max-min hill climbing &  27.84\% &\textbf{1.5493} (0.0325) & 1.22\% (0.03\%) & 27.12\% (0.66\%) & \textbf{39.00\%} (0.55\%) & \textbf{0.56} (0.01) & 3.02 (0.02) & \textbf{13.0} (0.2)\\ 
		Constrained-MLE &  6.36\% &2.7933 (0.0328) & 1.14\% (0.02\%) & 32.83\% (0.68\%) & 70.57\% (0.48\%) & 0.30 (0.01) & 2.39 (0.03) & 17.4 (0.1)\\ 
		clrdag &  0.23\% &2.8097 (0.0133) & \textbf{0.03\%} (0.01\%) & \textbf{6.67\%} (1.00\%) & 98.39\% (0.18\%) & 0.04 (0.00) & \textbf{0.22} (0.02) & 22.6 (0.0)\\ 
		NOTEARS-linear &  7.03\% &2.4174 (0.0351) & \textbf{0.56\%} (0.02\%) & \textbf{23.07\%} (0.85\%) & 77.59\% (0.55\%) & 0.28 (0.01) & \textbf{1.31} (0.02) & 17.9 (0.1)\\ 
		DAGMA-linear &  9.94\% &2.6212 (0.0399) & 1.01\% (0.02\%) & 27.71\% (0.64\%) & 68.17\% (0.60\%) & 0.32 (0.01) & \textbf{2.32} (0.03) & 16.7 (0.1)\\ 
		NOTIME &  11.98\% &\textbf{2.2061} (0.0404) & \textbf{0.83\%} (0.03\%) & \textbf{24.02\%} (0.77\%) & 62.50\% (0.63\%) & 0.39 (0.01) & 2.34 (0.02) & \textbf{15.4} (0.2)\\
		\hline
\end{tabular} }\label{tab:chain_weak_p25}
\end{table}

\begin{table}[H]
\centering
\caption{Performance comparison under the chain adjacency matrix structure in the strong-signal setting with $p=100$ and $n=250$}
\resizebox{1.05\textwidth}{!}{	\begin{tabular}{ccccccccc}
		\hline  &  Average weights &Average squared error & FPR & FDR & FNR& MCC & KL & SHD\\\hline
		DAGgr-raw &  - &0.1900 (0.0241) & 0.03\% (0.01\%) & 2.13\% (0.50\%) & \textbf{0.04\%} (0.01\%) & 0.77 (0.00) & 98.69 (0.05) & 3.4 (0.8)\\
		DAGgr-pruned (0.5) &  - &\textbf{0.1882} (0.0241) & \textbf{0.01\%} (0.00\%) & \textbf{1.15\%} (0.08\%) & \textbf{0.04\%} (0.01\%) & \textbf{0.98} (0.00) & 98.69 (0.05) & \textbf{1.1} (0.1)\\
		DAGgr-pruned (0.8) &  - &\textbf{0.1882} (0.0241) & \textbf{0.01\%} (0.00\%) & \textbf{1.15\%} (0.08\%) & \textbf{0.04\%} (0.01\%) & \textbf{0.98} (0.00) & 98.69 (0.05) & \textbf{1.1} (0.1)\\
		DAGgr-pruned (1-1/p) &  - &0.1984 (0.0266) & \textbf{0.01\%} (0.00\%) & \textbf{1.15\%} (0.08\%) & 0.05\% (0.02\%) & \textbf{0.98} (0.00) & \textbf{98.68} (0.05) & 1.2 (0.1)\\
		DAGBag &  - &- & 0.10\% (0.00\%) & 10.28\% (0.27\%) & 10.27\% (0.27\%) & 0.88 (0.00) & - & 10.3 (0.3)\\
		\hline\hline
		GES &  0\% &84.3782 (2.4629) & 1.01\% (0.01\%) & 57.24\% (0.72\%) & 47.69\% (1.29\%) & 0.38 (0.01) & 86.81 (0.26) & 99.0 (1.2)\\
		PC-algorithm &  0\% &61.8079 (2.3484) & \textbf{0.27\%} (0.02\%) & \textbf{26.03\%} (1.59\%) & 26.00\% (1.61\%) & 0.72 (0.02) & \textbf{71.65} (1.17) & \textbf{26.3} (1.6)\\
		Hill climbing &  0.7\% &\textbf{1.4332} (0.0791) & 0.56\% (0.01\%) & 30.87\% (0.31\%) & \textbf{0.27\%} (0.04\%) & \textbf{0.73} (0.00) & 99.50 (0.06) & 55.0 (0.6)\\
		Max-min hill climbing &  99.3\% &\textbf{0.1869} (0.0240) & \textbf{0.01\%} (0.00\%) & \textbf{1.15\%} (0.08\%) & \textbf{0.04\%} (0.01\%) & \textbf{0.98} (0.00) & 98.70 (0.05) & \textbf{1.1} (0.1)\\
		Constrained-MLE &  0\% &67.7678 (0.3708) & 7.64\% (0.02\%) & 85.66\% (0.18\%) & 38.67\% (0.17\%) & 0.18 (0.00) & 82.97 (0.17) & 748.8 (1.6)\\
		clrdag &  0\% &61.1686 (0.1008) & 0.95\% (0.00\%) & 58.70\% (0.16\%) & 42.72\% (0.06\%) & 0.44 (0.00) & 81.24 (0.09) & 93.4 (0.3)\\
		NOTEARS-linear &  0\% &\textbf{14.5058} (0.2740) & \textbf{0.18\%} (0.00\%) & \textbf{16.48\%} (0.31\%) & \textbf{12.92\%} (0.24\%) & \textbf{0.82} (0.00) & 76.42 (0.09) & \textbf{18.5} (0.4)\\
		DAGMA-linear &  0\% &25.0647 (0.2766) & 0.38\% (0.00\%) & 29.90\% (0.30\%) & 22.61\% (0.21\%) & 0.68 (0.00) & \textbf{74.84} (0.21) & 39.1 (0.4)\\
		NOTIME &  0\% &79.7030 (0.5253) & 0.65\% (0.00\%) & 51.83\% (0.33\%) & 48.13\% (0.27\%) & 0.45 (0.00) & \textbf{58.51} (0.23) & 69.0 (0.4)\\
		\hline
\end{tabular} }\label{tab:chain_strong_p100}
\end{table}

\begin{table}[H]
\centering
\caption{Performance comparison under the chain adjacency matrix structure in the weak-signal setting with $p=100$ and $n=250$}
\resizebox{1.05\textwidth}{!}{		\begin{tabular}{ccccccccc}
		\hline &  Average weights &Average squared error & FPR & FDR & FNR& MCC & KL & SHD\\\hline
		DAGgr-raw &  - &\textbf{4.9925} (0.0965) & 1.81\% (0.01\%) & 57.75\% (0.44\%) & \textbf{15.80\%} (0.31\%) & 0.46 (0.00) & 10.02 (0.08) & 159.3 (0.9)\\
		DAGgr-pruned (0.5) &  - &\textbf{5.7528} (0.1262) & 0.61\% (0.02\%) & 38.13\% (1.29\%) & \textbf{44.61\%} (0.98\%) & 0.42 (0.01) & 6.91 (0.10) & 81.1 (1.5)\\
		DAGgr-pruned (0.8) &  - &7.1119 (0.0861) & 0.16\% (0.01\%) & 10.99\% (0.75\%) & 69.13\% (0.90\%) & 0.25 (0.01) & 4.58 (0.13) & 75.4 (0.5)\\
		DAGgr-pruned (1-1/p) &  - &8.0791 (0.0752) & \textbf{0.07\%} (0.01\%) & \textbf{5.21\%} (0.63\%) & 80.54\% (0.78\%) & 0.16 (0.01) & \textbf{3.13} (0.12) & 80.5 (0.4)\\
		DAGBag &  - &- & 0.17\% (0.00\%) & 22.38\% (0.41\%) & 52.05\% (0.33\%) & \textbf{0.54} (0.00) &- & \textbf{55.5} (0.3)\\
		\hline\hline
		GES &  0\% &13.7918 (0.1251) & 1.66\% (0.01\%) & 70.20\% (0.38\%) & 55.97\% (0.50\%) & 0.27 (0.00) & 13.64 (0.04) & 173.4 (0.9)\\
		PC-algorithm &  16.43\% &8.9813 (0.2416) & 0.88\% (0.01\%) & 53.02\% (0.70\%) & \textbf{42.63\%} (1.21\%) & 0.43 (0.01) & 10.47 (0.03) & 98.1 (1.2)\\
		Hill climbing &  0\% &\textbf{6.5614} (0.0600) & 1.36\% (0.01\%) & 55.41\% (0.35\%) & \textbf{27.62\%} (0.26\%) & \textbf{0.45} (0.00) & 13.97 (0.04) & 144.3 (0.9)\\
		Max-min hill climbing &  29.99\% &\textbf{5.4346} (0.0631) & 0.69\% (0.01\%) & 42.88\% (0.33\%) & \textbf{30.73\%} (0.29\%) & \textbf{0.53} (0.00) & 11.91 (0.03) & 80.1 (0.5)\\
		Constrained-MLE &  34.5\% &10.6230 (0.0626) & 1.05\% (0.01\%) & 61.75\% (0.35\%) & 54.48\% (0.26\%) & 0.34 (0.00) & 11.28 (0.03) & 116.0 (0.6)\\
		clrdag &  0\% &10.3491 (0.0062) & \textbf{0.00\%} (0.00\%) & \textbf{0.75\%} (0.43\%) & 99.79\% (0.03\%) & 0.01 (0.00) & \textbf{0.08} (0.01) & 93.8 (0.0)\\
		NOTEARS-linear &  0\% &7.4157 (0.0563) & \textbf{0.07\%} (0.00\%) & \textbf{19.05\%} (0.51\%) & 76.82\% (0.25\%) & 0.36 (0.00) & \textbf{3.96} (0.03) & \textbf{72.2} (0.2)\\
		DAGMA-linear &  11.42\% &\textbf{7.0423} (0.0743) & \textbf{0.13\%} (0.00\%) & \textbf{25.43\%} (0.50\%) & 66.46\% (0.27\%) & \textbf{0.45} (0.00) & \textbf{7.15} (0.04) & \textbf{62.5} (0.3)\\
		NOTIME &  7.66\% &7.2459 (0.0725) & 0.15\% (0.00\%) & 26.88\% (0.43\%) & 65.04\% (0.29\%) & \textbf{0.45} (0.00) & 7.70 (0.04) & \textbf{61.3} (0.3)\\
		\hline
\end{tabular} }\label{tab:chain_weak_p100}
\end{table}

\end{document}